\newtheorem{theorem}{Theorem}
\newcommand{\RomanNumeralCaps}[1]
\providecommand{\definitionname}{Definition}
\shorttitle{Three-dimensional flow visualization via streamsurfaces of the closest first integral}
\title{Approximate streamsurfaces \\ for flow visualization}
\author{Stergios Katsanoulis\aff{1}, 
Florian Kogelbauer\aff{2}, 
Roshan Kaundinya\aff{1}, 
Jesse Ault\aff{4}
\and George Haller\aff{1}\corresp{\email{georgehaller@ethz.ch}}}
\affiliation{%
	\aff{1}Institute for Mechanical Systems, ETH Zurich, 8092 Zurich, Switzerland
	\aff{2} Department of Mechanical and Process Engineering, ETH Zurich, 8092 Zurich, Switzerland
	\aff{3} School of Engineering, Brown University, Providence, RI 02912, USA
	}
\begin{document}
\maketitle

\begin{abstract}
Instantaneous features of three-dimensional velocity fields are most directly visualized via streamsurfaces.
It is generally unclear, however, which streamsurfaces one should pick for this purpose, given that infinitely many such surfaces pass through each point of the flow domain.
Exceptions to this rule are vector fields with a nondegenerate first integral whose level surfaces globally define a continuous,  one-parameter family of  streamsurfaces. While generic vector fields have no  first integrals, their  vortical regions may admit local first integrals over a discrete set of streamtubes, as Hamiltonian systems are known to do over Cantor sets of invariant tori.
Here we introduce a method to construct such  first integrals approximately from velocity data, and show that their level sets indeed frame vortical features of the velocity field in examples in which those features are  known from Lagrangian analysis.
Moreover, we test our method in numerical data sets, including a flow inside a V-junction and a turbulent channel flow. For the latter, we propound an algorithm to pin down the most salient barriers to momentum transport up to a given scale providing a way out of the occlusion conundrum that typically accompanies other vortex visualization methods.


\end{abstract}

\begin{keywords}
flow visualization, streamsurface, approximate first integral, integrability, vortex
\end{keywords}

\selectlanguage{english}
\section{Introduction}
Streamsurfaces of a three-dimensional flow are two-dimensional surfaces composed of streamlines. Even a small but judiciously chosen set of such surfaces can give an effective characterization of the global topology of the velocity field. In steady flows, streamsurfaces are also invariant manifolds for the particle motion and hence frame the Lagrangian  particle dynamics. For these reasons, streamsurfaces should be, in principle, the simplest tool for illustrating instantaneous features of a velocity field.

Analogously, to identify and visualize vortical features of a velocity field, \cite{yang2010lagrangian} generalized the notion of vortex tubes and sheets \citep{batchelor2000introduction} by defining the vortex-surface field (VSF), i.e., a smooth scalar field whose isosurfaces act as 2D invariant manifolds of the vorticity field. Initially developed for symmetric, inviscid flows \citep{yang2010lagrangian}, this work has been extended to capture approximations to the Lagrangian (material) evolution of VSFs in analytic viscous flows \citep{yang2011evolution}, shear flows \citep{xiong2017boundary}, compressible flows \citep{peng2018effects}, transitional wall flows \citep{zhao2016evolution,zhao2016vortex} and in homogeneous isotropic turbulence \citep{xiong2019identifying}. 
A different approach, based on the spherical Clebsch maps, was used by \cite{chern2017inside} to visualize vortex lines and surfaces in computer graphics.
However, neither VSFs nor streamsurfaces are objective (frame-indifferent; \cite{haller2005objective}), which renders their experimental detection ambiguous.

In contrast, recent results on the transport of dynamically active vector fields (such as the momentum and vorticity fields) have yielded objectively defined barrier vector fields whose distinguished invariant surfaces turn out to be frame-indifferent material barriers to active transport \citep{haller2020objective}. Specifically, the barrier vector fields are the velocity and vorticity Laplacian in the instantaneous (Eulerian) limit and the time-averaged pullbacks of these Laplacian fields in the Lagrangian case. The invariant manifolds of these barrier vector fields have been shown to highlight vortical features of the velocity field with increased accuracy in several 2D and 3D flows \citep{haller2020objective,aksamit2022objective}.

Irrespective of the underlying vector field, no general methodology is available for its efficient visualization via a well-placed set of invariant surfaces. This is because streamlines launched from any smooth curve form a streamsurface by definition. As a consequence, infinitely many streamsurfaces cross through any point of the flow domain. It is unclear which (if any) of these surfaces should be selected even locally as a representative of the vector field topology. As a consequence, flow visualization packages tend to rely on user input for seed points of streamlines and streamsurfaces.

Most related ongoing research in the scientific visualization community focuses either on the more accurate computation of streamsurfaces from select seed points (\cite{hultquist92}) or streamline selection for streamsurfaces based on visual optimization (\cite{born10}, \cite{schulze14}). A relatively recent realization is that the streamsurfaces framing the instantaneous flow behavior most efficiently are the key invariant manifolds of the instantaneous velocity field.  Local stable and unstable manifolds near stagnation points and closed streamlines indeed well illustrate the instantaneous local velocity field geometry (see, e.g., \cite{peikert2009topologically}) but generally stretch and fold globally. These globally filamented streamsurfaces then lose their ability to demarcate different flow regions efficiently, with  unavoidable inaccuracies also arising in their computation (\cite{sadlo07}). Closest in spirit to our work is the observation of \cite{van1993implicit}, who seeks streamsurfaces as level sets of a scalar function. After defining a scalar distribution within an inflow boundary, the level curves of this scalar field are propagated along streamlines into the flow. As a consequence, the resulting surfaces will generally stretch, fold and self-accumulate, resulting in filamenting streamsurfaces that also depend on the choice of the initial scalar distribution. Reviews of all these  approaches in scientific visualization community can be found, e.g., in \cite{peikert2009topologically} and  \cite{martinez13}.

A representative set of streamsurfaces for flow illustration should arguably include at least one surface for each typically observed streamline topology,  as well as  the (generally unobserved) surfaces separating these different topological classes. Finding such a set of streamlines is straightforward for 2D incompressible flows by the existence of a first integral (conserved quantity) for the equation of instantaneous streamlines. This first integral is the streamfunction, whose set of level curves contains all typically observed streamline families as well as separatrices among them. Therefore, one can systematically scan through the one-parameter family of level curves and select those that stand out to be included in the visualization. 

For the lack of a streamfunction in 3D flows, the above program can only be carried out for streamsurfaces of integrable 3D flows, such as steady Euler flows that do not satisfy the Beltrami property. For such flows, the Bernoulli function provides a non-degenerate first integral \citep{arnold1999topological} whose one-parameter family of level surfaces can be scanned and filtered to obtain the required representative set of streamsurfaces. A similar result is available for incompressible velocity fields with a volume-preserving symmetry group \citep{haller98}. For general 3D incompressible flows, however, no first integrals will exist. This is because even steady 3D flows can be chaotic (see, e.g., \cite{dombre1986chaotic}) and hence cannot have smooth nontrivial conserved quantities. This is equally valid for the autonomous 3D differential equation defining the instantaneous streamlines of a generic 3D unsteady flow, excluding the existence of a regular foliation  of the flow domain by streamsurfaces that are level sets of a smooth scalar function.

Often, however, the most important parts of a flow turn out to be vortical regions filled with tubular or toroidal streamsurfaces. One cannot expect these surfaces to necessarily form a continuous family, especially the toroidal ones. Indeed, families of 2D tori composed of streamlines typically form Cantor sets (as opposed to a continuous family) in the 3D set of differential equations generating the streamlines \citep{cheng89}. While this is also the case for classic Hamiltonian systems admitting families of invariant tori \citep{arnold89}, those systems nevertheless turn out to be integrable restricted to this Cantor set of tori in an appropriate sense (see \cite{poschel82} and \cite{chierchia82}). Specifically, smooth functions can be constructed that act as first integrals over the Cantor family of tori but not in the small gaps among those tori.

Motivated by this idea of integrability over Cantor sets in Hamiltonian systems, we seek here smooth scalar functions that serve as approximate first integrals over a set of streamsurfaces forming vortical (elliptic) regions of a given vector field. As there is no widely accepted definition of a vortex, we use the term "vortical" loosely to describe families of toroidal or cylindrical surfaces to which either the velocity, vorticity or barrier field is tangent. The approximate first integrals arising from this procedure will be steady for steady vector fields and time-varying for unsteady vector fields. We construct these (approximate) first integrals by seeking scalar functions whose gradient vector field is as close to being normal to the given vector field as possible. In order to avoid the trivial solution to this problem, we use a constrained minimization approach that does not allow for globally constant first integrals.

Our method resembles that of \cite{yang2010lagrangian} for the construction of VSFs in inviscid, analytic and highly symmetric flows. Ours, however, differs crucially in that we work with a grid in the physical space over which we expand the unknown approximate first integral in a Fourier series.
We then only use the known values of the vector field at the gridpoints.
We find that this approach results in a homogeneous linear system of equations whose unique, unit-norm least-squares solution yields the unknown Fourier coefficients of the approximate first integral. Thus, in contrast to \cite{yang2010lagrangian}, this procedure is free from any symmetry assumptions on the first integral and does not require rewriting the homogeneous system as an inhomogeneous one under further assumptions. As we show in one of the appendices, these features of our approach significantly enhance the quality of the final solution.

Outside elliptic regions (i.e., in hyperbolic streamline domains), the streamlines are generically chaotic and hence will admit only trivial approximate first integrals. Accordingly, we expect the approximate first integrals to be nearly constant outside vortical regions, while admitting nontrivial shapes inside such regions. In those elliptic regions, level sets of the approximate first integrals will be close to streamsurfaces that form vortical features. 

The result of this approach is an automated numerical visualization method that does not require the user to guess seed points in the flow for streamsurfaces in vortical regions. The simplicity and generality of the proposed method allows us to employ it to complex flows defined either analytically or through numerical data. Our examples include spatially periodic integrable and non-integrable flows, a non-periodic vortex ring flow, a V-junction flow and a fully developed turbulent channel flow. The latter flow exemplifies a case wherein exact streamsurfaces tend to obscure the visualization of the most prominent features of the barrier field. Indeed, for such a flow vortical regions have been delineated via diagnostic tools such as the active version of the finite-time Lyapunov exponents (aFTLE), as described in \cite{haller2020objective}. Yet, tracking barrier streamlines originating in the neighborhood of aFTLE ridges quickly results in the streamsurface falling apart, despite some initial vortical motion. In contrast, the structures based on the approximate first integral are able to follow closely the valleys around the aFTLE ridges allowing for a better visualization.

To obtain these results, we divide the computational domain into smaller subdomains and seek approximate first integrals in each one of them separately. In this way, we can capture the most salient structures up to a given scale without the problem of obstruction by smaller structures. These in turn can be captured by further refining the domain subdivision based on their signatures in the aFTLE field. Further, upon assuming mild convexity for the vortical structures to be extracted, we obtain families of barrier surfaces whose outermost members act as vortex boundaries.
Thus, we propound a way out of the isocontour value dilemma that besets the typical vortex identification criteria used in the literature \citep{kim1987turbulence,hussain1995,zhou1999mechanisms}, causing them to produce dissimilar structures for different isocontour values.




\section{Set-up of the minimization scheme and reconstruction algorithm}
\label{algorithm_setup}
Let $\mathbf{v}(\mathbf{x},t)$ be a smooth vector field defined on some open subset $U\subseteq\mathbb{R}^{3}$. The associated dynamical system for the instantaneous streamlines of this vector field at time $t$ is given by
\begin{equation}
{\mathbf{x}}^\prime = \mathbf{v}(\mathbf{x},t),\quad \mathbf{x} \in U,
\label{dyn}
\end{equation}
where the prime denotes differentiation with respect to a curve-parameter $s\in{\mathbb R}$ along the streamline. A continuously differentiable, scalar function $H(\mathbf{x},t)$ is called an (\textit{instantaneous}) \textit{first integral} for $\mathbf{v}(\mathbf{x},t)$ at time $t$ if it is constant along each solution of \eqref{dyn}, i.e.,  $\frac{\partial}{\partial s} H(\mathbf{x}(s),t) = 0$. This condition implies that 
\begin{equation}
\nabla H(\mathbf{x},t)\cdot \mathbf{v}(\mathbf{x},t)=0,
\label{inner}
\end{equation}
for all $\mathbf{x}\in U$, where $\nabla$ denotes the gradient with respect to $\mathbf{x}$. 

For an arbitrary vector field $\mathbf{v}$, no first integral will exist, in general, \textit{pointwise}, i.e., for all $\mathbf{x}\in U$. We may, however, relax the constraint \eqref{inner} by seeking a function $ H(\mathbf{x},t)$ that minimizes the functional 
\begin{equation}
J[H]=\frac{1}{2}\int_{U}|\nabla H \cdot \mathbf{v}|^{2}dV,
\label{eq:closest_functional} 
\end{equation}
which measures the average deviation of $H(\mathbf{x},t)$ from being an exact, pointwise first integral in the domain $U$ at time $t$. Any minimizer of \eqref{eq:closest_functional} is called an \emph{approximate first integral}. 

First, let us assume that the domain $U$ is triply-periodic. This allows us to expand $H$ in a Fourier series and define its modal truncation of order $N$ as
\begin{equation}
H^{\leq N}(\mathbf{x})=\sum_{0<|\mathbf{k}|\leq N}\hat{H}_{\mathbf{k}}e^{i\mathbf{k}\cdot\mathbf{x}}.
\label{trunc}
\end{equation}
For the application to the examples in later sections, let us remark that the truncated expansion \eqref{trunc} can also be used locally on non-periodic domains, as long as we stay away from the domain boundaries. The integrand of the functional (\ref{eq:closest_functional}) in Fourier basis takes the form
\begin{equation}
\nabla H\cdot\mathbf{v}=\sum_{0<|\mathbf{k}|\leq N}\hat{H}_{\mathbf{k}}e^{i\mathbf{k}\cdot\mathbf{x}}\mathbf{k}\cdot\mathbf{v}.
\label{eq:integrand} 
\end{equation}
Because of the linearity of the gradient operator, the expression in eq.~(\ref{eq:integrand}) is linear in the unknown Fourier coefficients. 

We assume that the vector field $\mathbf{v}$ is known on a discrete, three-dimensional grid of points which we enumerate from $1$ through $m$, where $m$ is the total number of points. Also, to target data-driven applications specifically, we work with the discretized version of eq.~(\ref{eq:closest_functional}) using the vector of pointwise inner products defined as
\begin{equation}
\left[\begin{array}{cccc}
\nabla H_{1}\cdot\mathbf{v}_{1} & \nabla H_{2}\cdot\mathbf{v}_{2} & \cdots & \nabla H_{m}\cdot\mathbf{v}_{m}\end{array}\right]^{T}=\mathbf{C} \boldsymbol{\mathbf{h}},
\label{eq:closest_homo_system}
\end{equation}
with $\mathbf{C}\in{\mathbb C}^{m\times n}$. Here, $n$ is the number of modes used, $C_{ij}=e^{i\mathbf{k}_{j}\cdot\mathbf{x}_{i}}\mathbf{k}_{j}\cdot\mathbf{v}_{i}$ is the $(i,j)$ entry of $\mathbf{C}$
and $\boldsymbol{\mathbf{h}}=\left\{ \hat{H}_{\mathbf{k}}\left|\mathbf{k}\in\right.\mathbb{Z}^{3},\,\, 0<|\mathbf{k}|\leq N\right\} $ is the vector comprising the Fourier coefficients to be determined. Approximating a first integral through the functional \eqref{eq:closest_functional} then amounts to minimizing the squared vector norm  $|\mathbf{C}\boldsymbol{\mathbf{h}}|^2 $. To exclude the trivial solution $H\equiv const.$ from our analysis, we add the constraint
$|\boldsymbol{\mathbf{h}}|^2=1$. 

Solving this optimization problem
is equivalent to finding the eigenvector corresponding to the minimal eigenvalue of the symmetric matrix $\mathbf{A} = \mathbf{C}^{*}\mathbf{C}$, where $\mathbf{C}^*$ denotes the conjugate transpose of $\mathbf{C}$ (see Appendix \ref{app:least_squares_homo}). Since the eigenvectors of $\mathbf{A}$ are the right-singular vectors of $\mathbf{C}$, the solution to our algorithm can also be calculated from the singular-value decomposition (SVD) of $\mathbf{C}$.

We refrain from expanding the known vector field $\mathbf{v}$ into a Fourier series and we write down eq.~(\ref{inner}) explicitly for every point in the computational grid without working with the coefficients of each Fourier mode.
These features distinguish our method from the one presented in \cite{yang2010lagrangian}, as already noted in the Introduction. This difference will allow us to obtain unique solutions as well as apply our approach even to turbulent flows without additional assumptions.

In the absence of further constraints, the resulting approximate first integral $H$ will, generally, be a complex-valued scalar field. Denoting by $H_{r}$ and $H_{i}$ its real and imaginary parts, respectively, we have  $\langle \nabla H,\mathbf{v} \rangle_{\mathbb C} = \langle \nabla H_{r},\mathbf{v} \rangle - \langle \nabla H_{i},\mathbf{v} \rangle \mkern3mu i$.
This implies that by considering the expression in eq.~(\ref{eq:closest_homo_system}) we essentially optimize both the real and imaginary parts of $H$.
This allows us to use $\left| H \right|$ as the approximate first integral.
Alternatively, we can require $H$ to be real a priori. We discuss the latter procedure in Appendix \ref{app:realConstraint} and show that its results are similar to those obtained without imposing this constraint.

As already noted, for generic flows, a non-trivial approximate first integral is only expected to exist in vortical regions. Outside such regions, we expect our algorithm to yield almost flat first integrals. Such a function, locally supported on several vortical regions, would generally require a large number of Fourier basis functions, which in turn would lead to numerical inefficiencies and cost. To avoid this, we work with a comparably small number of basis functions and only use the level surfaces of the emerging approximate first integral in regions where those surfaces are indeed nearly tangent to the given vector field. To identify such regions, we introduce the {\em invariance error} as
\begin{equation}
E_{A} =\frac{1}{A} \sum_{i=1}^{p} \left| \frac{\nabla \left| H_{i} \right|  \cdot \mathbf{v}_{i}}{\left|\nabla \left|H_{i}\right| \right| \left| \mathbf{v}_{i} \right|} \right|,
\label{eq:norm_error_surface} 
\end{equation} 
where $A$ is the  surface area of a level set and $p$ the number of points in the level set. This type of error estimate was first introduced in \cite{yang2010lagrangian}. In our visualizations, we will discard streamsurface candidates with invariance errors exceeding a certain threshold value.

Finally, we note that our minimization procedure can also be viewed as finding for $\mathbf{v}$, in the appropriate norm,  the closest member of the integrable, 3D incompressible vector field family
\begin{equation}
{\mathbf{x}}^\prime = \mathbf{J}(\mathbf{x})\nabla H(\mathbf{x},t),\quad \nabla\cdot \mathbf{J}\equiv\mathbf{0},
\label{closest_integrable_velocity_field}
\end{equation}
with $\mathbf{J}=-\mathbf{J}^T$ and $\nabla\cdot \mathbf{J}$ denoting the divergence of the tensor field $\mathbf{J}$. Indeed, all these vector fields in eq.~(\ref{closest_integrable_velocity_field}) share the same streamsurfaces, the level sets of $H$. Working with  (\ref{closest_integrable_velocity_field}) directly, however, is much more demanding  numerically in our experience and would also require specific assumptions on the form of $\mathbf{J}$.

Before moving to specific examples, we note that finding an exact, pointwise first integral in eq.~(\ref{inner}) is not a well-posed problem by itself. Indeed, if $H(\mathbf{x},t)$ is a solution, then, for any sufficiently smooth function $F$, $F(H(\mathbf{x},t))$ will also be a solution due to the homogeneity of eq.~(\ref{inner}). This would not be an issue for the detection of streamsurfaces, if the isocontours of $H$ and $F(H)$, expressed through the gradient of these fields, represent the same geometric and topological features for the given vector field $\mathbf{v}$. Unfortunately, we can construct simple counter-examples where this is not the case. For instance, if we denote by $v_x, v_y$ and $v_z$ the three components of $\mathbf{v}$ and assume that $v_x = 0$, then the function $G(x) H(y,z)$ will be an exact, pointwise first integral as long as $v_y \theta_y H + v_z \theta_z H = 0$. By tweaking $G$, one can easily obtain markedly different topological features resulting from the corresponding streamlines. We also refer to \cite{pullin2014whither} for more examples of first integrals with different topology for the vorticity field of the Taylor--Green flow stemming from the superposition of independent solutions to eq.~(\ref{inner}).

Consequently, our variational principle (\ref{eq:closest_functional}) will exhibit the same non-uniqueness issues whenever an exact, pointwise first integral is admitted by the underlying vector field. To resolve this, we will only consider approximate first integrals for which the weakest eigenvalue of $\mathbf{A}=\mathbf{C}^*\mathbf{C}$ is not considered (numerically) zero.
In addition, we will only retain the streamsurfaces whose topology remains unaltered when, for the same set of grid points, a larger number of Fourier modes is used allowing for small geometric refinements due to the increased accuracy of the solution.
If these two conditions are met, we will consider the resulting structures as robust and they will be included in the visualization.
Furthermore, we will see that the more complex a flow is the larger the spectral gap to the second-smallest eigenvalue will be.
Irrespective of this gap, however, in all the examples that follow we will build our solution based only on the eigenvector corresponding to the smallest eigenvalue of $\mathbf{A}$.
We close this section by noting that our approach is in agreement with the findings of \cite{xiong2017boundary,xiong2019identifying} who showed that the construction of VSFs in turbulent flows through the use of PDEs leads to robust structures despite the non-uniqueness issues.

\section{Approximate first integrals for explicit solutions of the Euler equations}
In this section, we illustrate our minimization algorithm to construct approximate first integrals and use their level sets as approximate streamsurfaces in analytic examples.

\subsection{ABC flow}
\label{ABC_Flow}
As a first test case, we investigate the ABC (Arnold--Beltrami--Childress) class of flows \citep{dombre1986chaotic,henon1966topologie}, defined as
\begin{equation}
\begin{gathered}
\displaystyle \dot{x} = A \sin z + C \cos y, \\
\displaystyle \dot{y} = B \sin x + A \cos z, \\
\displaystyle \dot{z} = C \sin y + B \cos x,
\label{eq:ABC} 
\end{gathered}
\end{equation}
for $A,B,C\in\mathbb{R}$ and $(x,y,z) \in [0,2\pi]^3$, together with periodic boundary conditions. The right-hand side of \eqref{eq:ABC} defines an exact steady solution to the incompressible Euler equations. For $ABC \neq 0$, the flow exhibits chaotic behavior \citep{dombre1986chaotic,henon1966topologie}, whereas some analytic non-integrability results can be found in \cite{ziglin1988splitting,ziglin1998absence}.

\subsubsection{Integrable case}
We first analyze the ABC flow with $A=0$ for which \eqref{eq:ABC} is completely integrable. For $BC \neq 0$, an exact, pointwise first integral is given by $H_1(x,y) = C \sin y + B \cos x$, while another independent first integral can be constructed through the use of elliptic functions \citep{llibre2012note}. The level sets of $H_1$ are depicted in Fig. \ref{fig:closest_ABC_integrable}(a) on one cross-section of the computational domain for $B=\sqrt{2}$ and $C = 1$. These curves, therefore, represent the intersections of a representative set of streamsurfaces with the $z=0$ plane. These streamsurfaces are also exact invariant manifolds for the Lagrangian particle motions in this steady flow.

To test our proposed algorithm for constructing approximate first integrals, we start from a discretized version of the full 3D velocity field (\ref{eq:ABC}) along $100$ points per direction, using approximately $9\,000$ Fourier modes in formulas (\ref{trunc})-(\ref{eq:closest_homo_system}).  Solving the underlying optimisation problem using the algorithm in Appendix \ref{app:least_squares_homo}, we obtain the results shown in Fig.~\ref{fig:closest_ABC_integrable}(b) at the same cross-section as in Fig.~\ref{fig:closest_ABC_integrable}(a). The numerically constructed approximate level sets match the analytic first integral closely. To measure the proximity of streamsurfaces and approximate streamsurfaces along the $z=0$ plane, we introduce a planar version of the general invariance error (\ref{eq:norm_error_surface}) by defining
\begin{equation}
E_{l} =\frac{1}{l} \sum_{j=1}^{p} \left| \frac{\nabla \left| H_{j} \right|  \cdot \nabla H_{1j}}{\left| \nabla \left|H_{j}\right| \right| \left| \nabla H_{1j} \right|} \right|
\label{eq:norm_error_curve} 
\end{equation} 
where $l$ is the length of the level and $p$ is the number of points for each level set. We use this metric to remove level curves with fewer than $30$ points and those with invariance errors $E_{l} > 10^{-5}$. The results shown in Fig. \ref{fig:closest_ABC_integrable}(c) confirm that choosing these thresholds removes small-scale artifacts arising from numerical inaccuracies.

\begin{figure}
	\centering
	\includegraphics[]{./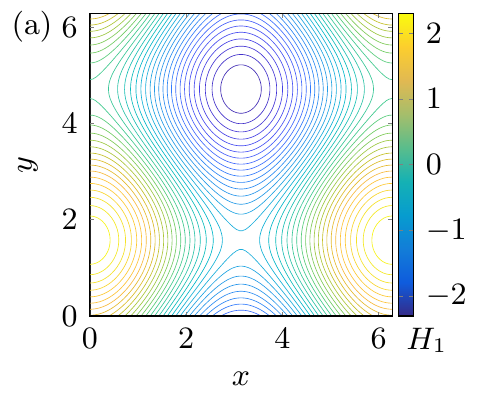}\hfill\includegraphics[]{./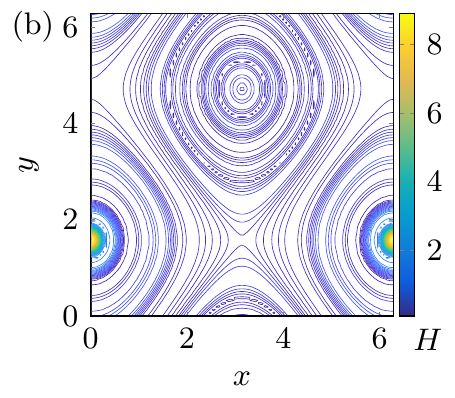}\hfill\includegraphics[]{./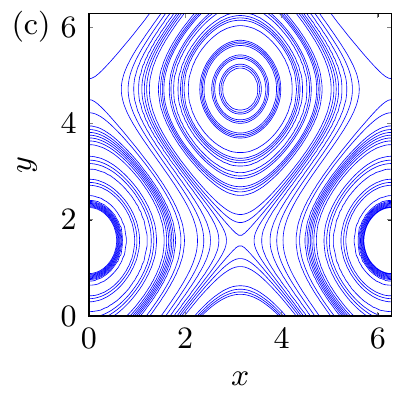}
	\caption{Analysis of the integrable ABC flow using a computational grid of $100^3$ points and about $9\,000$ Fourier modes. (a) Intersections of the level surfaces  of $H_1$ with the $z=0$ plane (b) Intersections of the level surfaces of the approximate first integral $H$  with the $z=0$  plane. (c) Same as (b) but after the removal  of  small-scale structures of (b) as well as the structures with $E_{l} > 10^{-5}$.}
 	\label{fig:closest_ABC_integrable}
 \end{figure}

We perform the same analysis on the $z=0$ plane using $150$ points per direction but only about $1\,200$ Fourier modes. The results are depicted in Fig.~\ref{fig:closest_ABC_integrable_2}. Again, we observe close agreement between the known first integral and the reconstructed curves. At some points, the agreement is even closer when compared to Fig.~\ref{fig:closest_ABC_integrable}(c) due to the higher spatial resolution, even though the number of Fourier modes is significantly smaller.

\begin{figure}
	\centering
	\includegraphics[]{./figs/combinedFigures/IntegrableABC/ABC_integral_gs.pdf}
	\hfill\includegraphics[]{./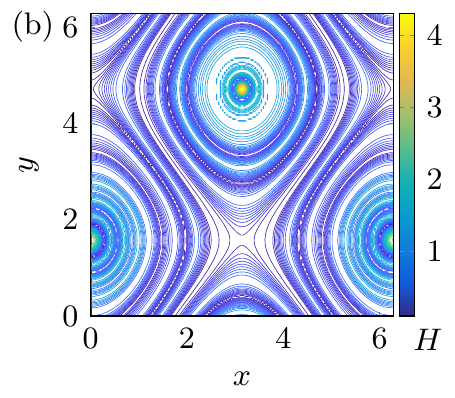}
	\hfill\includegraphics[]{./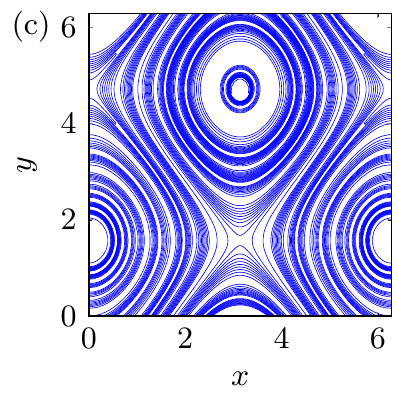}
	\caption{Same as Fig.~\ref{fig:closest_ABC_integrable} but with  a computational grid of $150^3$ points and  $1\,200$ Fourier modes.}
	\label{fig:closest_ABC_integrable_2}
 \end{figure}

\subsubsection{Non-integrable case}
\label{nonintegrable_ABC}
For a different set of parameter values ($A = \sqrt{3}$, $B = \sqrt{2}$ and $C = 1$), the ABC flow \eqref{eq:ABC} is non-integrable and shows chaotic behavior in some regions. The dynamic behavior of trajectories for this set of parameter values is well-studied, including the KAM-type tori highlighted by Poincaré maps \citep{dombre1986chaotic} and elliptic LCS techniques \citep{blazevski2014hyperbolic,oettinger2016global}. We use this velocity field as a benchmark  to test different solution algorithms for finding an approximate first integral for a non-integrable flow. Also, this will serve as a proof of concept for finding elliptical regions. 

We have already noted that the unit-norm least-squares solution to the homogeneous system of eq.~(\ref{eq:closest_homo_system}) coincides with the right-singular vector of $\mathbf{C}$ associated to its smallest singular value or, equivalently, with the eigenvector associated to the smallest eigenvalue of $\mathbf{A} = \mathbf{C}^{*}\mathbf{C}$. To improve numerical stability, the SVD-based solution  is preferred \citep{golub1973differentiation}. Indeed, the eigenvalue calculation requires a matrix multiplication to form $\mathbf{A}$, which invariably squares the condition number of $\mathbf{C}$. For comparison, we calculate both the singular-vector-based and the eigenvector-based solutions on the triply-periodic box $[0,2 \pi]^3$ with $100$ points per direction and $N=13$ (or $9\,170$ Fourier modes). In this setting, running the SVD algorithm of MATLAB on $\mathbf{C}$, which is a $(10^2)^3 \times 9\,170$ matrix, would require an exorbitant amount of memory (more than $400$ GB), indicating that the classical SVD algorithm is not optimized for tall-skinny matrices (such as our coefficient matrix). To proceed with our comparison test, we instead follow the modified SVD method discussed in Appendix \ref{app:skinnySVD} for tall-skinny matrices.

With this modification to the SVD-based solution, the results from the two approaches for the non-integrable ABC flow are presented in Fig.~\ref{fig:closest_ABC_eigen_vs_psvd} on the $z=0$, $y=0$ and $x=0$ planes. We observe that the differences between the eigenvector-based and  singular-vector-based computations are marginal, indicating that the larger condition number of $\mathbf{A}$ does not affect the results. Furthermore, we use the same three planes as Poincaré sections to integrate trajectories up to an arclength of $10^4$ from a uniform grid of $20 \times 20$ initial conditions on each section. We observe a very good agreement between the predicted structures and the intersections of the KAM surfaces with each of these sections. This is highlighted perhaps even better by the reconstructed KAM surfaces as approximate streamsurfaces in Fig.~\ref{fig:IsoHoudini_ABC}, which are to be contrasted with the $\lambda_2$-based structures in Appendix \ref{app:faultyCases}. Since the ABC flow is a Beltrami flow, its velocity $\mathbf{v}$ is parallel to the vorticity $\boldsymbol{\omega} = \boldsymbol{\nabla} \times \mathbf{v}$ (in fact, $\mathbf{v}=\boldsymbol{\omega}$); consequently, the approximate-first-integral-based tori we have constructed are also VSFs. This illustrates that our algorithm can identify VSFs in flows where the methodology of \cite{yang2010lagrangian} is inapplicable. Indeed, as already noted, the non-integrable ABC flow has chaotic streamlines and, thus, no symmetry assumptions regarding these streamlines can be utilized to accelerate the convergence rate for the optimization technique presented in \cite{yang2010lagrangian}. Even if this rate was irrelevant, however, expanding the known velocity field in a Fourier series would result in an optimization problem with many (numerically) zero eigenvalues and, thus, infinitely many possible minimizers.

\begin{figure}
	\centering
	\includegraphics[]{./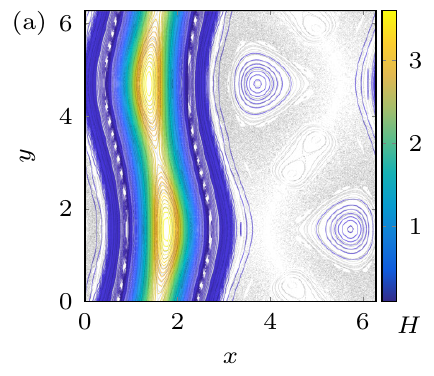}
	\includegraphics[]{./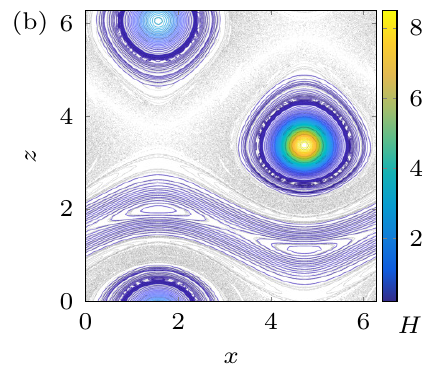}
	\includegraphics[]{./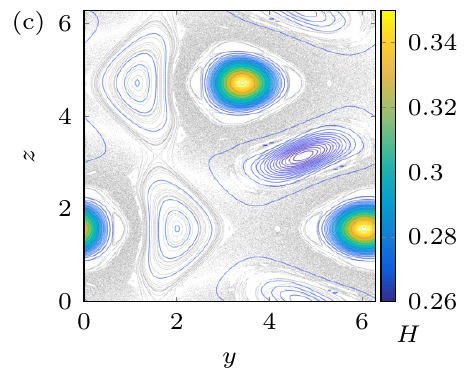}\\
	\includegraphics[]{./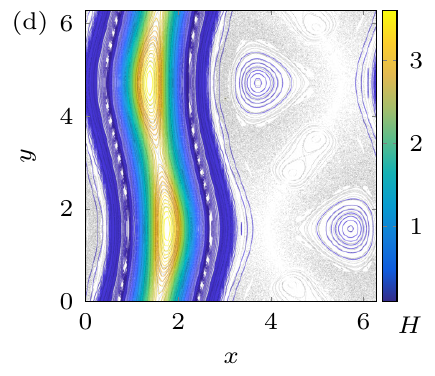}
	\includegraphics[]{./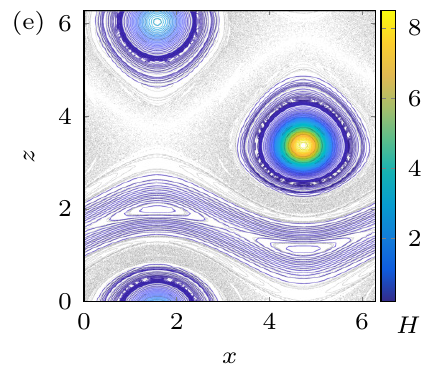}
	\includegraphics[]{./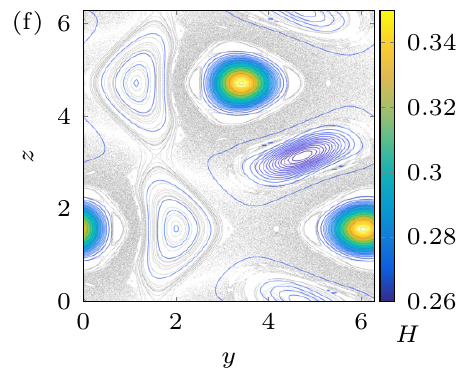}
	\caption{Analysis of the non-integrable ABC flow using a computational grid of $100^3$ points and $9\,170$ Fourier modes. Level sets of the approximate first integral at $z=0$ (a and d), $y=0$ (b and e) and $x=0$ (c and f). The first row is constructed from the eigenvector of $\mathbf{A}$ corresponding to the smallest eigenvalue, whereas the second row is produced using the SVD of $\mathbf{C}$. The overlaid Poincaré map (black dots) on each section is based on a uniform grid of $20 \times 20$ initial conditions.}
	\label{fig:closest_ABC_eigen_vs_psvd}
 \end{figure}
\begin{figure}
	\centering
	\includegraphics[width=.48\linewidth]{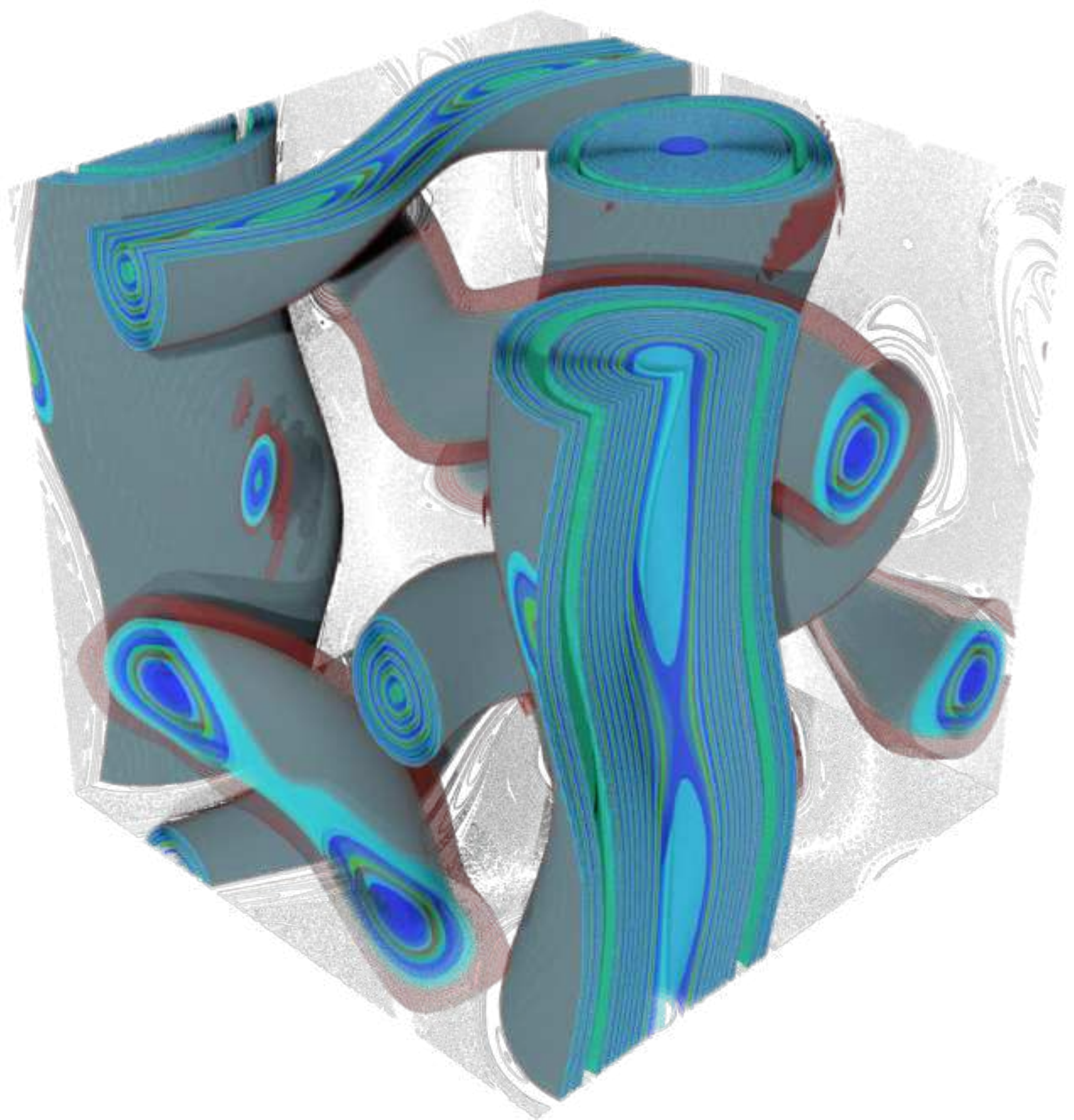}
	\includegraphics[width=.48\linewidth]{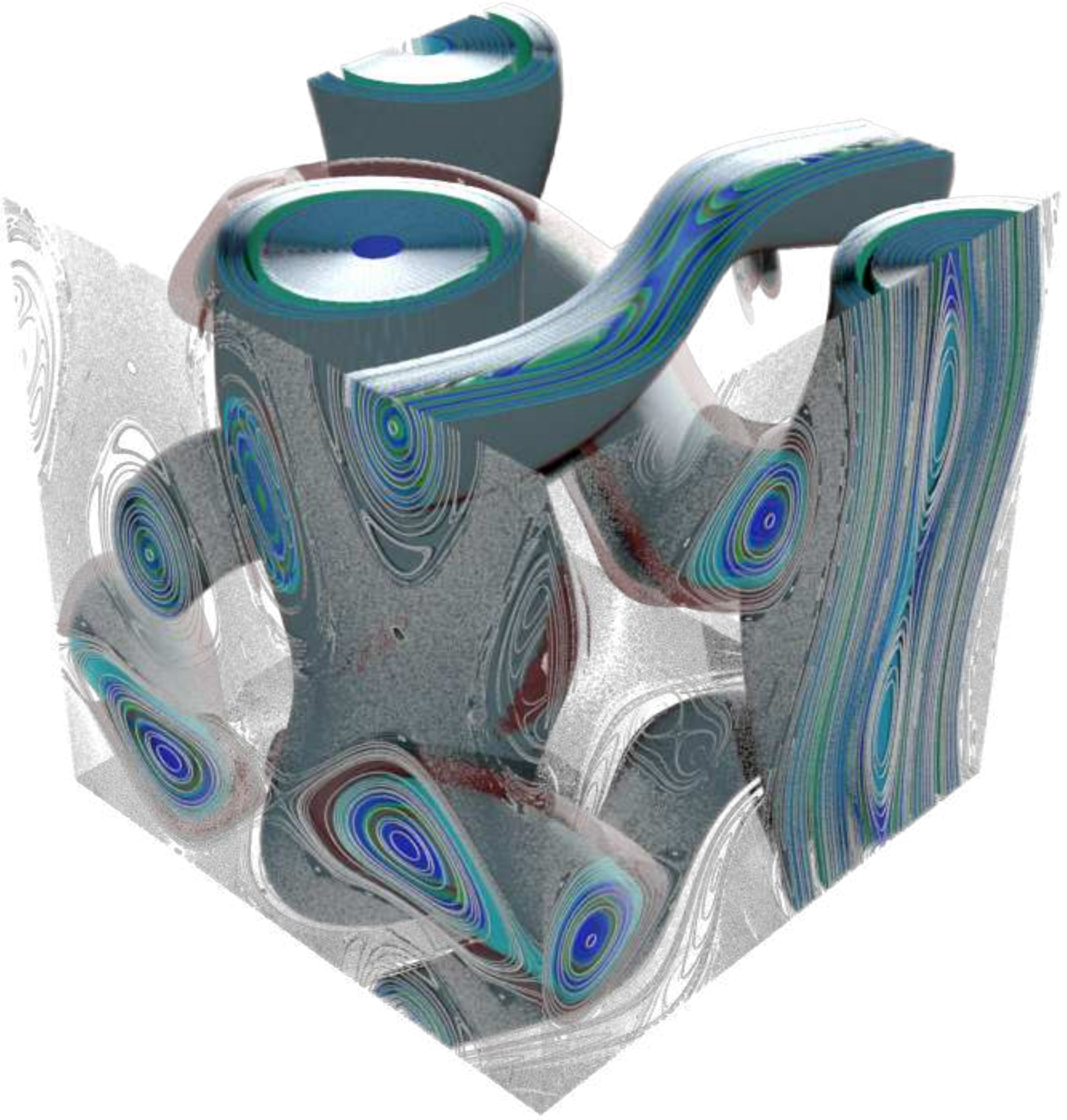}
	\vspace{-0.2cm}
	\caption{Two different views of the approximate streamsurfaces (level sets of the approximate first integral) closely approximate the KAM-type surfaces of the non-integrable ABC flow in elliptic regions. Also shown are iterations of the  Poincaré map (black dots) on three orthogonal planes. The results were obtained using the weakest eigenvector of the positive definite matrix $\mathbf{A}$.  See also the supplementary Movie 1.mp4.}
	\label{fig:IsoHoudini_ABC}
\end{figure}
Upon taking a closer look at the results of Fig.~\ref{fig:closest_ABC_eigen_vs_psvd}, we notice that the reconstructed level sets attain their values in a longer range (i.e., $[0,3.5]$) for the larger KAM surfaces, whereas, in the vicinity of the smaller structures, they are confined to a narrow band (i.e., $[0.25,0.35]$). Here the adjectives larger and smaller are used to refer to either the area (Fig.~\ref{fig:closest_ABC_eigen_vs_psvd}) or the volume (Fig.~\ref{fig:IsoHoudini_ABC}) these families enclose. This is a type of overfitting that we would like to mitigate. One way to achieve this is by considering a slightly different optimization problem (see Appendix \ref{app:inhomoSol}) which resembles the one put forward by \cite{yang2010lagrangian}. This different approach, however, turns out to be computationally intense, posing severe limitations to its use for typical grid sizes while its results are arguably of inferior quality. All in all, obtaining the solution to the proposed algorithm as the eigenvector corresponding to the smallest eigenvalue of $\mathbf{A}$ is computationally superior to all the other techniques used and, thus, it is the one that we will follow in the rest of this article.

We conclude this section by performing a convergence analysis for different numbers of modes in Fig.~\ref{fig:closest_ABC_spectrum}. We observe that the least-squares error (as the smallest eigenvalue of $\mathbf{A}$) approaches zero as the number of modes used in the analysis increases.
Furthermore, the second smallest eigenvalue converges to the smallest one for higher modes.
In Appendix \ref{app:least_squares_homo}, we show that when the smallest eigenvalues are almost equal, we can construct the solution as a linear combination of the eigenvectors corresponding to these near-identical eigenvalues. Here, however, we can base our solution only on the weakest eigenvector since we have $d_1 = 0.015415$ and $d_2=0.01596$ for $N=13$. This gap between $d_1$ and $d_2$ will prove significantly larger in the following, data-based examples, confirming the uniqueness of solutions to the optimization problem.
\begin{figure}
	\begin{centering}
		\includegraphics[scale=1.1]{./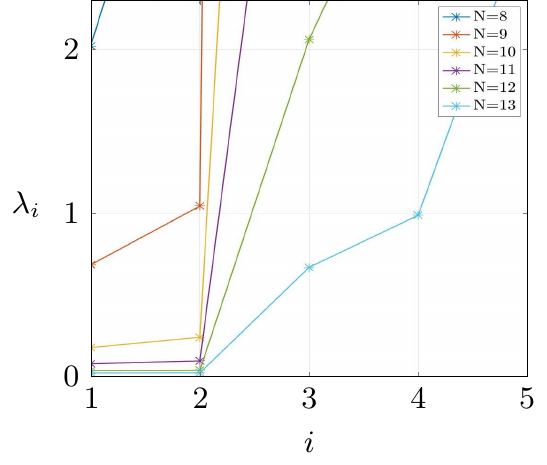}
		\par\end{centering}
		\vspace{-0.3cm}
	\caption{Five smallest eigenvalues of $\mathbf{A} = \mathbf{C}^{*}\mathbf{C}$ for different numbers of modes ($2\,108, 3\,070, 4\,168, 5\,574, 7\,152, 9\,170$ modes for $N=8,9,10,11,12,13$, respectively).}
	\label{fig:closest_ABC_spectrum}
\end{figure}

\subsection{Further analytic solution to the Euler equations}
A set of analytic, unsteady, tri-periodic laminar solutions of the Navier–Stokes equations was put forward in \cite{antuono2020tri}. Here, we consider only the steady part of these solutions, which is a Beltrami solution to the Euler equations with no known first integral. We therefore expect the streamlines of this velocity field to be chaotic and the overall dynamics to be non-integrable. Representative (approximate) streamsurfaces have not yet been constructed for this flow in the literature.

The velocity field is given by
\begin{equation}
\mathbf{v}=\frac{4\sqrt{2}}{3\sqrt{3}}\left(\begin{array}{c}
\sin(x-\frac{5\pi}{6})\cos(y-\frac{\pi}{6})\sin(z)-\cos(z-\frac{5\pi}{6})\sin(x-\frac{\pi}{6})\sin(y)\\
\sin(y-\frac{5\pi}{6})\cos(z-\frac{\pi}{6})\sin(x)-\cos(x-\frac{5\pi}{6})\sin(y-\frac{\pi}{6})\sin(z)\\
\sin(z-\frac{5\pi}{6})\cos(x-\frac{\pi}{6})\sin(y)-\cos(y-\frac{5\pi}{6})\sin(z-\frac{\pi}{6})\sin(x)
\end{array}\right).
\label{eq:Nflow_def}
\end{equation}
Using a uniform grid of $20 \times 20$ initial conditions on the $y=0$ plane, we integrate trajectories up to arclength of $10^4$ and, upon retaining their long-term behavior from the interval $[5 \cdot 10^3, 10^4]$, we show the resulting Poincaré map in Fig.~\ref{fig:closest_NFlow_Poin_N13_percentiles}(a) where a plethora of KAM tori is discernible. Selecting the triply periodic box $[0,2 \pi]^3$ and sampling it with $100^3$ points, we run our algorithm for $N=13$. Upon constructing isosurfaces for $10$ different isovalues of the resulting approximate first integral, we locate their intersections with the $y=0$ plane and superimpose them on Fig.~\ref{fig:closest_NFlow_Poin_N13_percentiles}(a).
We then compute the invariance error based on eq.~(\ref{eq:norm_error_surface}) and color the isocontours of Fig.~\ref{fig:closest_NFlow_Poin_N13_percentiles}(a) blue or red depending on whether their average error corresponds to an angle of less or more than $5^{\circ}$. Based on this, contours lying inside the chaotic sea of the Poincaré map show the largest invariance error despite our algorithm not using any knowledge of the long-term, chaotic dynamics.

For every isocontour of Fig.~\ref{fig:closest_NFlow_Poin_N13_percentiles}(a), we launch trajectories following the vector field $\mathbf{v}$ of eq.~(\ref{eq:Nflow_def}) until they leave the computational domain $[0,2\pi]^3$, and then calculate their distances from the corresponding isosurfaces. Thus, in Fig.~\ref{fig:closest_NFlow_Poin_N13_percentiles}(b), we present the 95th percentile of those distances for each streamsurface after separating them according to Fig.~\ref{fig:closest_NFlow_Poin_N13_percentiles}(a), i.e., we depict the trajectories corresponding to the blue contours in the left side of the dash-dotted line, whereas the ones for the red contours to the right side. We note the correlation between the retained (blue) isocontours of Fig.~\ref{fig:closest_NFlow_Poin_N13_percentiles}(a) and the significantly smaller percentiles of Fig.~\ref{fig:closest_NFlow_Poin_N13_percentiles}(b).

\begin{figure}
	\centering
	\includegraphics[width=.4\textwidth]{./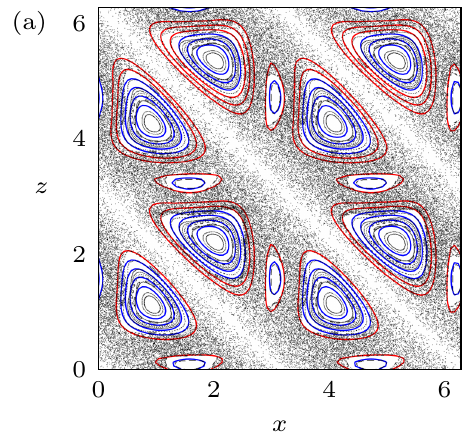}
	\includegraphics[width=.435\textwidth]{./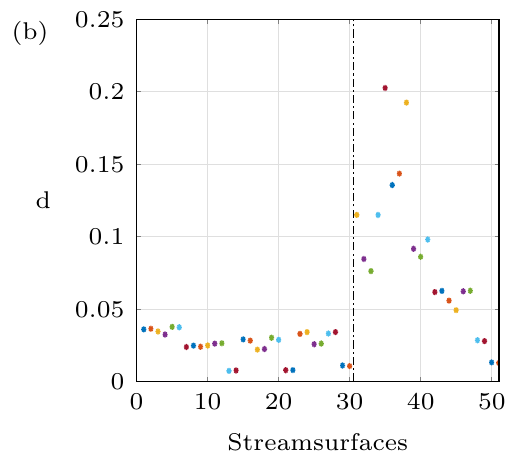}
		
	\vspace{-0.3cm}
	\caption{(a) Comparison of the Poincaré map on the plane $y=0$ for the steady Euler flow (\ref{eq:Nflow_def}) overlaid on the intersections of the tori obtained from an approximate first integral with the same plane for $N = 13$.
	The blue (red) isocontours depict tori whose  invariance error (see eq.~(\ref{eq:norm_error_surface})) is smaller (larger, respectively) than $5^{\circ}$ on average.
	(b) 95th percentile of the distance (in non-dimensional units) between trajectories emanating from the isocontours of (a) with the corresponding 2D tori inside the computational box $[0,2\pi]^3$. The points to the left (right) of the dash-dotted line correspond to trajectories originating in the blue (red) contours of (a).}
	\label{fig:closest_NFlow_Poin_N13_percentiles}
\end{figure}

Similarly, we present the reconstructed tori for $N = 15$ or $14\,146$ modes in Fig.~\ref{fig:closest_NFlow_Poin_N15_N19_closeup}(a). We observe that, for $N=13$, the approximate first integral captures virtually all the KAM surfaces indicated by the Poincar{\' e} map. In contrast, for $N=15$, some of the structures are captured more accurately while others are missed completely.
The convergence analysis depicted in Fig. \ref{fig:closest_NFlow_spectrum_convergence}(a) shows that the least-squares error follows a declining trend as $N$ grows. 
This prompts us to consider an error measure similar to the one in eq.~(\ref{eq:norm_error_surface}), defined as
\begin{equation}
E_m=\frac{1}{m}\sum_{i=1}^{m}\left| \frac{\nabla \left| H_{i} \right| \cdot \mathbf{v}_{i}}{\left| \nabla \left| H_{i} \right|\right| \left| \mathbf{v}_{i}\right|  } \right|.
\label{eq:normalized_error} 
\end{equation}
In this expression, the summation is taken over all the grid points, providing an estimate for the mean invariance error of the entire solution. This allows us to make a direct comparison among solutions corresponding to different numbers of modes.
Specifically, excluding points that lie in the vicinity of fixed points of either
$\mathbf{v}$ or $\nabla \left| H \right|$, we show the dependence of the invariance error $E_m$ on the number of Fourier modes used in our algorithm in Fig. \ref{fig:closest_NFlow_spectrum_convergence}(b). We observe that the error attains the minimum value for $N=13$, in agreement with what is inferred from Fig.~\ref{fig:closest_NFlow_Poin_N13_percentiles}(a).

Moreover, to mitigate minor discrepancies between the reconstructed tori and those outlined by the Poincaré map of Fig.~\ref{fig:closest_NFlow_Poin_N13_percentiles}(a), we increase the number of Fourier modes to $28\,670$ ($N=19$), while keeping the same set of grid points. A closeup view of a region filled with invariant tori in Fig.~\ref{fig:closest_NFlow_spectrum_convergence} confirms that there is a close agreement with the tori obtained from an approximate first integral. There is, however, a trade-off between the increased accuracy and the ensuing computational burden that an end user must consider.

We also note that, while for $N=15$ we have $d_1 \approx d_2 = 839.8079$, for $N=19$ we have $d_1 = 204.5165$ and $d_2 = 218.2078$, further corroborating that our approach leads to a unique solution. Finally, we conclude this section with Fig.~\ref{fig:isoHoudini_NFlow} showing 3D rendered images of (a) the three Poincaré maps on the planes $x=y=z=0$, (b) representative streamsurfaces obtained as level surfaces of an approximate first integral for $N=13$ and (c) the superimposition of (a) on (b),  confirming the close agreement between the expected and reconstructed structures.
\begin{figure}
	\centering
	\includegraphics[width=.4\textwidth]{./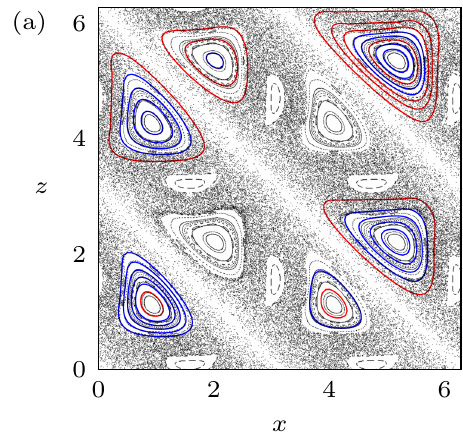}
	\includegraphics[width=.4\textwidth]{./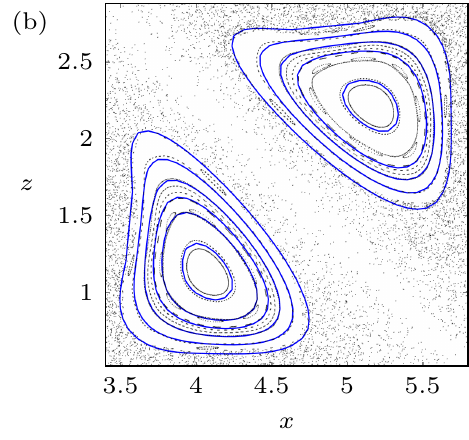}
		
	\vspace{-0.3cm}
	\caption{(a) Same as in Fig.~\ref{fig:closest_NFlow_Poin_N13_percentiles}(a) with the tori reconstructed for $N = 15$.
	(b) Closeup view on a region filled with two families of invariant tori. Overlaid on the Poincaré map are the tori obtained from an approximate first integral for $N = 19$.}
	\label{fig:closest_NFlow_Poin_N15_N19_closeup}
\end{figure}
\begin{figure}
	\centering
	\includegraphics[]{./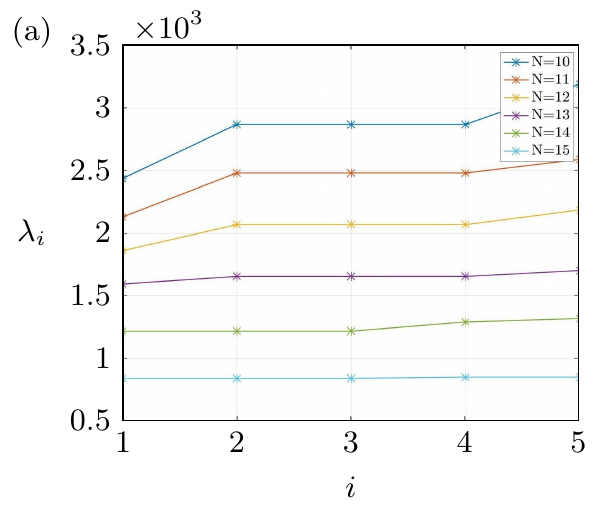}
	\includegraphics[]{./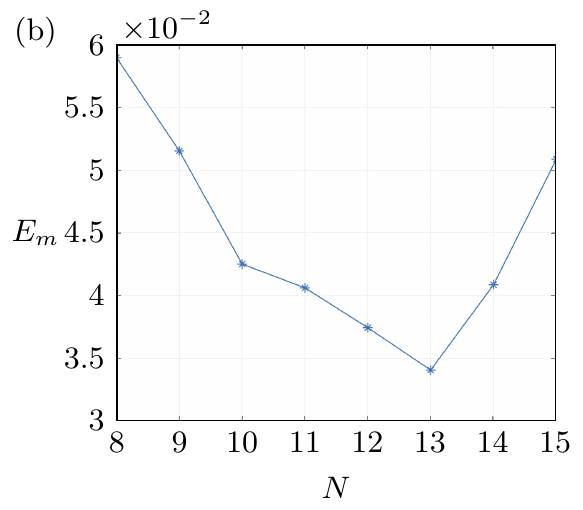}
	\vspace{-0.28cm}
	\caption{Numerical details of the approximate first integral calculation for the steady Euler flow (\ref{eq:Nflow_def}). (a) Five smallest eigenvalues of $\mathbf{A} = \mathbf{C}^{*}\mathbf{C}$ and (b) normalized error estimate (eq.~(\ref{eq:normalized_error})) for different numbers of modes.}
	\label{fig:closest_NFlow_spectrum_convergence}
\end{figure}
\begin{figure}
	\centering
		\centering
		\includegraphics[width=.32\linewidth]{./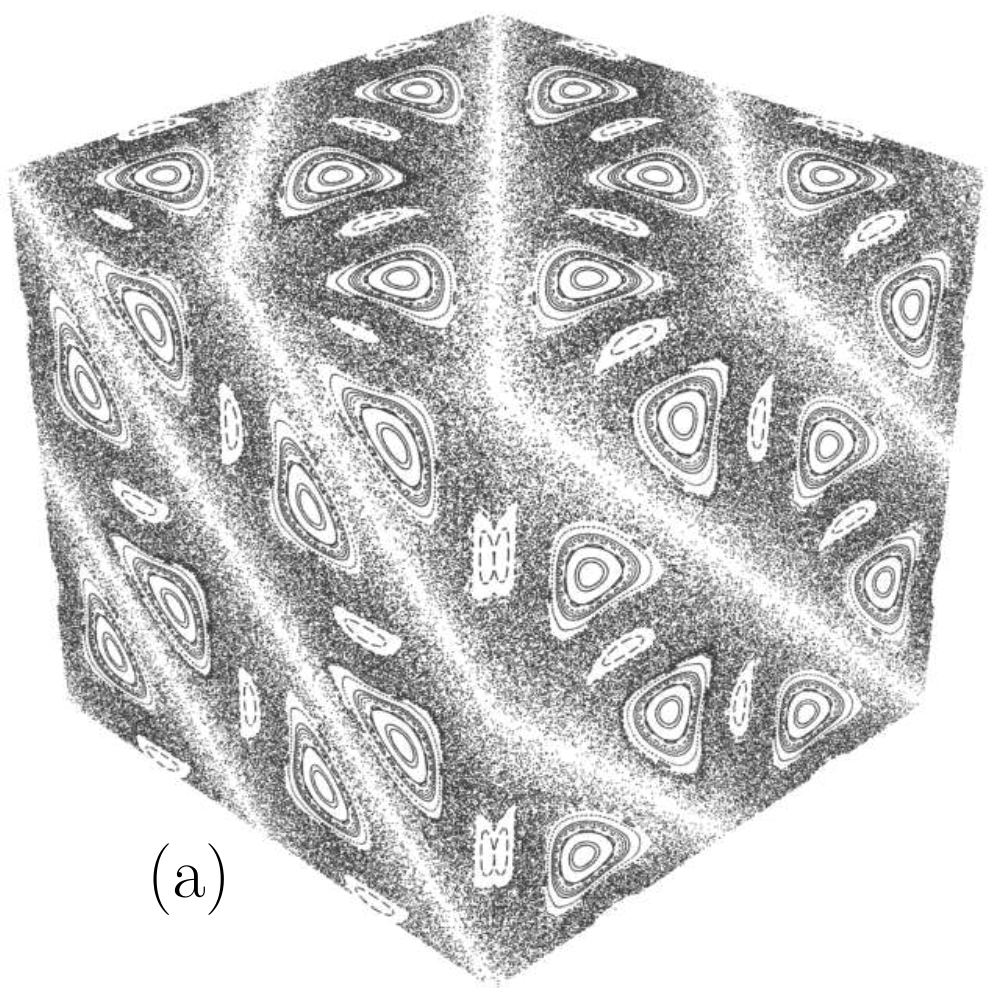}
		\includegraphics[width=.32\linewidth]{./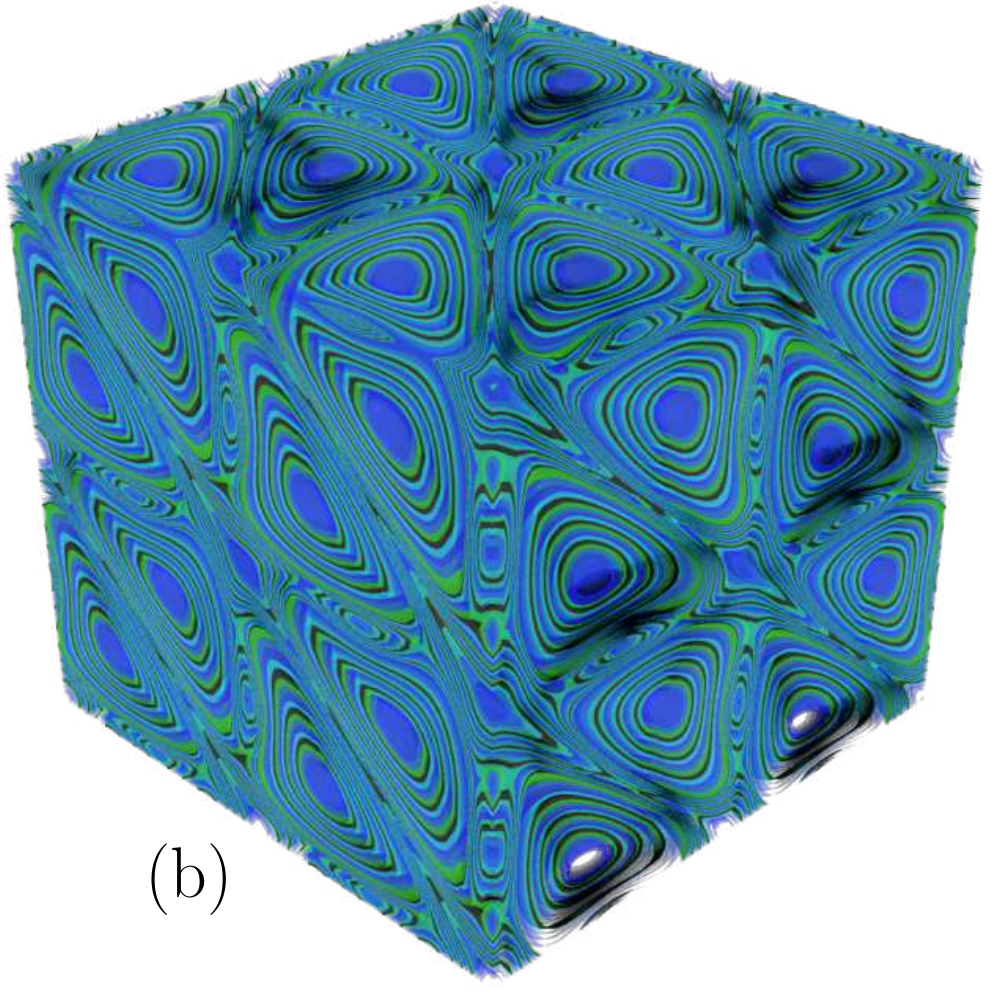}
		\includegraphics[width=.32\linewidth]{./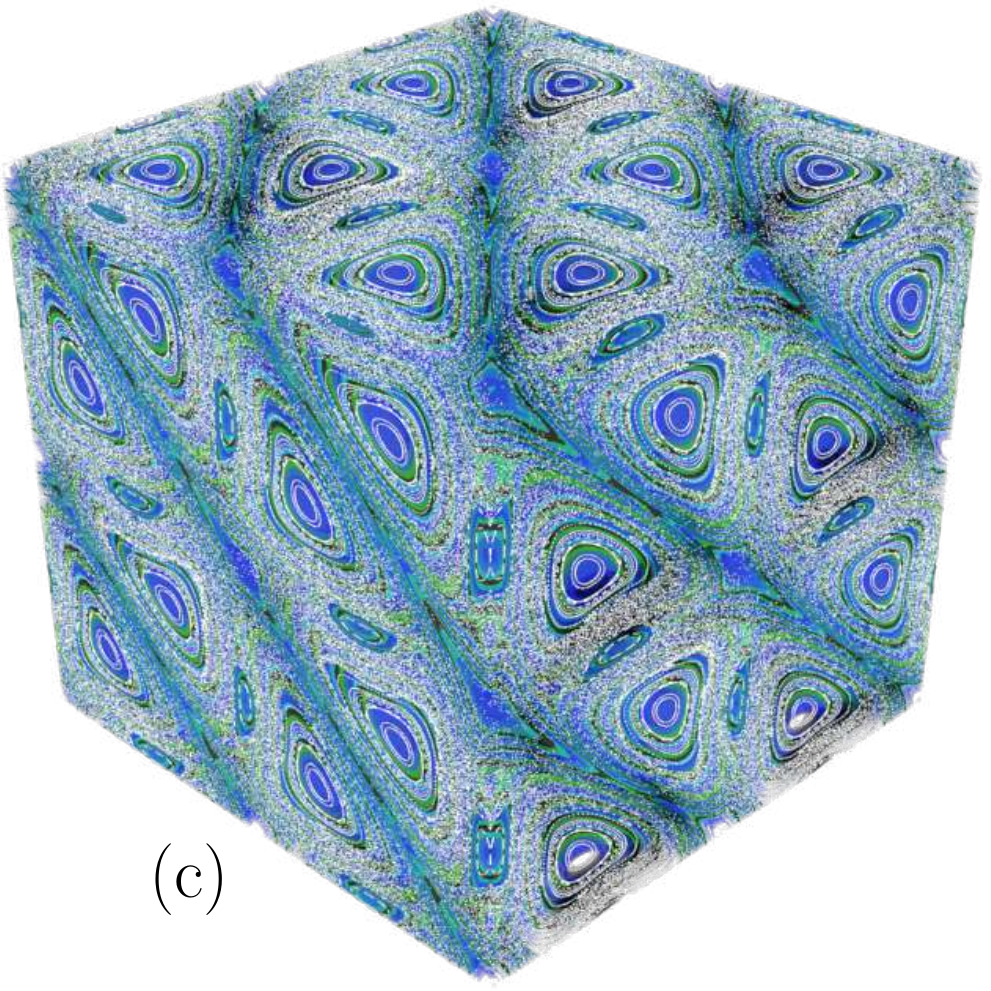}
	
	\vspace{-0.2cm}
	\caption{Results for the steady Euler flow (\ref{eq:Nflow_def}). (a) Poincaré maps on $x=y=z=0$. (b) Streamsurfaces approximating the KAM surfaces of eq.~(\ref{eq:Nflow_def}). (c) (a) superimposed on (b).}
	\label{fig:isoHoudini_NFlow}
\end{figure}

\subsection{Hill's spherical vortex}
\label{Hills_vortex}
We now turn to a spatially non-periodic, integrable flow given by Hill's spherical vortex \citep{hill1894vi}. The axial-symmetric (about the $z$-axis) stream function for Hill's solution to the Euler equations is
\begin{equation}
\psi(h,z)=\begin{cases}
\begin{array}{c}
\frac{3}{4}U_{0}h^{2}\left(1-r^{2}\right),\quad r\leq1,\\
-\frac{1}{2}U_{0}h^{2}\left(1-\frac{1}{r^{3}}\right),\quad r>1.
\end{array}\end{cases}\label{eq:def_psi}
\end{equation}
where $h$ is the distance from the axis of symmetry ($h^2 = x^2 +y^2$) and
$r^{2}=h^{2}+z^{2}$. Using the relations $u_{h}=-\frac{1}{h}\frac{\partial\psi}{\partial z}$ and $u_{z}=\frac{1}{h}\frac{\partial\psi}{\partial h}$, we obtain the corresponding velocity field in Cartesian coordinates
\begin{equation}
\displaystyle
\mathbf{v}(x,y,z)=\begin{cases}
\begin{array}{c}
\frac{3}{2}U_{0} \left(xz, \quad yz, \quad 1-(r^2+h^2)\right) ,\quad r\leq1,\\
\frac{3}{2}U_{0} \left(\frac{xz}{r^5}, \quad \frac{yz}{r^5}, \quad \frac{2}{3} \frac{1}{r^3} - \frac{h^2}{r^5}\right),\quad r>1.
\end{array}\end{cases}\label{eq:def_vel_Hill}
\end{equation}

As for periodic domains, we can compute a Fourier expansion on a bounded subdomain, bearing in mind that the convergence of the partial Fourier sum will be slow in general \citep{gottlieb1997gibbs}. To mitigate this issue, we will have to consider a sufficiently high number of modes. Furthermore, due to the well-known Gibbs phenomenon, there will be sizable spurious oscillations in the approximate first integral near the box boundary that do not diminish after an increase in the number of modes. To damp this effect in our algorithm, we discard all the reconstructed surfaces that fall within $5\%$ of the box size in all three directions.

The velocity field of eq.~(\ref{eq:def_vel_Hill}) has toroidal streamsurfaces inside the spherical domain $\{\mathbf{x} \in \mathbb{R}: \left|\mathbf{x}\right| \leq 1\}$ (see Fig.~\ref{fig:Hills_tori_per}(a)). We reconstruct these streamsurfaces over the computational domain $[-2,2]^3$ with $60$ points per direction, running our algorithm for different numbers of Fourier modes with $U_{0}=1$. In some cases with $N<15$, the non-periodic nature of the velocity field results in the breakdown of the reconstructed toroidal structures. For $N\ge15$, however, we obtain fully symmetric solutions. Specifically, for illustration purposes, we work with $N=17$ (or $20\,478$ Fourier modes) which yields a unique solution corresponding to the smallest eigenvalue of $\mathbf{A}$, $d_{1} = 23.5169$ ($d_2 = 25.3118$). Based on this we create $10$ isosurfaces in the interval $\big[ |H_{min}|,|H_{max}| \big]$. Upon locating the intersections of these surfaces with eight radially equidistant planes, we launch trajectories corresponding to these intersections and compute the pointwise distance of the solution curves to the reconstructed surfaces. The results (in terms of percentiles) are presented in Fig.~\ref{fig:Hills_tori_per}(b). A comparison between the reconstructed streamsurfaces and indicative solution curves is given in Fig.~\ref{fig:Hills_tori_3D}.

\begin{figure}
	\centering
		\centering
		\includegraphics[width=.49\linewidth]{./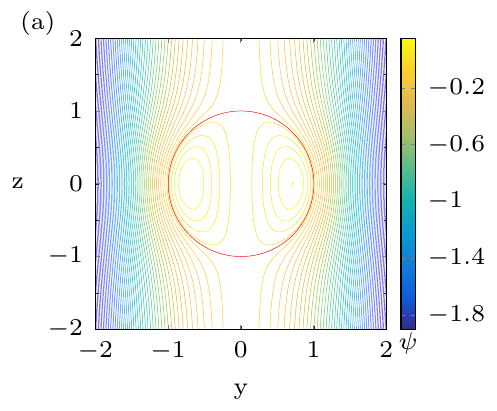}
		\includegraphics[width=.48\linewidth]{./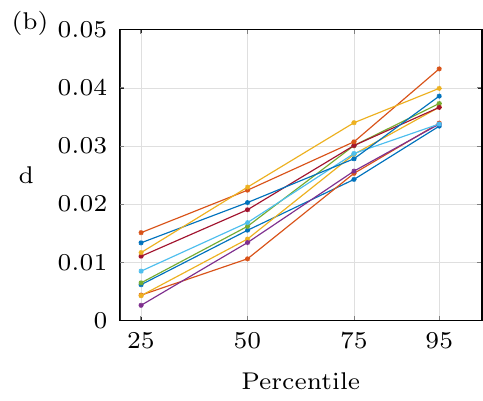}

	\vspace{-0.3cm}
	\caption{(a) Level sets of Hill's stream function on the $x=0$ plane. (b) 25th, 50th, 75th and 95th percentile of the pointwise distances (in non-dimensional units) between solution curves and $10$ different reconstructed streamsurfaces for $N=17$.}
	\label{fig:Hills_tori_per}
\end{figure}

\begin{figure}
	\centering
		\centering
		\includegraphics[width=.49\linewidth]{./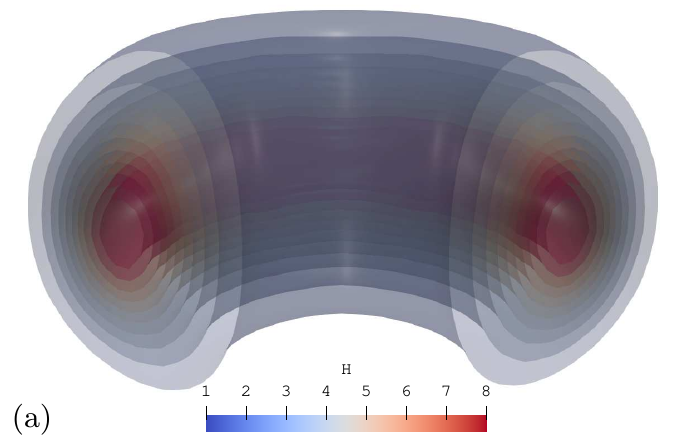}
		\includegraphics[width=.48\linewidth]{./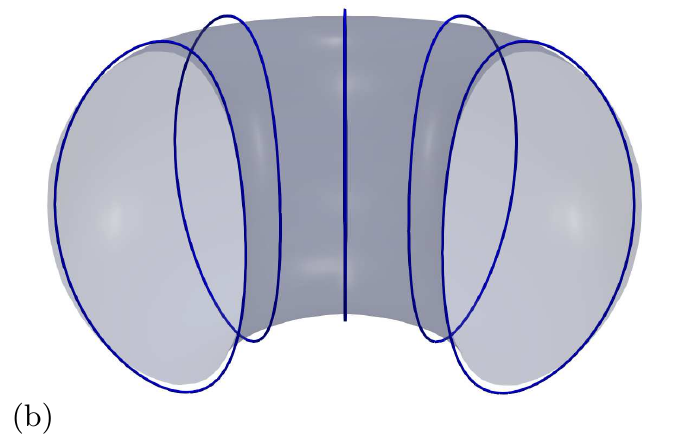}

	\vspace{-0.3cm}
	\caption{(a) Streamsurfaces of Fig.~\ref{fig:Hills_tori_per} that have a 95th percentile less than $0.05$ for $x>0$. (b) Solution curves of eq.~(\ref{eq:def_vel_Hill}) for $5$ different points lying on the outer streamsurface of (a).}
	\label{fig:Hills_tori_3D}
\end{figure}

\subsection{Flow inside a V junction}
\label{V_junction}
Next we investigate the flow inside a V-junction as depicted in Fig. \ref{fig:junction_streamlines}. Despite the simple geometry, recent experiments have suggested that pumping a particle-laden fluid into such a configuration allows light particles, such as gas bubbles in water, to be permanently trapped in the junction \citep{vigolo2014unexpected}. This phenomenon arises for a wide range of Reynolds numbers and for various junction angles \citep{ault2016vortex}. Here we consider one such flow with junction angle $70^{\circ}$ and $\mathrm{Re} = \left( \bar{u}L/\nu \right) = 230$ \citep{shin2015flow}, where $\bar{u}$ is the average inlet flow speed, $L$ is the side length of the square channel and $\nu$ is the kinematic viscosity. 
\begin{figure}
	\begin{centering}
		\includegraphics[scale=0.16]{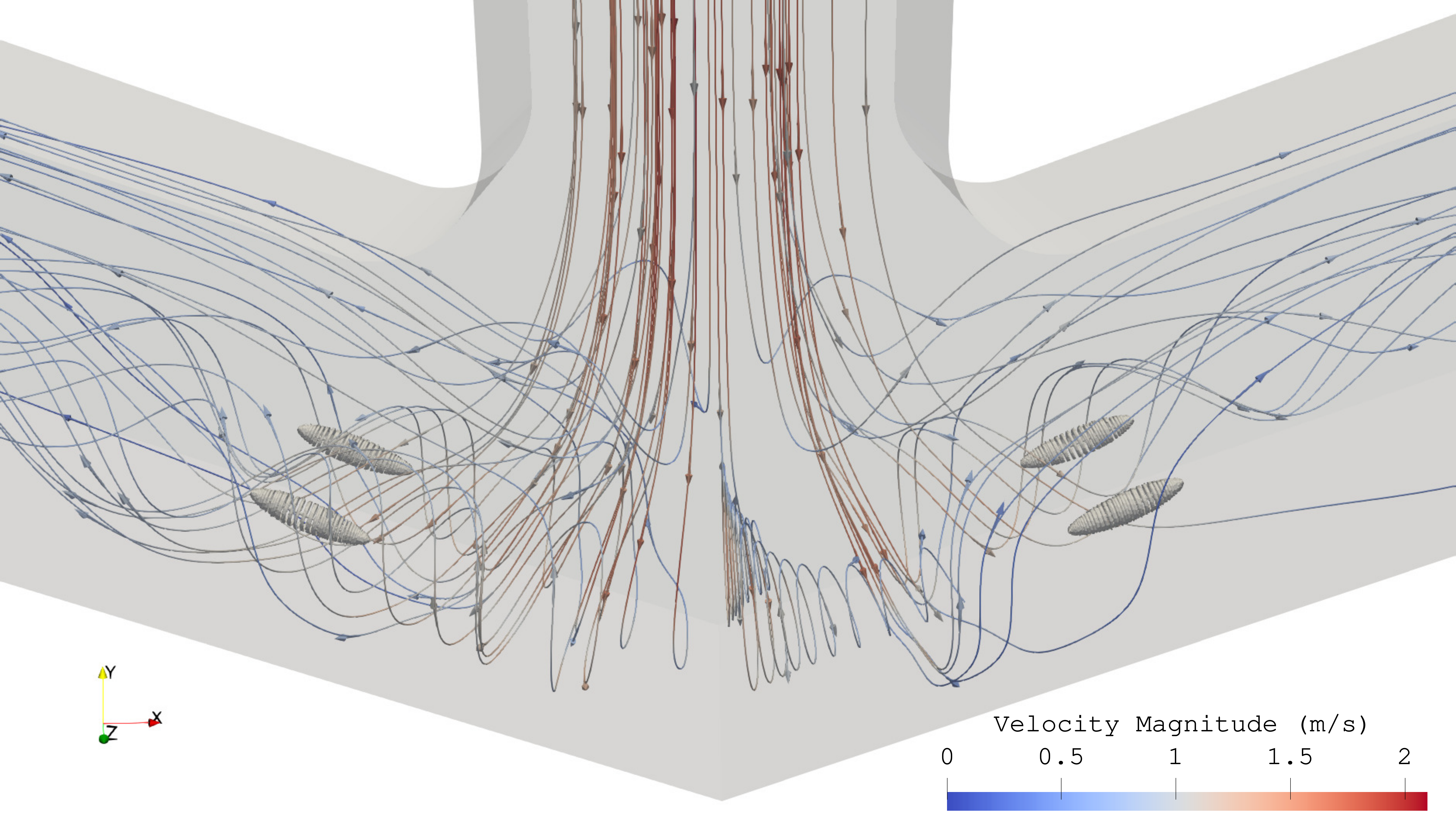}
		\par\end{centering}
	\caption{Velocity streamlines, color-coded with their magnitude, emanating from the inlet of the V-junction and leaving the domain from the two outlets. For $\mathrm{Re} = 230$ and a junction angle of $70^{\circ}$, four symmetric vortex-breakdown bubbles are portrayed in gray.}
	\label{fig:junction_streamlines}
\end{figure}

We use a finite-volume solver from the \texttt{OpenFOAM} library to obtain a steady solution to the 3D incompressible Navier--Stokes equation \citep{weller1998tensorial}.
The same numerical solution has recently been analyzed using methods from dynamical systems theory, which revealed large, anchor-shaped trapping regions for light particles  \citep{oettinger2018invisible}. These trapping regions, however, have been invariably linked to bubble-type vortex breakdown structures in the fluid flow, which are formed downstream in the junction \citep{ault2016vortex}. These structures are depicted as dark gray blobs in Fig.~\ref{fig:junction_streamlines}, obtained from a careful advection of streamlines from the vicinity of known stagnation points. Their construction, therefore, is by no means automated and assumes a detailed knowledge of the streamline geometry.

Each vortex breakdown bubble is demarcated by the 2D stable and unstable manifolds of two saddle-type fixed points. In a generic 3D flow, these two manifolds do not coincide because that configuration would not be structurally stable. Instead, they are expected to intersect transversely. In such a scenario, the original streamsurface breaks down allowing fluid particles from the main stream to be entrained into the bubble and, conversely, particles to return from the bubble to the main stream \citep{holmes1984some,peikert2007visualization}. \cite{sotiropoulos2001chaotic} showed that a very careful mesh refinement 
is required to reveal the splitting of the manifolds.
Here, we will only be interested in reconstructing a closed streamsurface that manifests minimal fluid exchange (see \cite{peikert2007visualization} for a discussion of this surface).

We select as computational domain the box depicted in Fig.~\ref{fig:computationalBoxes} with $110, 90, 80$ points in the $x,y$ and $z$ direction, respectively. This box is chosen so that the bubble-like vortical structure lies approximately in its center. The dimensions of the box are $L_{x} = 1 \textrm{m}$, $L_{y} = 0.6 \textrm{m}$ and $L_{z} = 0.4 \textrm{m}$. This procedure incorporates a priori information about the rough location of the streamsurface of interest. We will see in the next section how the same algorithm can be extended to uncover a priori unknown structures in a turbulent flow.
\begin{figure}
	\centering
	\includegraphics[width=0.45\linewidth]{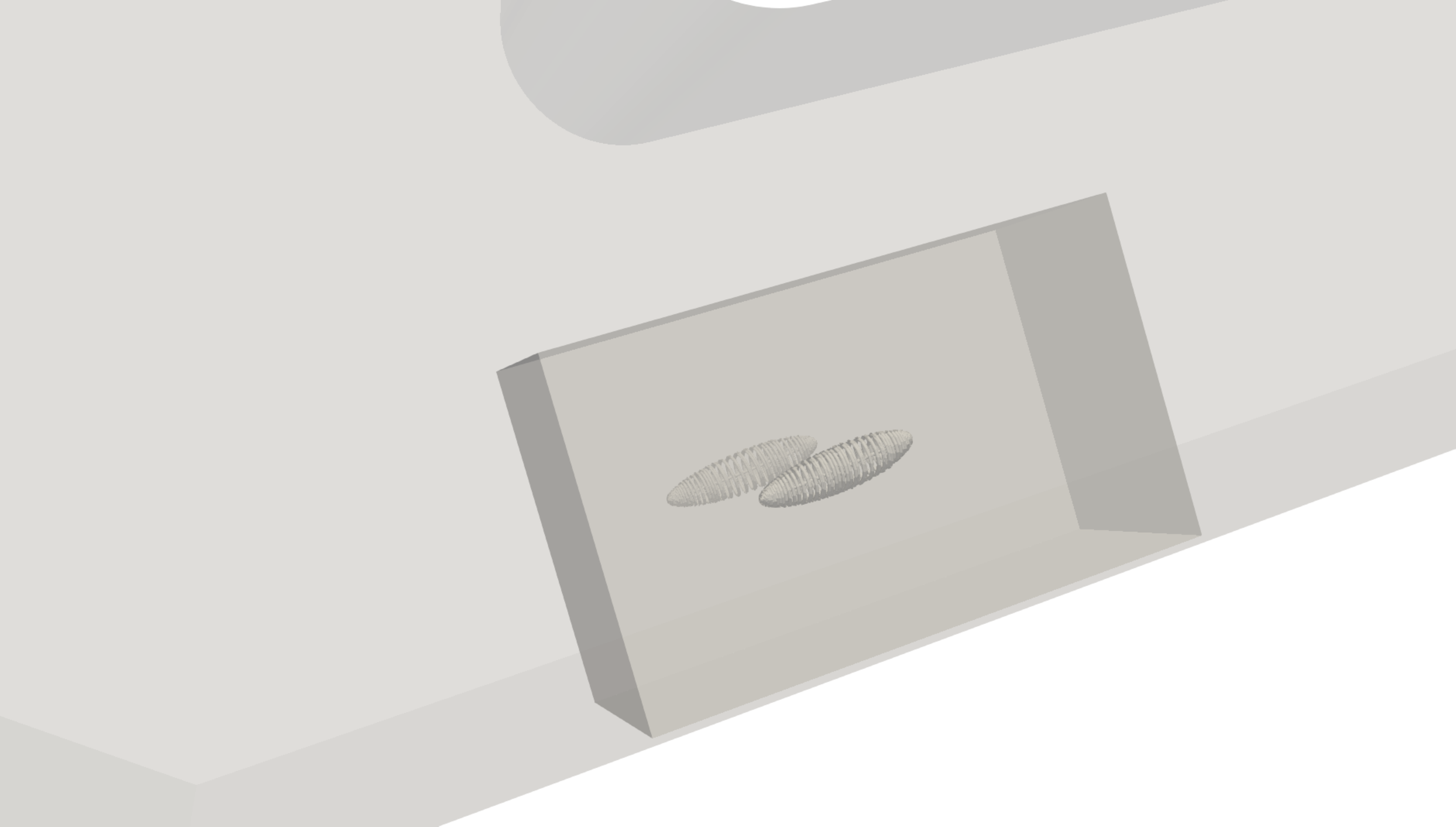}
	\includegraphics[width=0.45\linewidth]{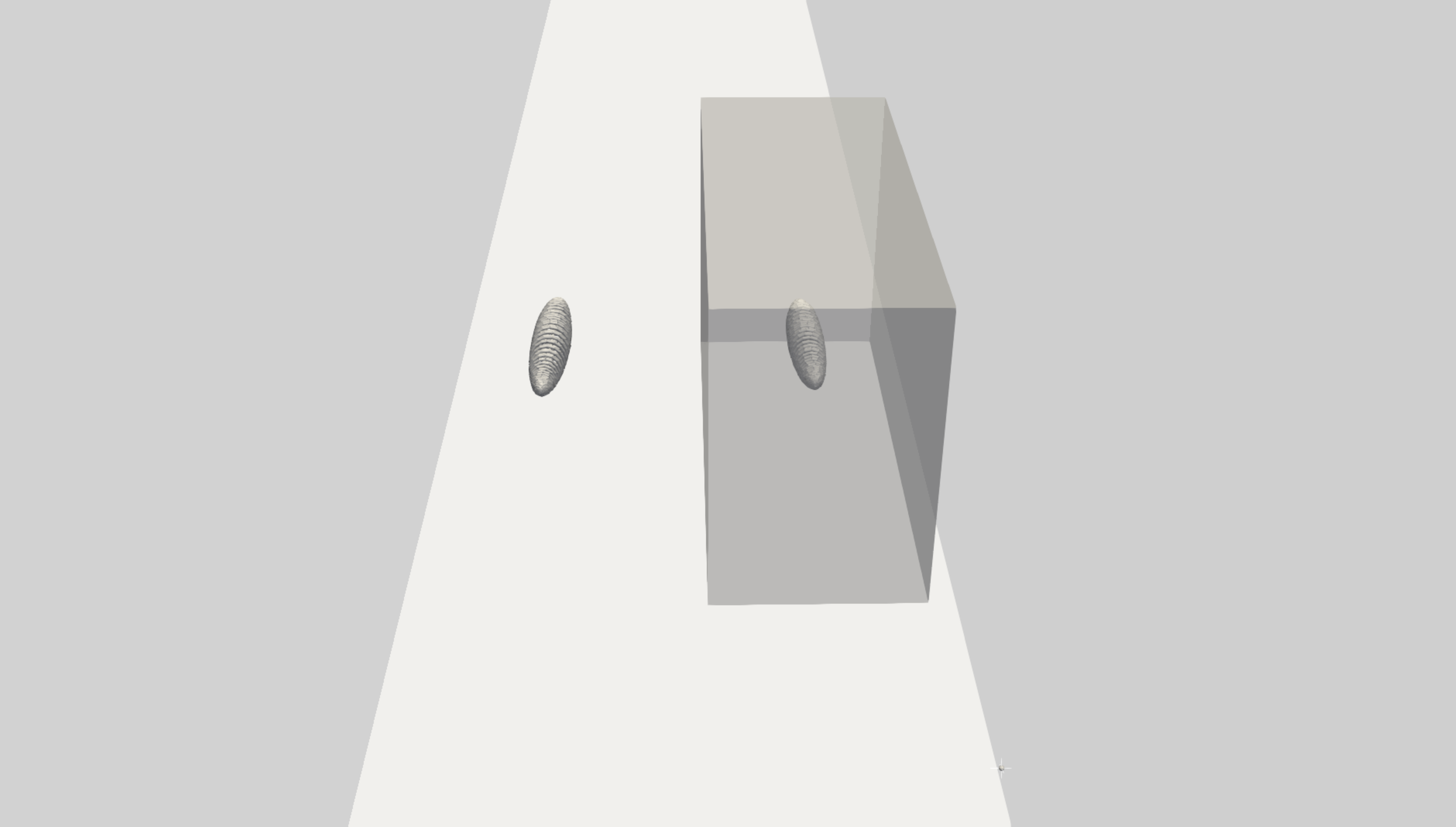}
	\caption{Different views of the computational box used to construct an approximate first integral for one of the bubble-like structures in the V junction flow.}
	\label{fig:computationalBoxes}
\end{figure}

We use our algorithm with $N=15$ or $14\,146$ Fourier modes and show a 2D cross-section at the middle of the computational domain in Fig.~\ref{fig:crossPlusInvariance}(a). For the smallest eigenvalue of $\mathbf{A}$, $d_1 = 58.1538$ ($d_2 = 376.9696$), a very pronounced circular structure is clearly visible approximately in the middle of the domain bordering an otherwise flat landscape. To obtain the full, 3D reconstruction, we use $40$ equidistant values between $\left| H \right|_{min}$ and $\left| H \right|_{max}$ and extract $28$ different level surfaces of the approximate first integral. We then sort them in decreasing order based on the volume they enclose. 
\begin{figure}
	\centering
	\includegraphics[width=0.45\linewidth]{./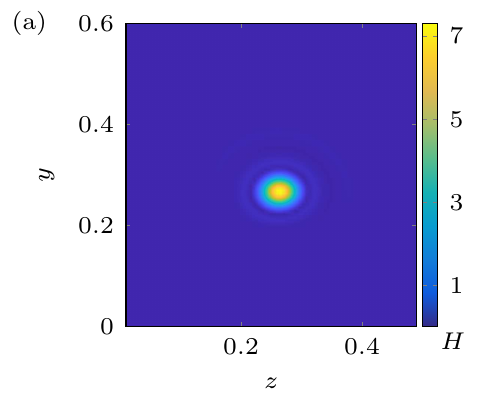}
	\includegraphics[width=0.45\linewidth]{./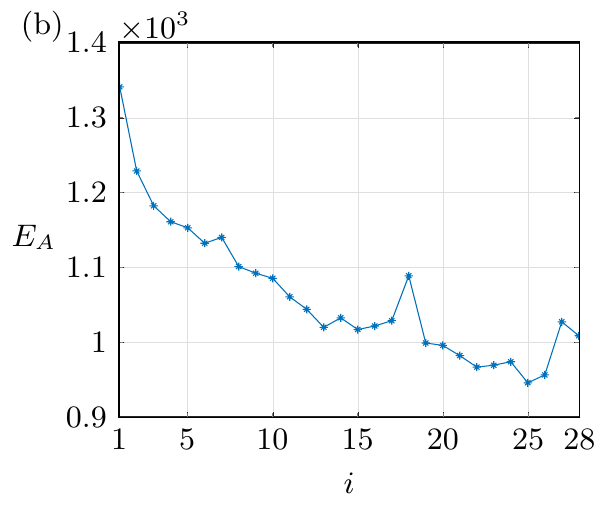}
	\caption{Results for an approximate first integral in the V-junction flow, framing one of the elliptical vortex regions. (a) Approximate first integral distribution on a plane which coincides with the middle, in the $x$ direction, of the computational domain presented in Fig.~\ref{fig:computationalBoxes}. (b) Normalized invariance error as a function of the extracted isosurfaces sorted in descending order with respect to their volume.}
	\label{fig:crossPlusInvariance}
\end{figure}

Plotting this error estimate for the extracted isosurfaces (Fig.~\ref{fig:crossPlusInvariance}(b)) yields a global minimum for the surface $i=25$ and a local minimum for the surface $i=15$. Rendered depictions of the extracted isosurfaces corresponding to these two minima are shown in Fig.~\ref{fig:junction_structures}. We note the close agreement between the reconstructed structures and the structure delineated by the judiciously chosen streamlines. This is in stark contrast to the structures suggested by the $Q$-criterion as demonstrated in Appendix \ref{app:faultyCases}.

\begin{figure}
	\centering
	\includegraphics[width=0.45\linewidth]{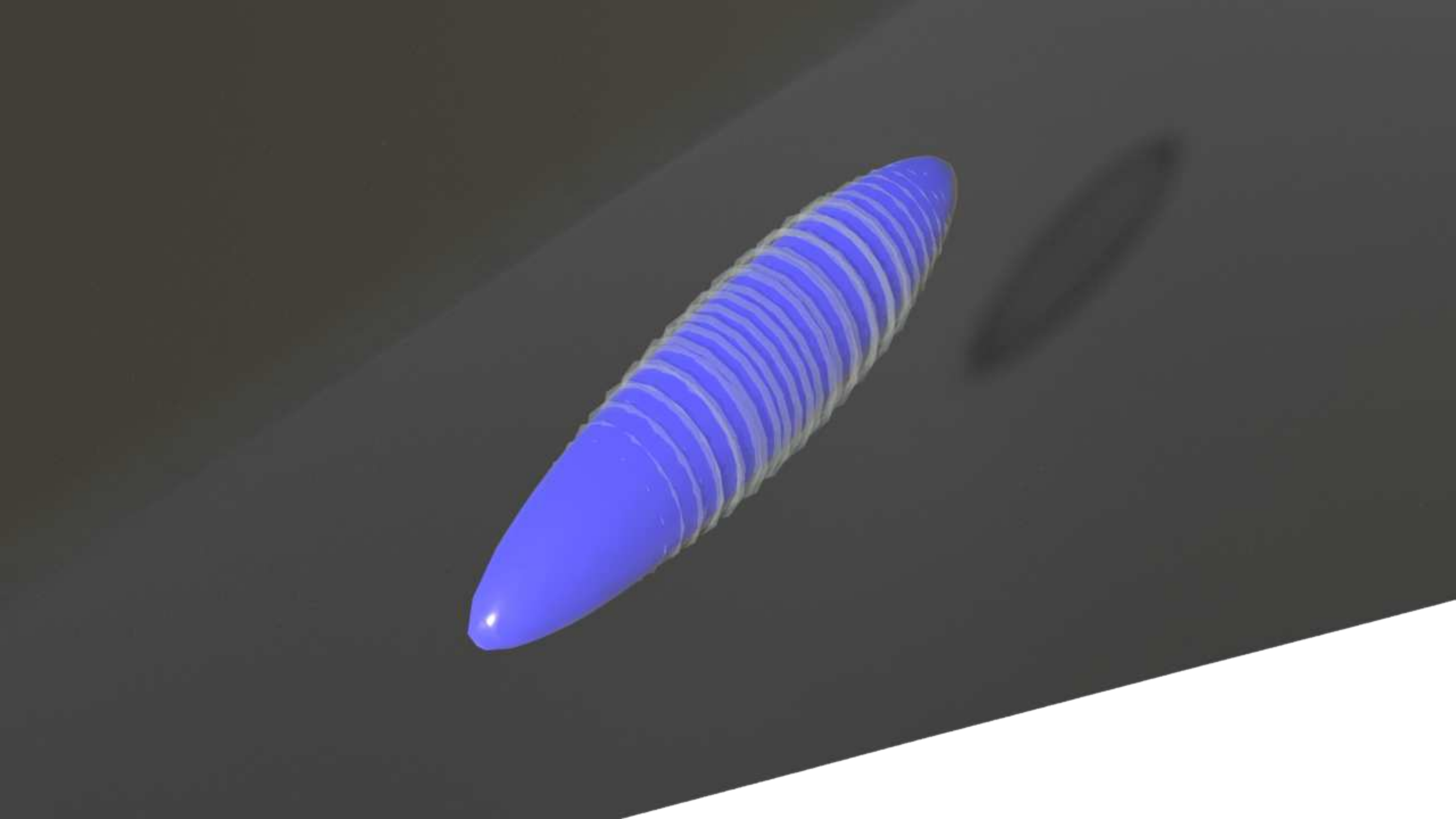}
	\includegraphics[width=0.45\linewidth]{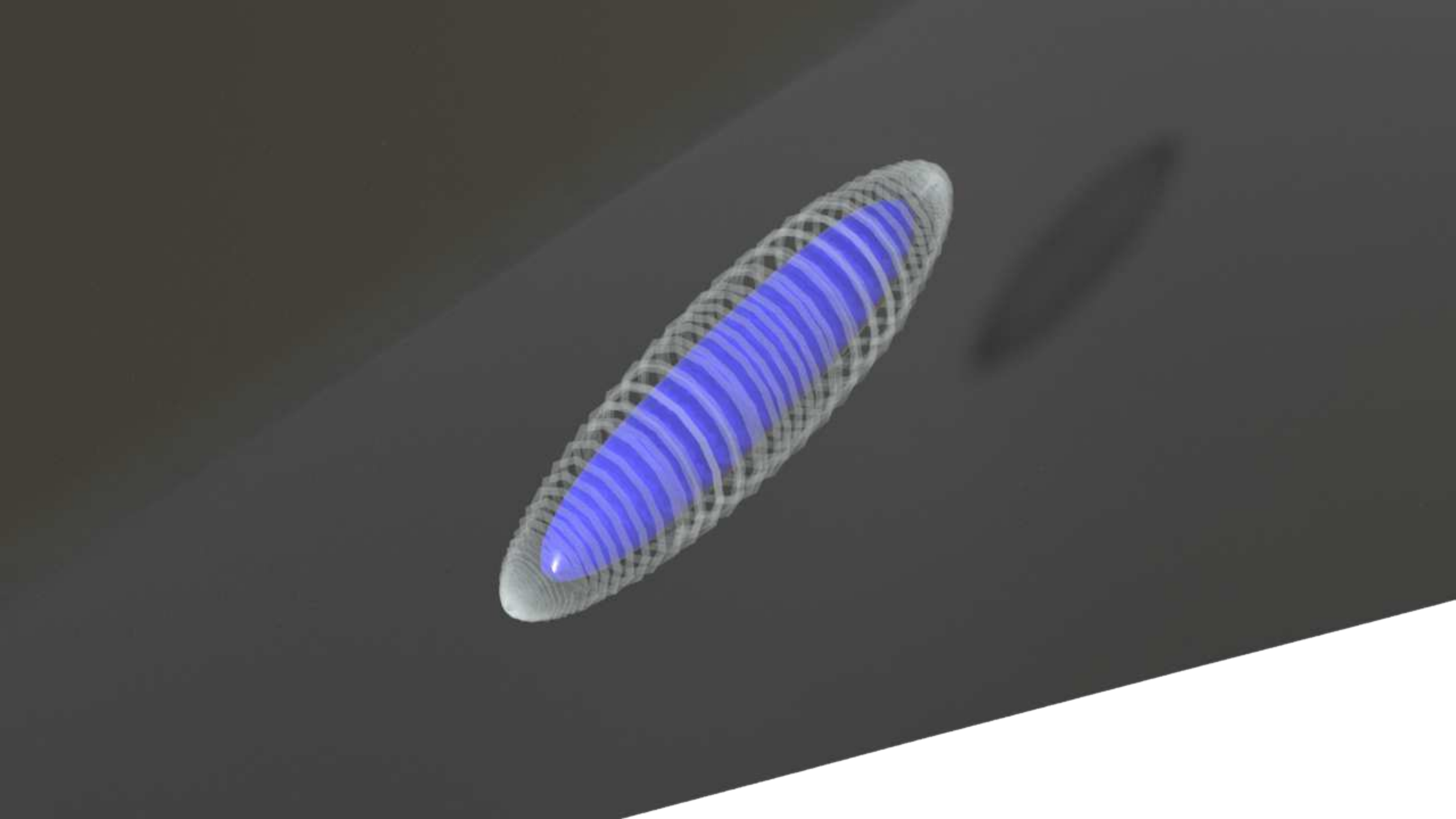}
	\caption{Streamsurfaces as level sets of an approximate first integral in the V-junction flow,  corresponding to one local $(i = 15)$ and the global minimum $(i = 25)$ of the normalized error $E_{A}$ depicted in Fig.~\ref{fig:crossPlusInvariance}(b). The duct is cut transversely to help the visualization.}
	\label{fig:junction_structures}
\end{figure}

\section{Momentum transport barriers in a turbulent channel flow}
\label{turbulentChannel}

As we have mentioned in the Introduction, the shape of streamsurfaces is observer-dependent: their geometry changes under general time-dependent rotations and translations of the observer. Therefore, unless the flow has a distinguished frame in which streamsurfaces coincide with material surfaces, streamsurfaces simply highlight features of the velocity field in a given frame, as opposed to intrinsic, observer-indifferent features of the flow of fluid particles. For studies seeking to be consistent with observed flow physics, the latter features are relevant \citep{haller2021can}. This is because flow visualization experiments with dye  particles highlight material (and hence objectively defined) transport barriers, which generally differ substantially from streamsurfaces in unsteady flows. At the same time, classic studies focused on momentum, energy or vorticity transport are inherently frame-dependent by the frame-dependence of all these Eulerian fields.

To reconcile these two objectives, \cite{haller2020objective} developed a theory of objectively defined barriers to the transport of active vector fields, such as the vorticity and momentum. \cite{aksamit2022objective} used this theory to locate instantaneous (Eulerian) frame-indifferent momentum transport barriers in 3D turbulent channel flows. Active barriers turn out to be distinguished streamsurfaces of appropriately defined steady, 3D incompressible vector fields \citep{haller2020objective}. Specifically, momentum transport barriers of a Navier--Stokes velocity field $\mathbf{v}(\mathbf{x},t)$ at time $t$ are  streamsurfaces of the barrier equation
\begin{equation}
	\mathbf{x}^{\prime}(s)=\Delta \mathbf{v}\left(\mathbf{x}(s),t\right),
	\label{eq:barrier equation_laplacian}
\end{equation}
with $s\in{\mathbb R}$ denoting a parametrization of streamlines forming the streamsurfaces and $\Delta$ the Laplace operator. For an incompressible velocity field we further note that $\Delta \mathbf{v} = -\nabla \times \boldsymbol \omega$ holds.

\cite{haller2020objective} and \cite{aksamit2022objective} detected distinguished streamsurfaces of the momentum barrier equation (\ref{eq:barrier equation_laplacian}) using active versions of some of the passive hyperbolic and elliptic LCS diagnostics reviewed by \cite{haller2015lagrangian}. These Lagrangian calculations involve arrays of trajectories and return diagnostic scalar fields for visual inspection rather than explicit streamsurfaces families.  In the following, we will use our approach for finding approximate first integrals to obtain vortical momentum barriers as level surfaces of an approximate first integral for eq.~(\ref{eq:barrier equation_laplacian}).

\subsection{Numerical data set}
\label{numericalDataset}
We study the 3D incompressible, turbulent channel flow which can be found \href{https://www.tu-ilmenau.de/universitaet/fakultaeten/fakultaet-maschinenbau/profil/institute-und-fachgebiete/fachgebiet-stroemungsmechanik/dfg-priority-programme-1881-turbulent-superstructures/benchmark-cases}{here}. The friction Reynolds number is $\mathrm{Re}_{\tau} = u_{\tau}h/ \nu= 150$, where $u_{\tau}$ is the friction velocity, $h$ is the channel half-height and $\nu$ is the kinematic viscosity. We denote by $x$, $z$ and $y$ the streamwise, spanwise and wall-normal directions, respectively. The computational domain is $L_x = 2.5\pi h$ long and $L_z = \pi h$ wide.

The number of Fourier modes used in the simulation was $192$ in both the streamwise and the spanwise direction. The number of
points in the wall-normal direction is $194$, inhomogeneously spaced so that the grid becomes more refined closer to the walls. No-slip boundary conditions were applied at the two channel walls and the governing equations were integrated forward in time with a constant pressure gradient $dp/dx = -1$ enforced so as to drive the flow through the channel. $10\,000$ velocity field snapshots were stored at multiples of the simulation time step  $\Delta t = 0.001$ once the flow reached  a statistically stationary state. We will identify the $5\,000$th snapshot of this time series with the time $t = 0$.

\subsection{Momentum barrier extraction}
We seek to uncover objectively defined instantaneous momentum transport barriers corresponding to the time $t = 0$ as specific streamsurfaces of the vector field $\Delta \mathbf{v}\left(\mathbf{x},0\right)$ in eq.~(\ref{eq:barrier equation_laplacian}). First, as illustration of prior results by \cite{haller2020objective} on this problem,  we show the  FTLE field computed for $\Delta \mathbf{v}\left(\mathbf{x},0\right)$ (called active FTLE or aFTLE) from a grid of $800 \times 1000$ initial conditions uniformly placed in the wall-normal and spanwise directions, respectively, up to $s = 10^{-3}$ (see Fig.~\ref{fig:aFTLE_recon}(a)). This plot indicates the signatures of two larger barriers to momentum transport, with the first located approximately around $[z/h,y/h] = [0.75,1.25] \times [1.5,1.75]$. The second, a mushroom-type barrier, is located around $[2.25,2.75] \times [1.25,1.75]$. A plethora of other smaller-scale structures are also present in the aFTLE plots.

\begin{figure}
	\centering
	\includegraphics[]{./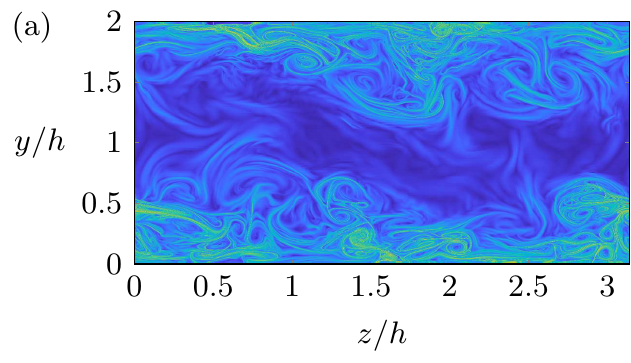}
	\includegraphics[]{./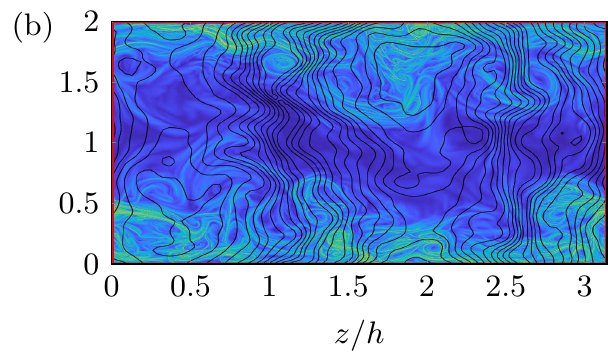}\\
	\includegraphics[]{./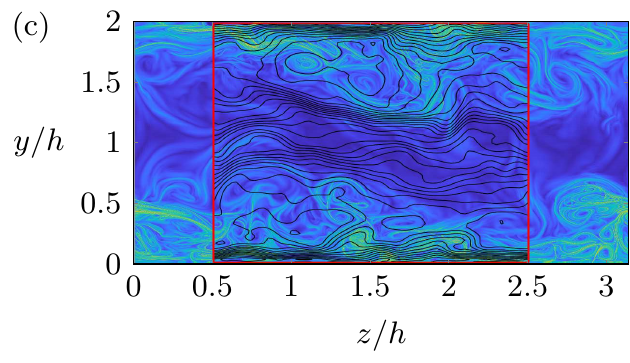}
	\includegraphics[]{./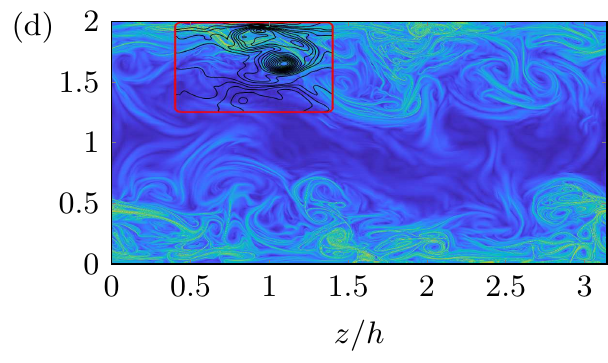}\\
	\includegraphics[width=0.39\linewidth]{./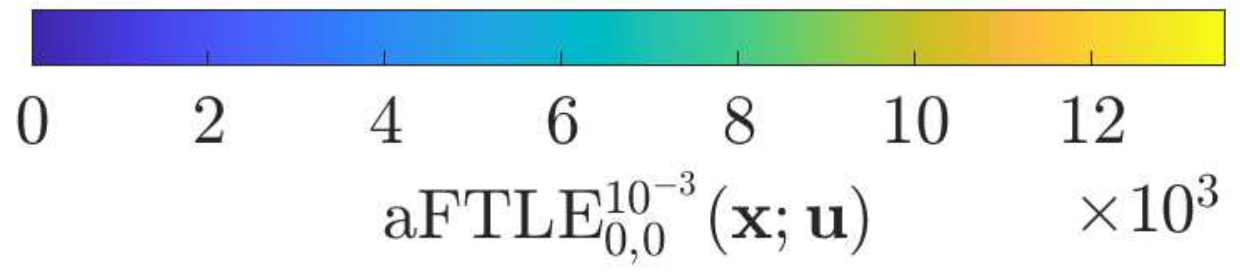}
	
	\vspace{-0.3cm}
	\caption{(a) Active FTLE (aFTLE) for the momentum barrier velocity field (\ref{eq:barrier equation_laplacian})  at $t = 0$ over a 2D cross-section of the channel at $x/h = 2$. (b), (c), (d) Projections of the approximate first integral (black lines) on the same cross-section using as computational domains for the analysis the (red) boxes $[1,3] \times [0,2] \times [0,\pi]$, $[1,3] \times [0,2] \times [0.5,2.5]$ and $[1,3] \times [1.25,2] \times [0.4,1.4]$, respectively, superimposed on the aFTLE landscape.}
	\label{fig:aFTLE_recon}
\end{figure}

We use three different computational domains to illustrate the potential of the algorithm presented in section \ref{algorithm_setup}: the boxes $[1,3] \times [0,2] \times [0,\pi]$, $[1,3] \times [0,2] \times [0.5,2.5]$ and $[1,3] \times [1.25,2] \times [0.4,1.4]$ with $80 \times 85 \times 80$ points in the streamwise, wall-normal and spanwise direction, respectively. In each of these boxes, the points are evenly spaced and a tri-linear interpolation scheme is employed to obtain the necessary $\Delta \mathbf{u}$ values. Moreover, we perform another rescaling of the dummy time $s$ to pointwise normalize the right-hand side of eq.~(\ref{eq:barrier equation_laplacian}). This eliminates the high norms of $\Delta \mathbf{u}$ near the wall in the expression (\ref{eq:closest_homo_system}). If this normalization is not used, our algorithm tends to miss the turbulent regions close to the wall and touches more on the central part of the channel.

For $N = 13$ $(9\,170$ Fourier modes$)$, $20$ isovalues of the approximate first integral are depicted in Figs.~\ref{fig:aFTLE_recon}(b), (c) and (d) on the $x/h = 2$ plane. We note the progressive improvement in the  reconstruction of the first structure discussed above, as the computational domain starts to close in on it. This observation prompts us to utilize a greedy algorithm compartmentalizing the domain into smaller, overlapping domains. The size of these domains will be dictated by the scale of the structures to be extracted.

Our algorithm so far generates families of streamsurfaces. We now seek to locate outermost members of nested elliptical barrier families. To this end, for each of the overlapping computational boxes, we use $n_{iso}$ different values to produce isosurfaces in the interval $\left[ \left| H \right|_{min},\left| H \right|_{max}\right]$. We then group the resulting isosurfaces into foliations (see Fig.~\ref{fig:AFTLE_mush_cut}(f) for two such foliations) or more generally families by finding the ones that either lie entirely inside others or have an intersection volume above a certain threshold, respectively. Subsequently, we discard the foliations or families that have a number of members less than a fixed percentage of $n_{iso}$. We note that for the isosurface extraction we use MATLAB's built-in function which is based on the "Marching cubes" algorithm \citep{lorensen1987marching}. Other techniques generating isosurfaces (see, e.g., the concept of contour trees \citep{carr2003computing}) are known to produce robust structures in a more efficient fashion and could certainly be used as viable alternatives.

We also follow \cite{haller16} and define the convexity deficiency of a closed surface as the ratio of the volume between the surface and its convex hull to the volume enclosed by the surface. Here, however, we will only discard the most non-convex surfaces. 

We summarize the main steps of the computations we described above in Algorithm \ref{algo}. In the next section, we will use this algorithm to uncover momentum transport barriers tied to different scales in the turbulent channel flow.

\noindent \begin{algorithm}
		\caption{Extraction of instantaneous barriers to momentum transport} 	\label{algo}{} 	\textbf{Input:} A snapshot of the $\Delta \mathbf{v}$ field defined over a 3D Cartesian grid. Lengths $L_x$, $L_y$ and $L_z$ as well as number of points $N_x$, $N_y$ and $N_z$. Different parameters $n_{iso}$, $p_b$, $d_{max}$ and $p_f$.
		\begin{enumerate}[leftmargin=.75cm,labelsep=.25cm]
			\item Partition the original grid into smaller, overlapping rectangular domains with dimension $L_x \times L_y  \times L_z$.
			\item Use tri-linear interpolation to obtain the values of $\Delta \mathbf{v}/ \left| \Delta \mathbf{v} \right|$ on a grid of $N_x \times N_y \times N_z$ points inside every domain. 
			\item Form the coefficient matrix $\mathbf{C}$ and obtain the eigenvector of $\mathbf{A}$ corresponding to its smallest eigenvalue.
			\item Obtain a 3D scalar field $\left| H \right|$ serving as an approximate first integral inside each domain and construct $n_{iso}$ isosurfaces out of it.
			\item Discard the surfaces that have at least one point within $p_b$ percent of either $L_x$, $L_y$ or $L_z$ of each domain's face boundaries, respectively. Also, discard all the surfaces that have a convexity deficiency ratio larger than a threshold $d_{max}$.
			\item Classify the remaining structures into foliations or families after locating the ones that either lie entirely inside others or have an intersection volume above a certain threshold, respectively. Keep only the foliations or families with more than $p_f \cdot n_{iso}$ members.
			\item Plot the member of each foliation or family with the largest volume.
		\end{enumerate} 
			\textbf{Output:} Almost convex, frame-indifferent 2D vortical barriers to momentum transport in the velocity field $\mathbf{v}$.
	\end{algorithm} 

\subsection{Results}
\subsubsection{Channel partition into large subdomains}
\label{partitionLarge}
To illustrate our algorithm, we use the entire computational domain of the Direct Numerical Simulation (DNS) described in section \ref{numericalDataset}. First we partition it in the following overlapping subdomains
$ [x_{1_i},x_{2_i}] \times [y_{1_j},y_{2_j}] \times [z_{1_k},z_{2_k}]$ with $x_1 \in \{-0.5,0.5,... ,5.5\}$, $x_2 \in \{1.5,2.5,... ,7.5\}$, $y_1 \in \{0,0.5,1\}$, $y_2 \in \{1,1.5,2\}$, $z_1 \in \{0,1,2\}$, $z_2 \in \{1.5,2.5,3.5\}$,
$i = 1, \ldots , 7$, $j = 1, 2, 3$, $k = 1, 2, 3$. For each of these domains, we use a grid of $70 \times 75 \times 70$ evenly spaced points on which we compute the normalized velocity Laplacian. For the approximate first integral we use $N=13$ (or $9\,170$ Fourier modes). This comes as a necessary trade-off between computational time and surface accuracy after we observed that the reconstructed structures do not vary significantly when we use $N=12$ or $N=14$. With it, we extract $n_{iso} = 40$ isosurfaces in every domain. Of them, we discard the surfaces described in steps $5$ and $6$ of the Algorithm \ref{algo} by employing $p_b = 2 \%$, $p_f = 10 \%$ and $d_{max} = 20 \%$.
 
The results of this computation are shown in Fig.~\ref{fig:structures_box_2}. We observe that the entire channel is populated by a host of different momentum transport barriers, the majority of which have a clear quasi-streamwise direction. These structures are reminiscent of those investigated in studies of wall-bounded flows \citep{robinson1991coherent}. The structures we obtain, however, are more evenly scattered throughout the channel, appearing not only near the channel walls but also in the more quiescent central part.
This is in stark contrast to typical predictions from classic vortex  criteria  for these flows \citep{jcr1988eddies, chong1990general, hussain1995}. This phenomenon, i.e., structures penetrating into and spanning the bulk flow region, has already been noted in the literature \citep{haller2020objective,aksamit2022objective} by analyzing the 2D signatures of these structures via diagnostic fields. Here, however, we explicitly construct these structures via the streamsurfaces surrounding them.
\begin{figure}
	\begin{centering}
		\includegraphics[width=0.85\linewidth]{./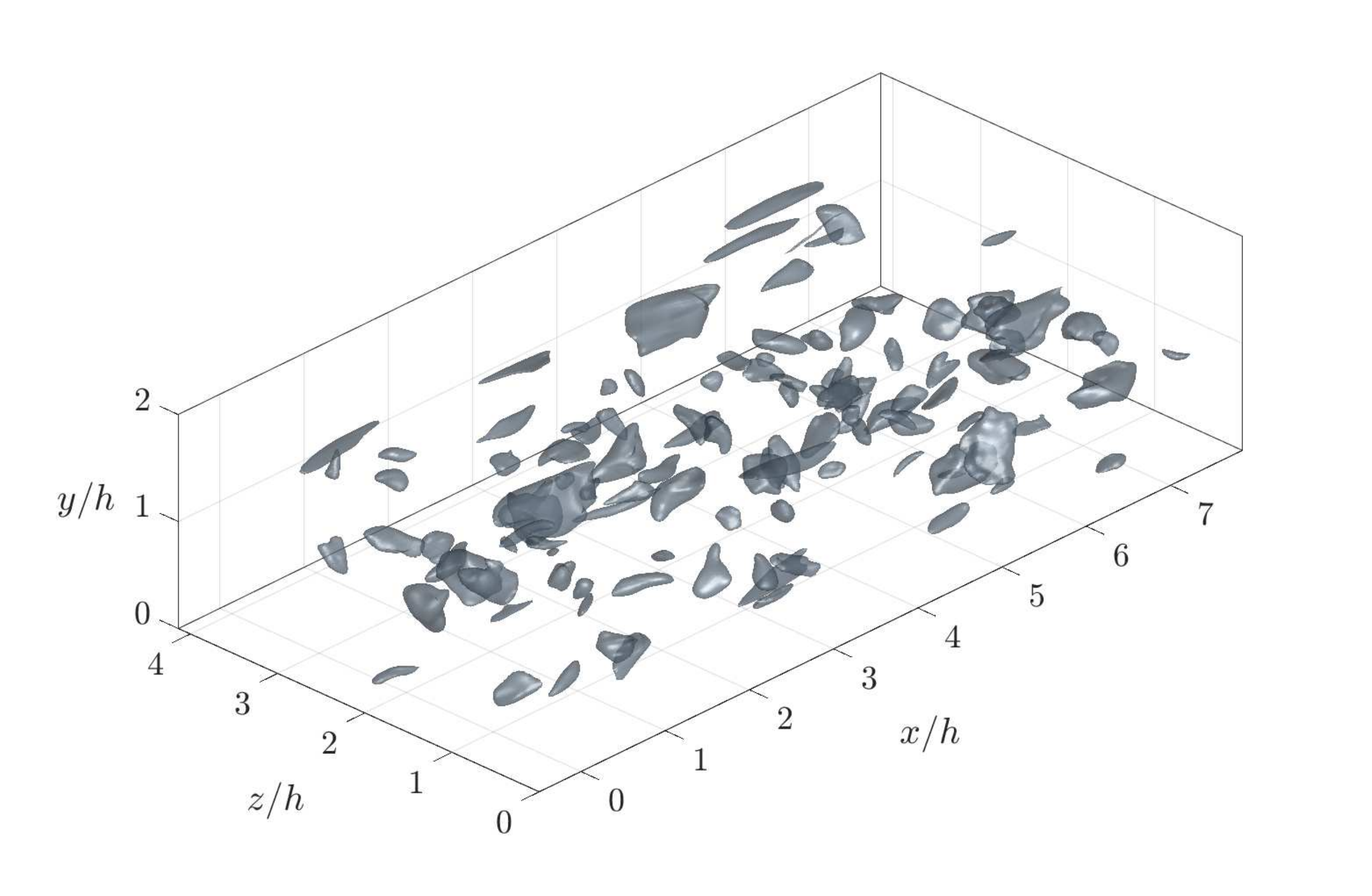}
		\par\end{centering}
	\caption{Instantaneous barriers to momentum transport in the 3D turbulent channel flow, derived using the larger partition described in section \ref{partitionLarge}.}
	\label{fig:structures_box_2}
\end{figure}

\subsubsection{Channel partition into small subdomains}
\label{partitionSmall}
We now employ smaller partitions of the channel, using the overlapping subdomains $ [x_{1_i},x_{2_i}] \times [y_{1_j},y_{2_j}] \times [z_{1_k},z_{2_k}]$ with $x_1 \in \{-0.5,0,... ,6.5\}$, $x_2 \in \{0.5,1,... ,7.5\}$, $y_1 \in \{0,0.5,1\}$, $y_2 \in \{1,1.5,2\}$, $z_1 \in \{0,1,2\}$, $z_2 \in \{1.5,2.5,3.5\}$,
$i = 1, \ldots , 15$, $j = 1, 2, 3$, $k = 1, 2, 3$, while keeping all the other parameters the same as in section \ref{partitionLarge}.

The results for this computation are shown in Fig.~\ref{fig:structures_box_1}. In this case, we again observe momentum transport barriers that tend to align with the streamwise direction but in larger numbers, when compared with Fig.~\ref{fig:structures_box_2}. Another notable difference is that the reconstructed structures show generally smaller scales than before. This demonstrates that our domain partition algorithm acts as a filter of various scales. This provides a natural way out of the occlusion quandary that besets many visualization approaches while retaining the algorithm's capacity to capture the smallest structures, delineated in the aFTLE plots, by using finer partitions. In the next subsection, we will demonstrate this capacity by pinpointing structures that the active diagnostics-based methods in \cite{haller2020objective} and \cite{aksamit2022objective} would likely miss. We conclude this section with Fig.~\ref{fig:d1_2_channel} depicting a well-formed spectral gap between the smallest and the second smallest eigenvalue of $\mathbf{A}$ for all the computational boxes. Finally, we carry out an unsteady barrier analysis for times varying over the interval $[t_{0},t_{1}] = [0,0.5]$, as presented in the supplementary Movie 2.mp4. 
In it, the fact that some barrier surfaces seem to appear, then disappear and perhaps reappear at a later time is a direct consequence of the flat convexity deficiency threshold we allow for these surfaces (20\%) throughout the channel. We expect this phenomenon to be resolved when the purely Lagrangian approach of \cite{haller2020objective} is combined with our algorithm in follow-up work.
\begin{figure}
	\begin{centering}
		\includegraphics[width=0.85\linewidth]{./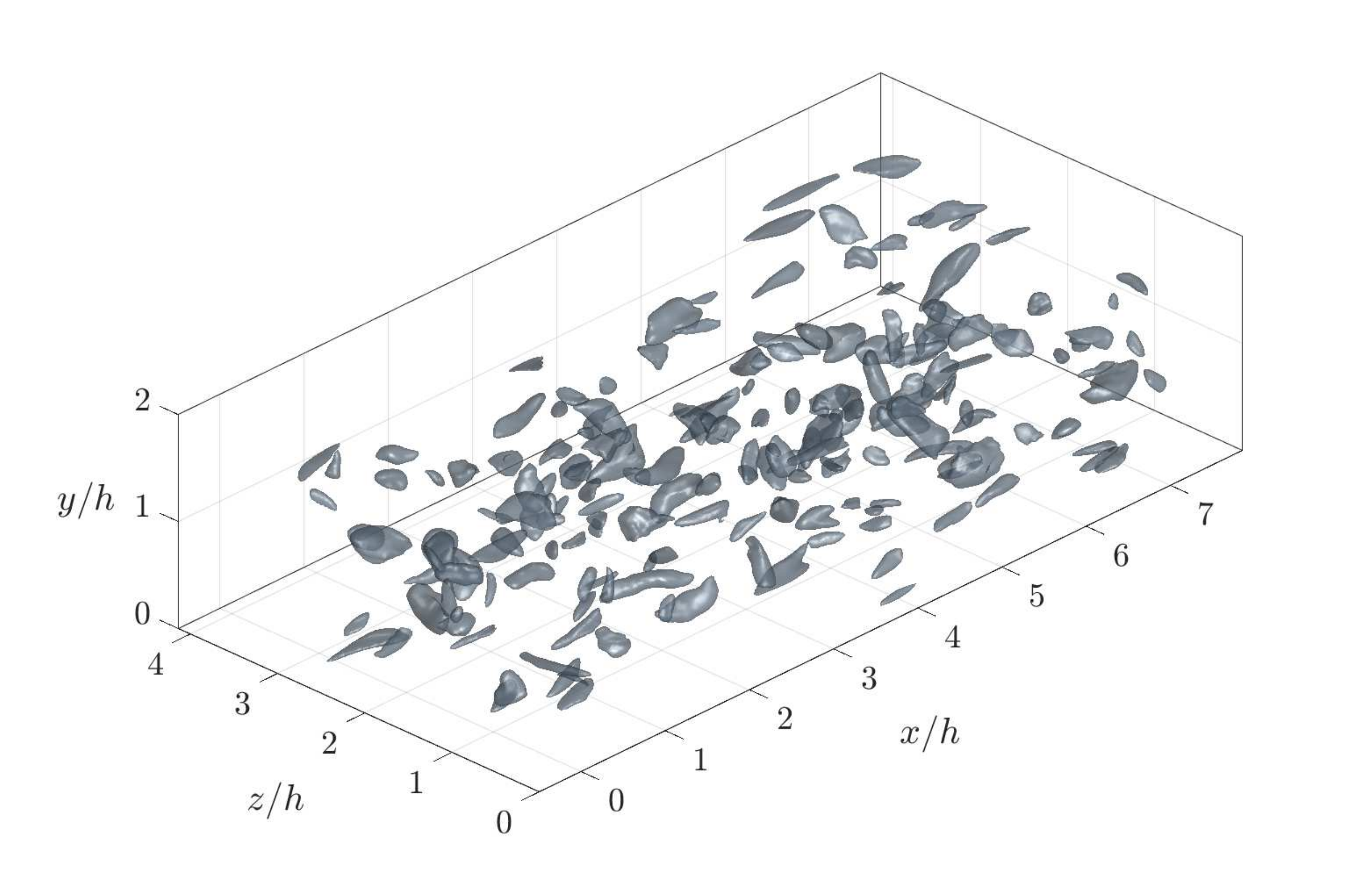}
		\par\end{centering}
	\caption{Instantaneous barriers to momentum transport in the 3D turbulent channel flow, derived using the smaller partition described in section \ref{partitionSmall}. The file Movie 2.mov of the supplementary materials portrays the extracted structures for the time interval $[t_{0},t_{1}] = [0,0.5]$.}
	\label{fig:structures_box_1}
\end{figure}

\begin{figure}
	\begin{centering}
		\includegraphics[width=0.5\linewidth]{./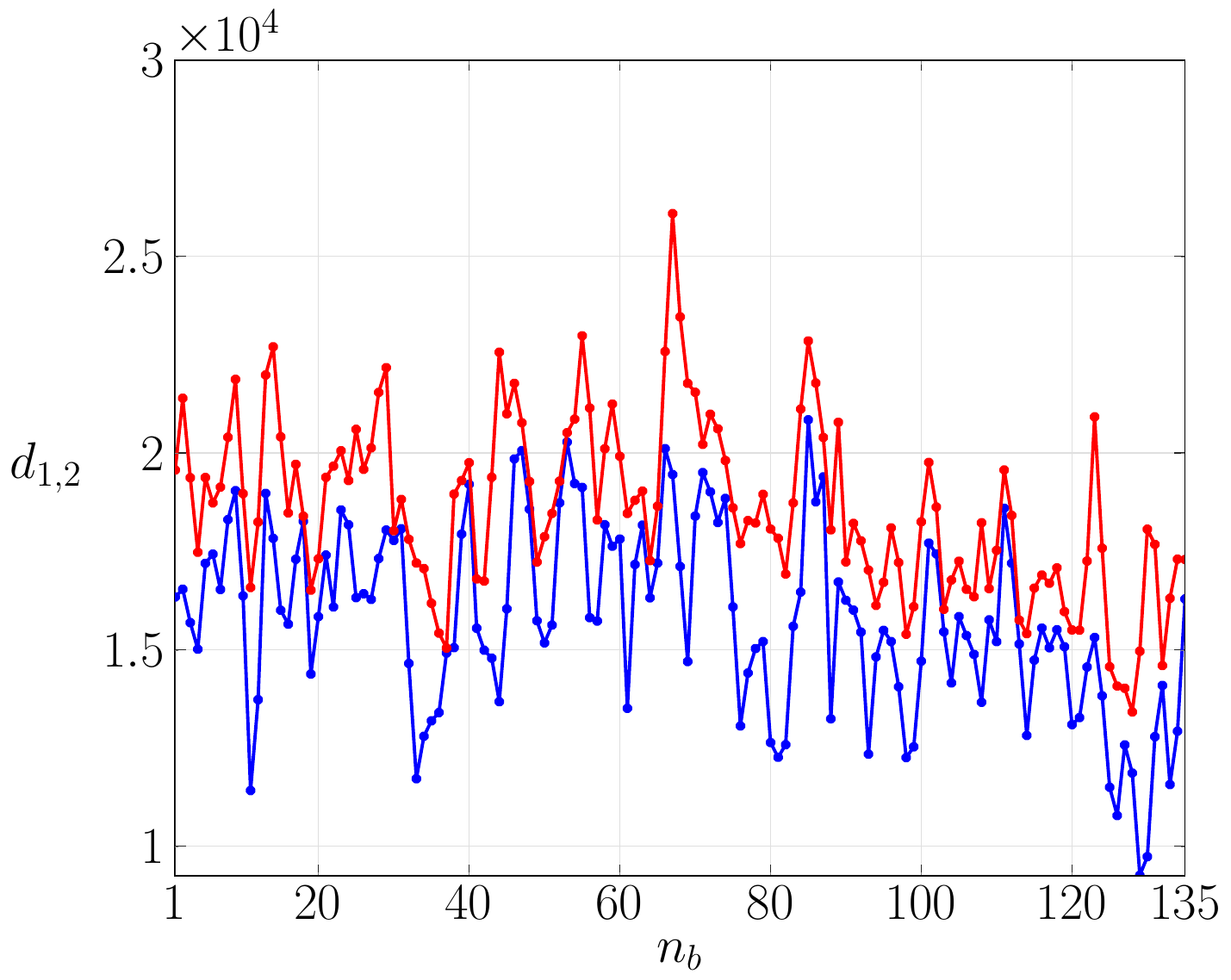}
		\par\end{centering}
	\caption{Distribution of the smallest (blue) and the second smallest (red) eigenvalues of $\mathbf{A}$ over the different computational domains used for the barriers surfaces presented in Fig.~\ref{fig:structures_box_1}.}
	\label{fig:d1_2_channel}
\end{figure}

\subsubsection{Uncovering hidden structures}
\label{hiddenStructures}
We now focus on the region we highlighted in Fig.~\ref{fig:aFTLE_recon}(d). The signature of a vortical structure is evident from the aFTLE landscape and it is corroborated by the isolines of an approximate first integral. Zooming in on the vicinity of this structure in Fig.~\ref{fig:structures_box_1}, however, reveals the existence of another structure (see Fig.~\ref{fig:AFTLE_mush_cut}), showing no imprint on the aFTLE landscape.

To investigate this, we populate the neighborhood around these two structures with 2D cross-sections and compute the aFTLE field for each of them. At $x/h = 1$ (Fig.~\ref{fig:AFTLE_mush_cut}(a)), we note two mushroom-like pairs of structures, one larger and one smaller, but neither of the two display prominent closed regions that could signal a vortex. As we move towards larger $x/h$ values, however, the aFTLE topography changes drastically. At $x/h = 1.4$ (Fig.~\ref{fig:AFTLE_mush_cut}(b)), the larger structure starts to fall apart before it completely disintegrates in the following cross-sections. In contrast, the smaller structure develops a salient closed loop, delineated by an aFTLE ridge, which is coincident with the largest of the reconstructed surfaces. In the next two cross-sections (Figs.~\ref{fig:AFTLE_mush_cut}(c) and (d)), the smaller of the two reconstructed surfaces comes into play as we further observe both parts of the vortex pair following closely the aFTLE ridges downstream. Finally, in the last cross section (Fig.~\ref{fig:AFTLE_mush_cut}(e)), the vortex pair pattern is broken up in agreement with the reconstructed vortex pair coming to a halt.

This example epitomizes two notable features of Algorithm \ref{algo}. First, using finer partitions results in the reconstruction of smaller structures consistent with the well-known hierarchy of coherent structures in turbulence. Second, this reconstruction takes place in an almost automated fashion with minimal reliance on user-defined parameters and with no need to advect trajectories of the barrier equation.
\begin{figure}
	\centering
	\includegraphics[width=0.45\linewidth]{./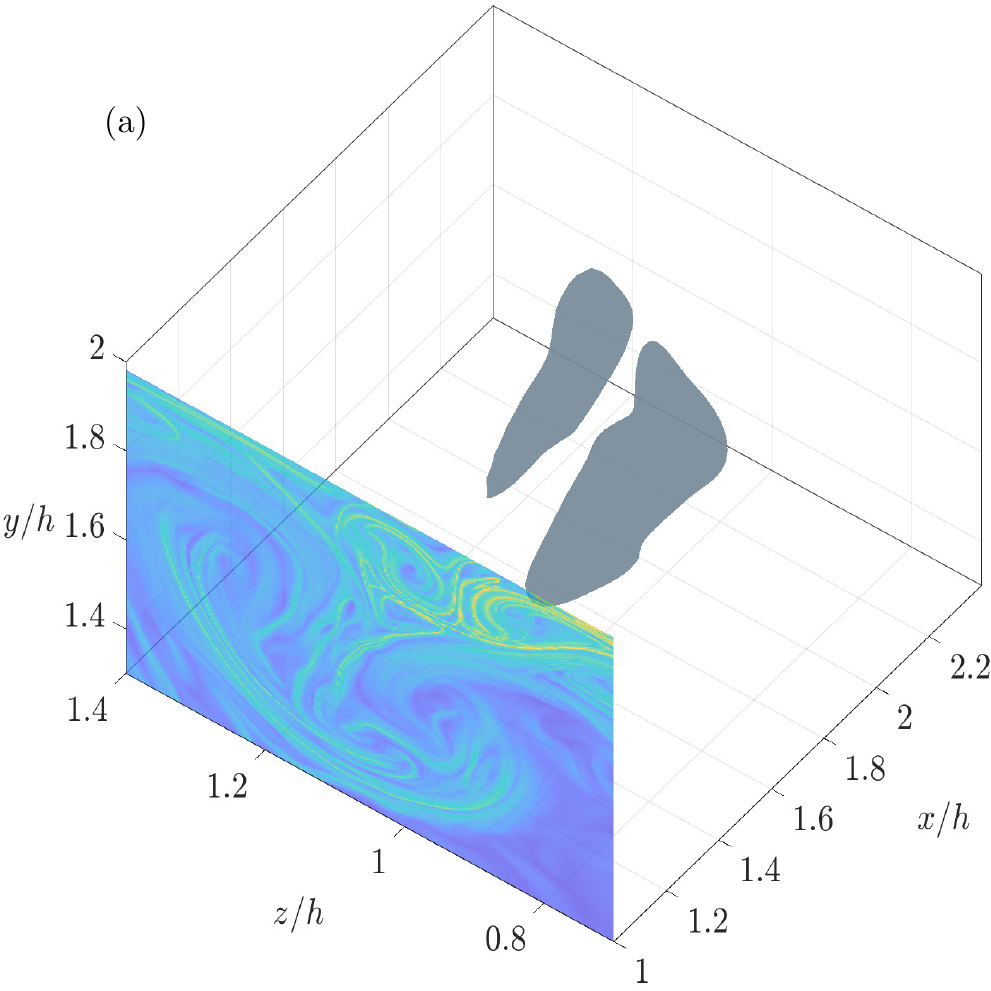}
	\includegraphics[width=0.45\linewidth]{./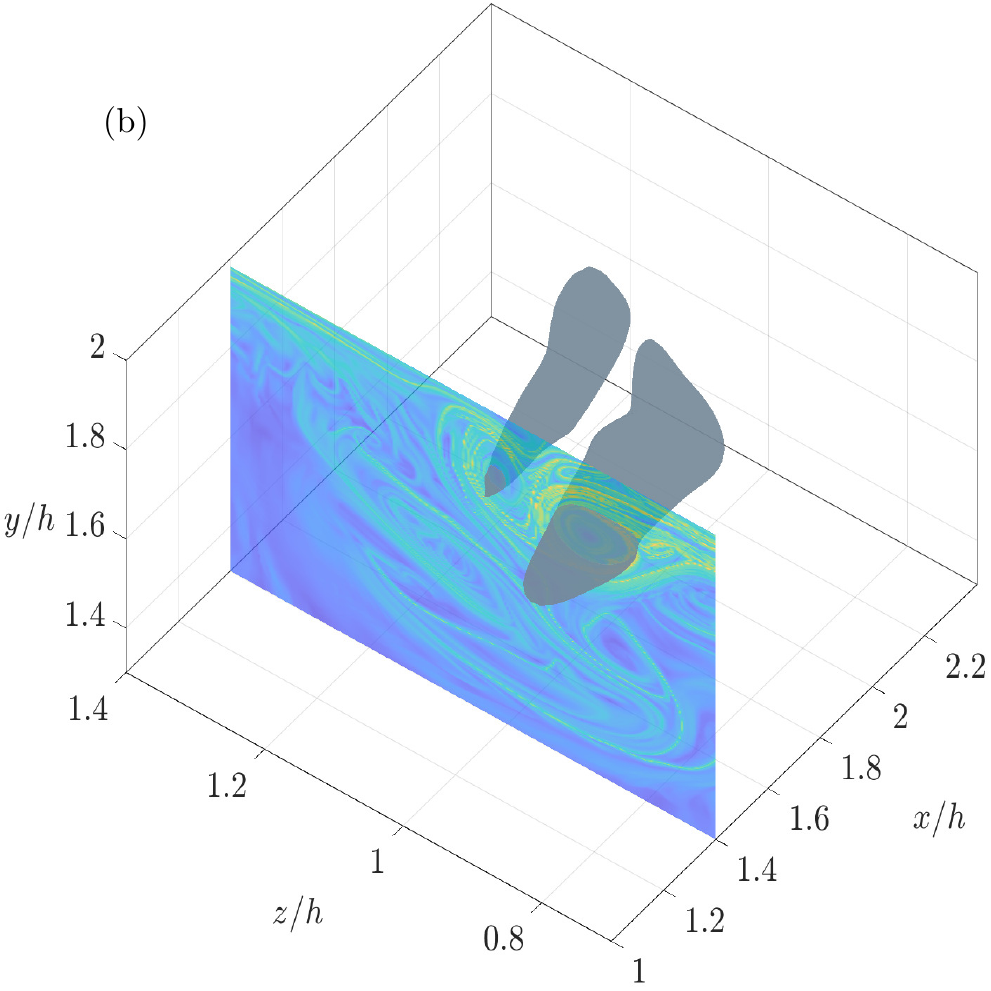}\\
	\includegraphics[width=0.45\linewidth]{./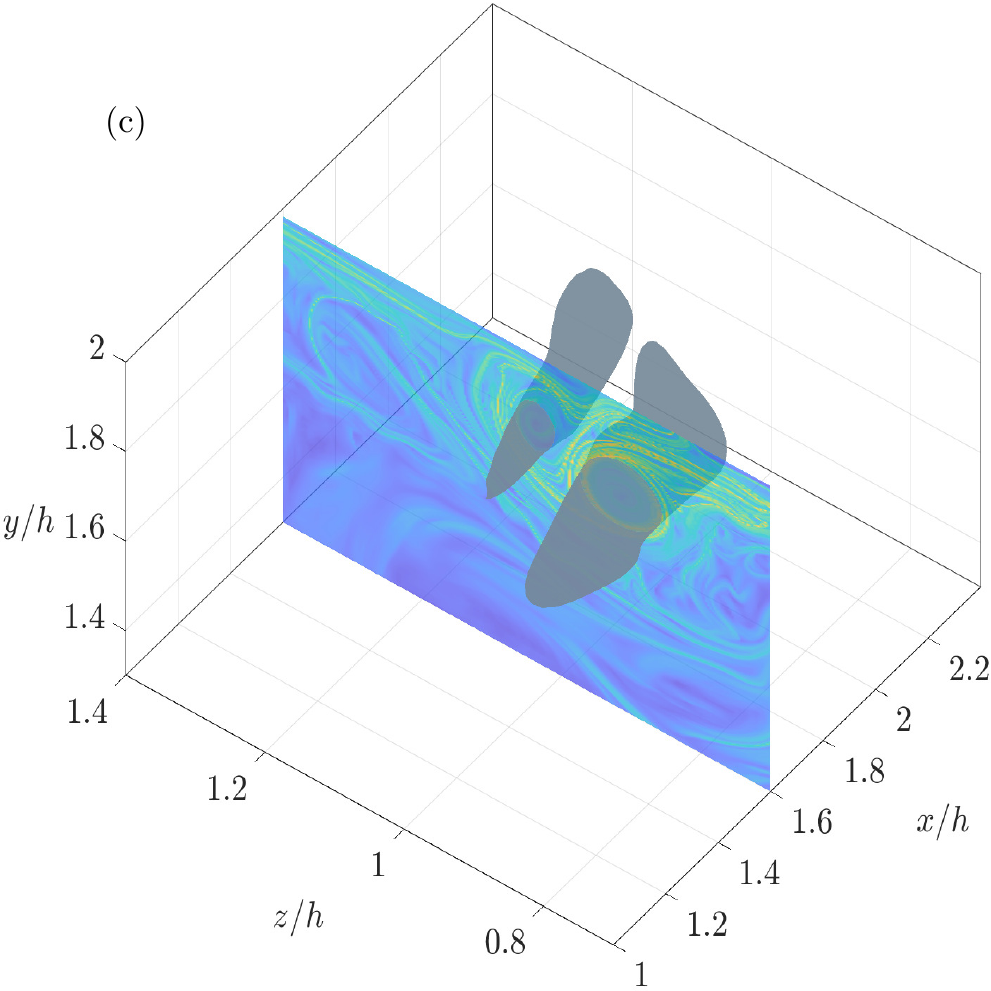}
	\includegraphics[width=0.45\linewidth]{./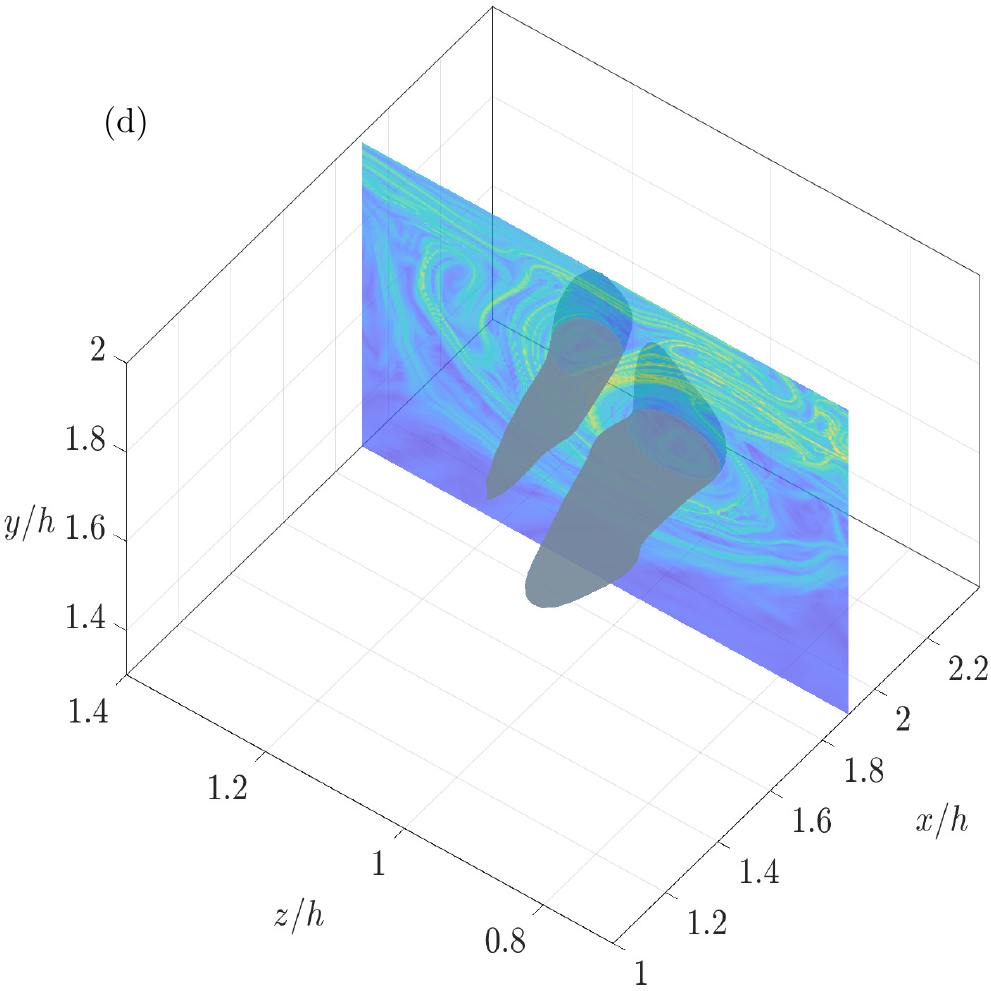}\\
	\includegraphics[width=0.45\linewidth]{./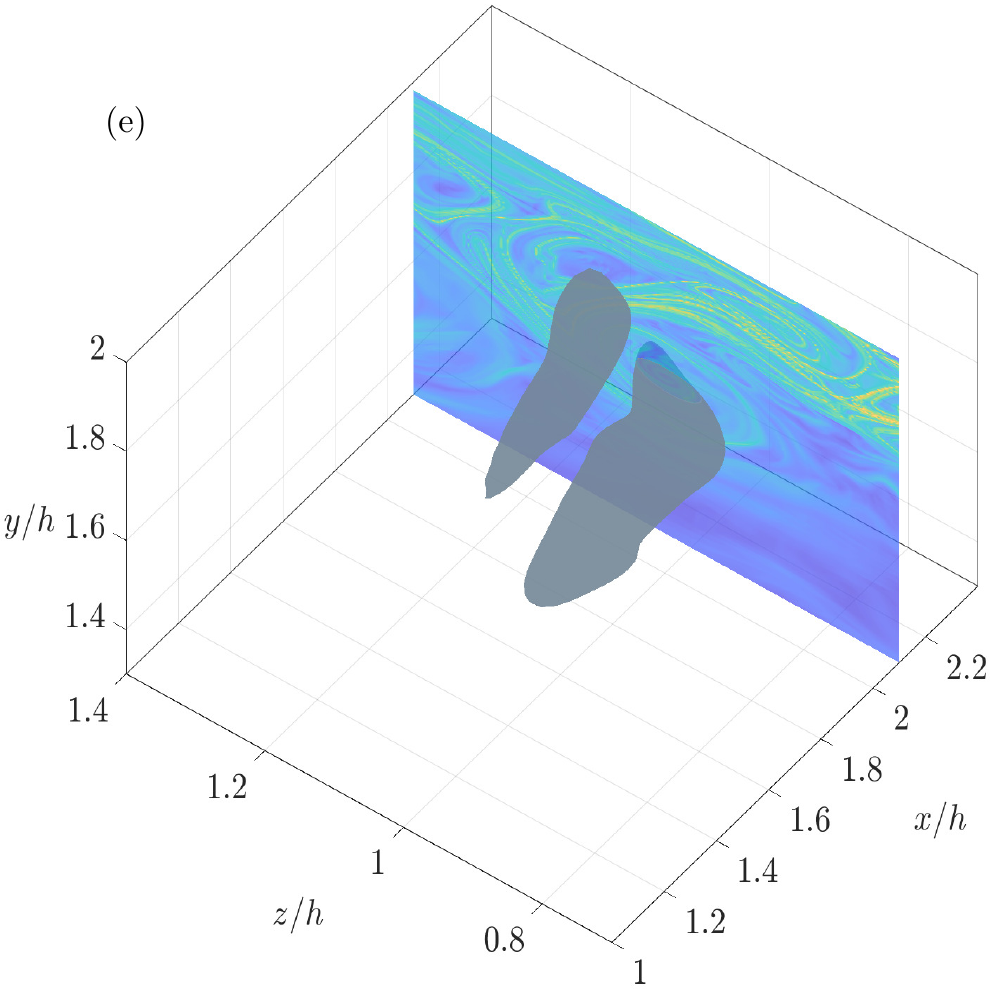}
	\includegraphics[width=0.45\linewidth]{./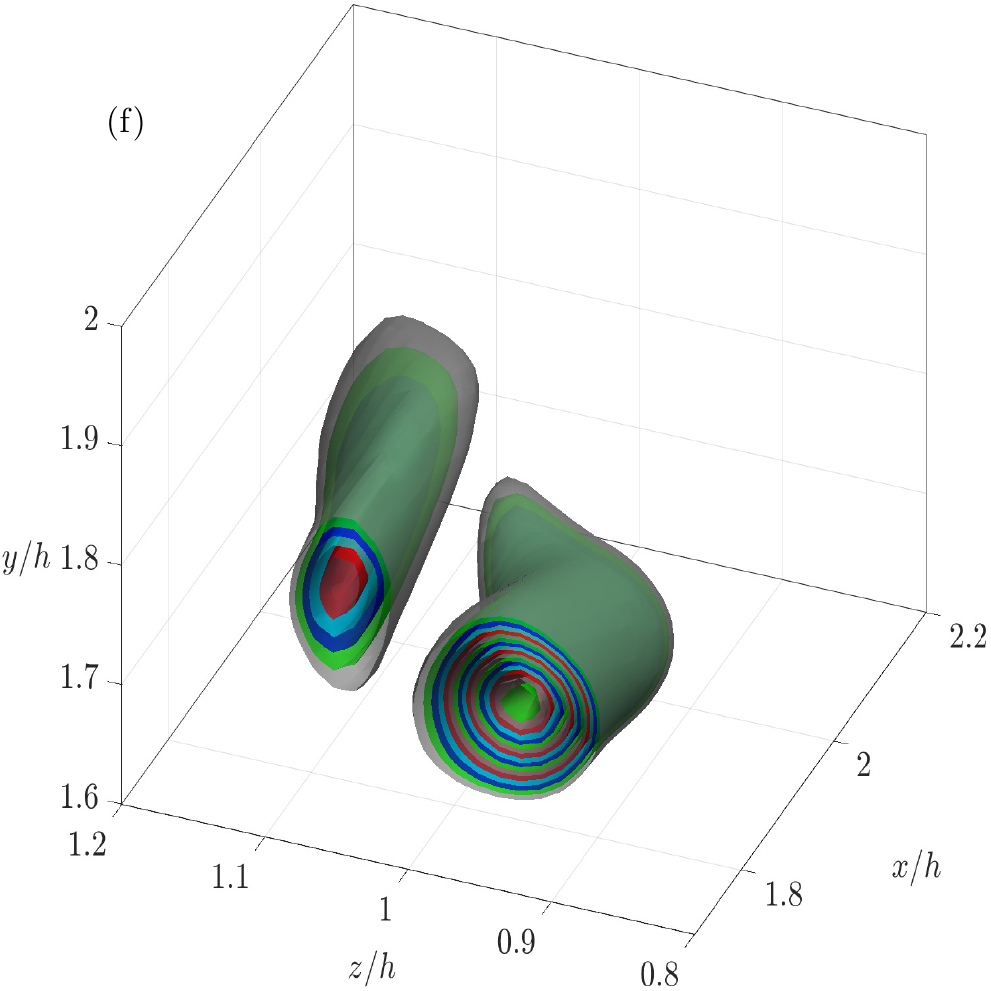}

	\caption{Two branches of a mushroom-like, objective vortex (in gray) captured by Algorithm \ref{algo} in the 3D turbulent channel flow. The branches are superimposed on 2D cross-sections of the aFTLE field at $x/h = 1, 1.4, 1.6, 1.9$ and $2.1$ in subfigs.~(a), (b), (c), (d) and (e), respectively. (f) A transverse cut at $x/h = 1.7$ revealing the foliations of structures constituting the two branches of the mushroom-like vortex.}
	\label{fig:AFTLE_mush_cut}
\end{figure}

\section{Conclusions}
We have introduced a minimization principle to find an instantaneous approximate first integral of a given 3D vector field. Level surfaces of this approximate first integral are expected to provide automated approximations to vortical streamsurfaces highlighting the instantaneous elliptic regions of the vector field. We have also proposed and tested a numerical algorithm  for solving this minimization problem by means of total least squares and ridge regression.  This algorithm has performed well on  various solutions of the steady 3D Euler equations on triply periodic and non-periodic domains. For steady velocity fields, the parametrized family of streamsurfaces rendered by our approach approximate  elliptic (toroidal or cylindrical) LCSs without the  need to advect a large number of trajectories required by Lagrangian methods \citep{haller16}. We have also illustrated on a 3D junction flow the applicability of our algorithm for velocity fields defined as numerical datasets produced from CFD simulations.

We have additionally used the described algorithm to extract objectively defined instantaneous momentum transport barriers in a 3D turbulent channel flow in an almost-automated fashion. Such barriers are streamsurfaces of an incompressible barrier equation defined by the Laplacian of the velocity field \citep{haller2020objective,aksamit2022objective}. We have extracted vortical momentum transport barriers without the need to advect arrays of trajectories and found such barriers across multiple spatial scales.

The generality of the presented algorithm provides a fertile ground for its use in the visualization of the important features in a number of diverse applications. A straightforward example is the barrier equation derived by \cite{haller2020objective} for the transport of vorticity.
Furthermore, the purely Lagrangian approach of \cite{haller2020objective} could be combined with our algorithm to produce LCSs for unsteady simulations. This method, unlike the instantaneous approach used for the production of Movie 2.mp4, will yield smoothly varying structures that are also experimentally verifiable.
Another applicatin of our algorithm could be the visualization of the magnetic field resulting from magnetohydrodynamic turbulence simulations \citep{biskamp2003magnetohydrodynamic}. Future research can also apply the algorithm to direction fields like those emerging from studies of 3D LCSs \citep{oettinger2016autonomous}, to 3D material barriers to diffusive transport \citep{haller2018material} or to uncover LCSs acting as barriers in diverse applications \citep{chem2021}. Such studies could extend the automated extraction of 2D LCSs and transport barriers \citep{2020vortex,barrierToolManual} to 3D flows, facilitating their integration to numerical simulation codes. 

Alternative solutions to the same optimization problem for an approximate first integral are certainly viable to investigate in future work. These could  involve different basis functions (such as Chebyshev polynomials, wavelets  or radial basis functions) instead of Fourier expansions. Techniques to mitigate the Gibbs phenomenon tied to non-periodic domains (Gegenbauer polynomials) \citep{gottlieb1997gibbs} are also feasible to consider. Similarly, the effect of using Chebyshev nodes, instead of the uniform grids we considered here, to reduce the impact Runge's phenomenon has near the boundaries could be examined.

\section*{Acknowledgements}
S.K.~and G.H.~acknowledge financial support from the Priority Program SPP 1881 (Turbulent Superstructures) of the German National Science Foundation (DFG). We are grateful to Dr.~Davide Gatti for providing us with the 3D turbulent channel flow data set he originally generated as a benchmark case for the same program.


\section*{Code and data availability}
Example scripts generating the approximate first integral for the ABC flow are available at \href{https://github.com/katsanoulis/Approximate\_First\_Integral/}{https://github.com/katsanoulis/Approximate\_First\_Integral/}. The V junction and turbulent channel flow data sets as well as their respective surface extraction scripts are available upon request from the first author.

\section*{Declaration of interests}
The authors report no conflict of interest.

\section*{Supplementary movies}
\label{supplementaryMaterials}
Supplementary movies are available \href{https://drive.google.com/drive/folders/100RvCuSiVFVdCFdnsqUjU0dLrfETHn87?usp=sharing}{here}.

\appendix
    \section{Typical flow visualization techniques}
\label{app:faultyCases}
In this section, we apply classic vortex visualization methods in some of the flows for which we have constructed approximate first integrals. First, we test two isosurface-based criteria that are widely used in the literature to locate vortical structures, i.e., the $\lambda_2$-criterion of \cite{hussain1995} and the $Q$-criterion of \cite{jcr1988eddies}. Both of them are defined from the decomposition
\begin{equation}
\nabla \mathbf{v} = \mathbf{S} + \boldsymbol{\Omega},
\label{eq:vel_grad}
\end{equation}
where $\mathbf{S}=\frac{1}{2}[\nabla \mathbf{v} + (\nabla \mathbf{v})^\top]$ is the rate-of-strain tensor and $\bm{\Omega}=\frac{1}{2}[\nabla \mathbf{v} - (\nabla \mathbf{v})^\top]$ is the vorticity tensor.

According to the $\lambda_2$-criterion, vortical regions coincide with domains where
\begin{equation}
\lambda_2(\mathbf{S}^2 + \boldsymbol{\Omega}^2) < 0
\label{eq:lambda2_cr}
\end{equation}
with $\lambda_2(\mathbf{B})$ denoting the intermediate eigenvalue of the symmetric tensor $\mathbf{B}$. Similarly, the $Q$-criterion identifies vortical regions as those where
\begin{equation}
Q = \frac{1}{2} \left[ \lVert \bm{\Omega} \rVert^2 - \lVert \mathbf{S} \rVert^2 \right] > 0
\label{eq:Q_cr}
\end{equation}
with $\lVert \mathbf{B} \rVert$ denoting the Frobenius norm of $\mathbf{B}$.

We stress here that both of these criteria are only Galilean-invariant and, thus, depend on the frame of reference they are used on. Of equal importance is that, according to their definitions, they highlight vortical \emph{regions} rather than surfaces. To bypass this, a specific value is usually chosen and the resulting isosurfaces are identified as the so-called vortices. Such a choice, however, would be justifiable, if the chosen isovalue was close to zero, as the original definitions postulated. Unfortunately, in typical flow visualizations, these isovalues are tuned to arbitrary values to match user expectations regarding the vortical features.

We highlight this in Fig.~\ref{fig:ABC_Lambda2_failure} using the $\lambda_2$-criterion for the non-integrable ABC flow.
Specifically, we use one value close to zero (Fig.~\ref{fig:ABC_Lambda2_failure}(a)) and one which corresponds to a drastically different topology for the resulting structures (Fig.~\ref{fig:ABC_Lambda2_failure}(b)).
We note that both of these values fail to capture even a single family of invariant tori.
Indeed, the reconstructed surfaces tend to either have holes in the vicinity of the invariant tori (Fig.~\ref{fig:ABC_Lambda2_failure}(a)) or be misaligned with the vortex cores (Fig.~\ref{fig:ABC_Lambda2_failure}(b)).
This is in stark contrast to the approximate-first-integral-based tori we have found in the subsection \ref{nonintegrable_ABC}.
\begin{figure}
	\centering
	\includegraphics[width=.48\linewidth]{./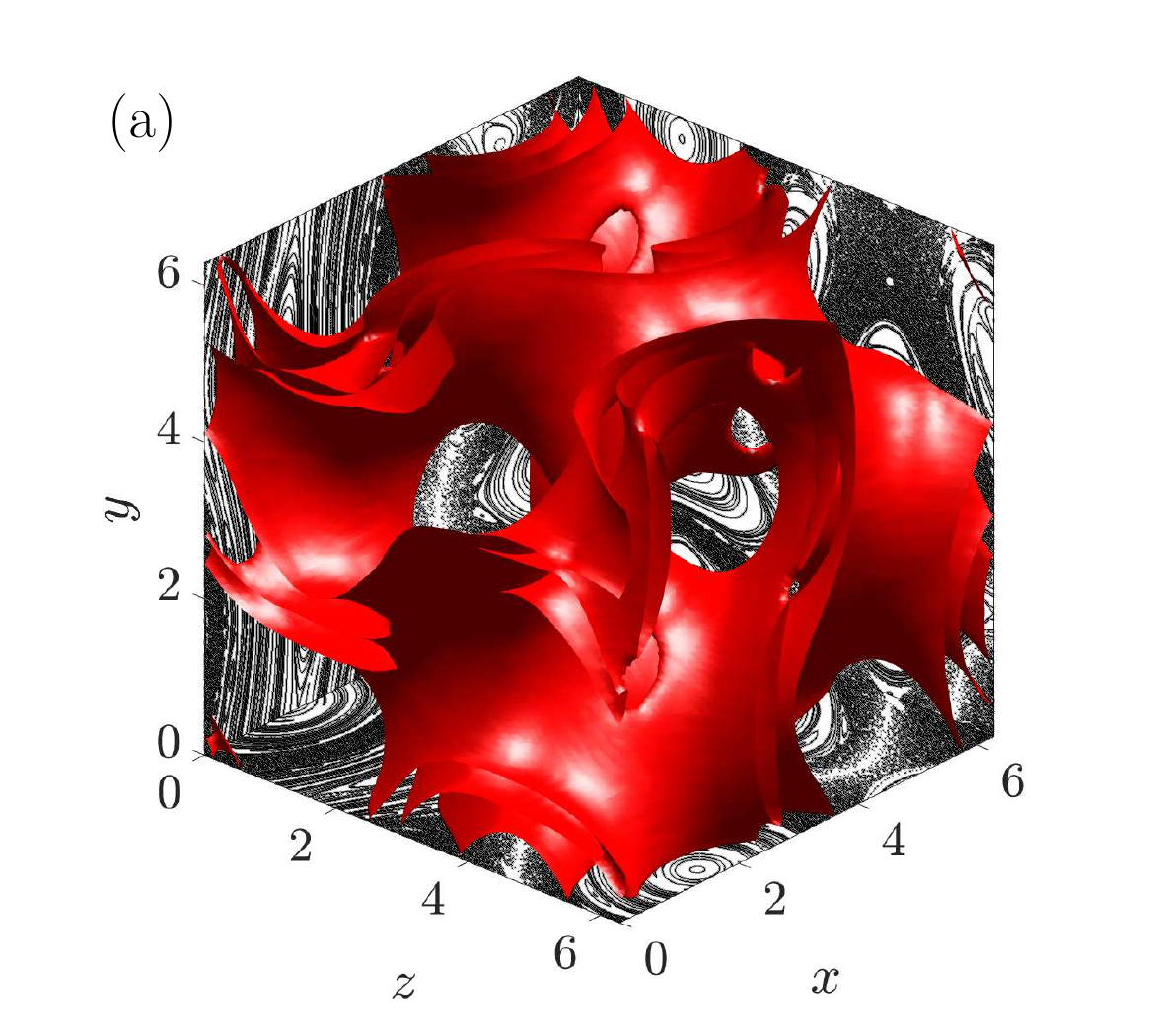}
	\includegraphics[width=.48\linewidth]{./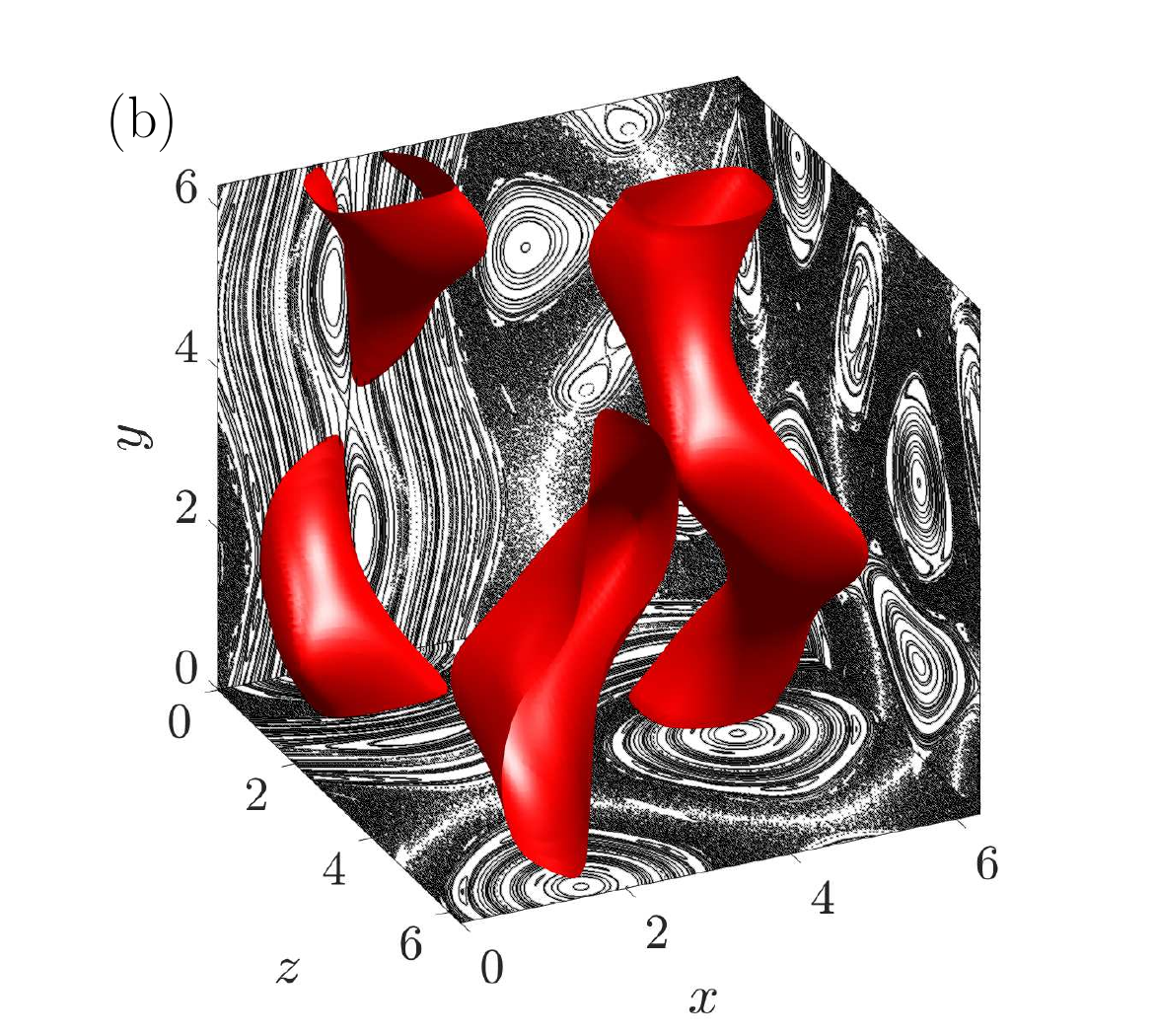}\\
	
	\vspace{-0.3cm}
	\caption{$\lambda_2$-criterion-based isosurfaces against Poincaré maps in the non-integrable ABC flow. The isosurfaces correspond to (a) $\lambda_2=-0.02$ and (b) $\lambda_2=-2.4$. The Poincaré maps are computed from a grid of $20 \times 20$ initial conditions on the $x=0$, $y=0$, and $z=0$ planes running up to arclength $10^4$.}
	\label{fig:ABC_Lambda2_failure}
\end{figure}

A similar conclusion can be drawn for the flow inside the V junction that we tested in the subsection \ref{V_junction}.
Our approximate-first-integral algorithm correctly pinpoints the exact position of the recirculation bubbles in this flow.
In contrast, we show in Fig.~\ref{fig:junction_Q_failure} the drastically different vortical features we obtain for different isovalues of the $Q$-criterion.
Despite the four recirculation bubbles that are formed downstream, the isosurfaces based on the $Q$-criterion locate invariably two vortical features whose length also varies substantially for different $Q$ values.
What is even more remarkable is that such Q-criterion-based vortical features persist even for similar flows with slightly different Reynolds numbers that exhibit no fluid recirculation at all (see Fig.~\ref{fig:junction_Q_failure}(c)). Importantly, our method finds no structure in the same spatial  domain for these Reynolds numbers  and hence correctly signals the lack of a recirculation bubble.
\begin{figure}
	\centering
	\includegraphics[width=.48\linewidth]{./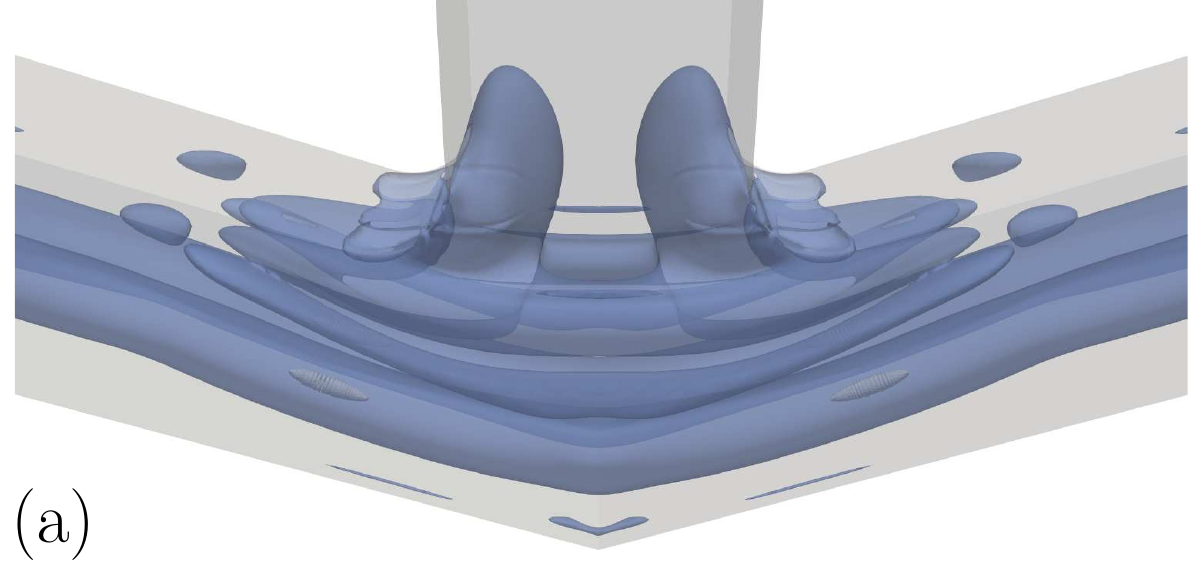}
	\includegraphics[width=.48\linewidth]{./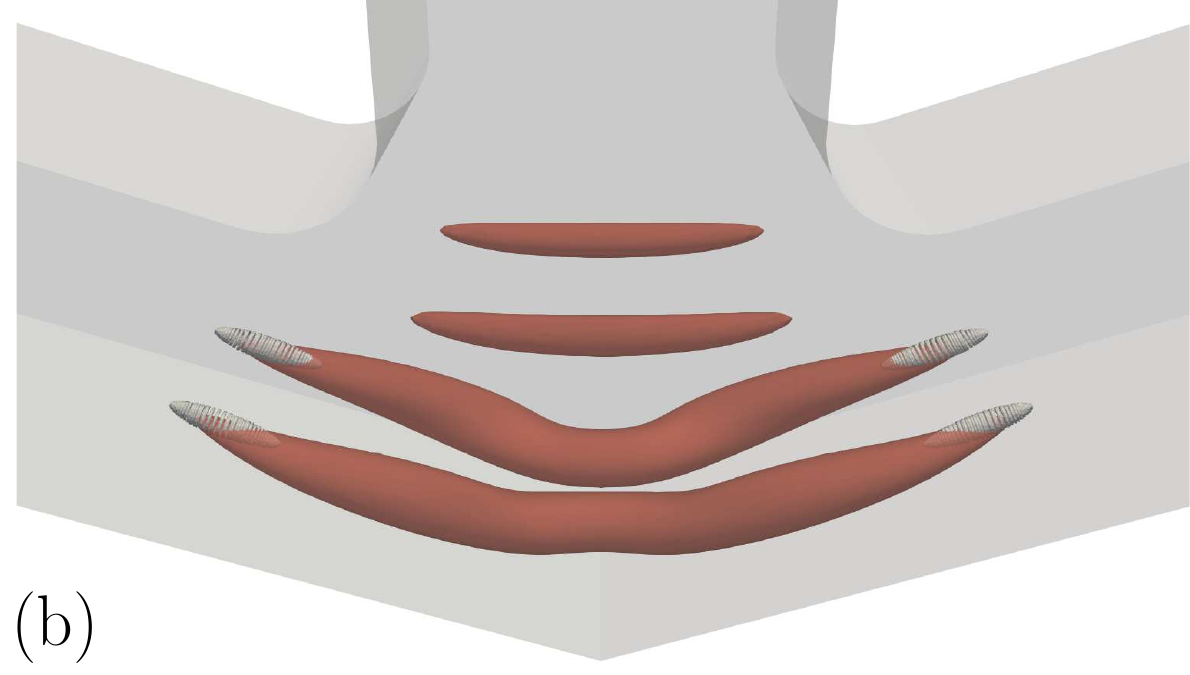}\\
	\includegraphics[width=.48\linewidth]{./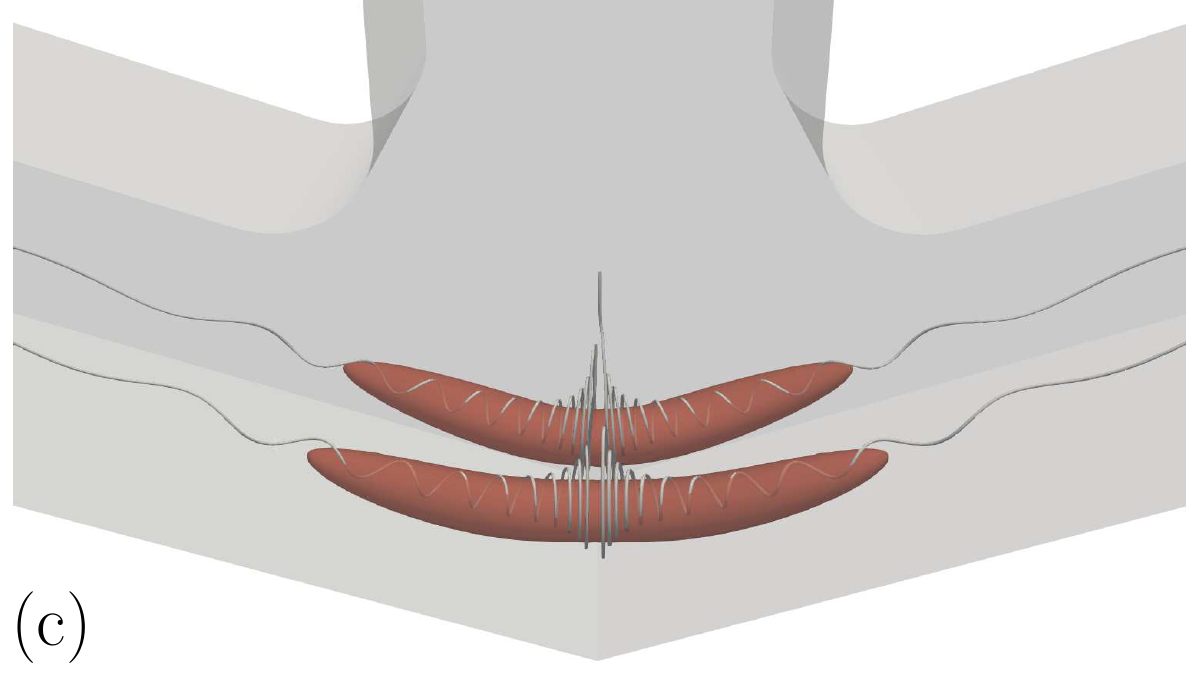}
	
	\vspace{-0.3cm}
	\caption{(a,b) $Q$-criterion-based isosurfaces against the recirculation bubbles for the flow inside the V junction of subsection \ref{V_junction} $(\mathrm{Re} = 230)$. The isosurfaces correspond to (a) $Q=0.02$ and (b) $Q=50$. (c) $Q$-criterion-based isosurfaces for $Q = 50$ for a perturbed solution $(\mathrm{Re} = 180)$ where no recirculation bubbles are formed.}
	\label{fig:junction_Q_failure}
\end{figure}

\begin{figure}
	\centering
	\includegraphics[width=0.45\linewidth]{./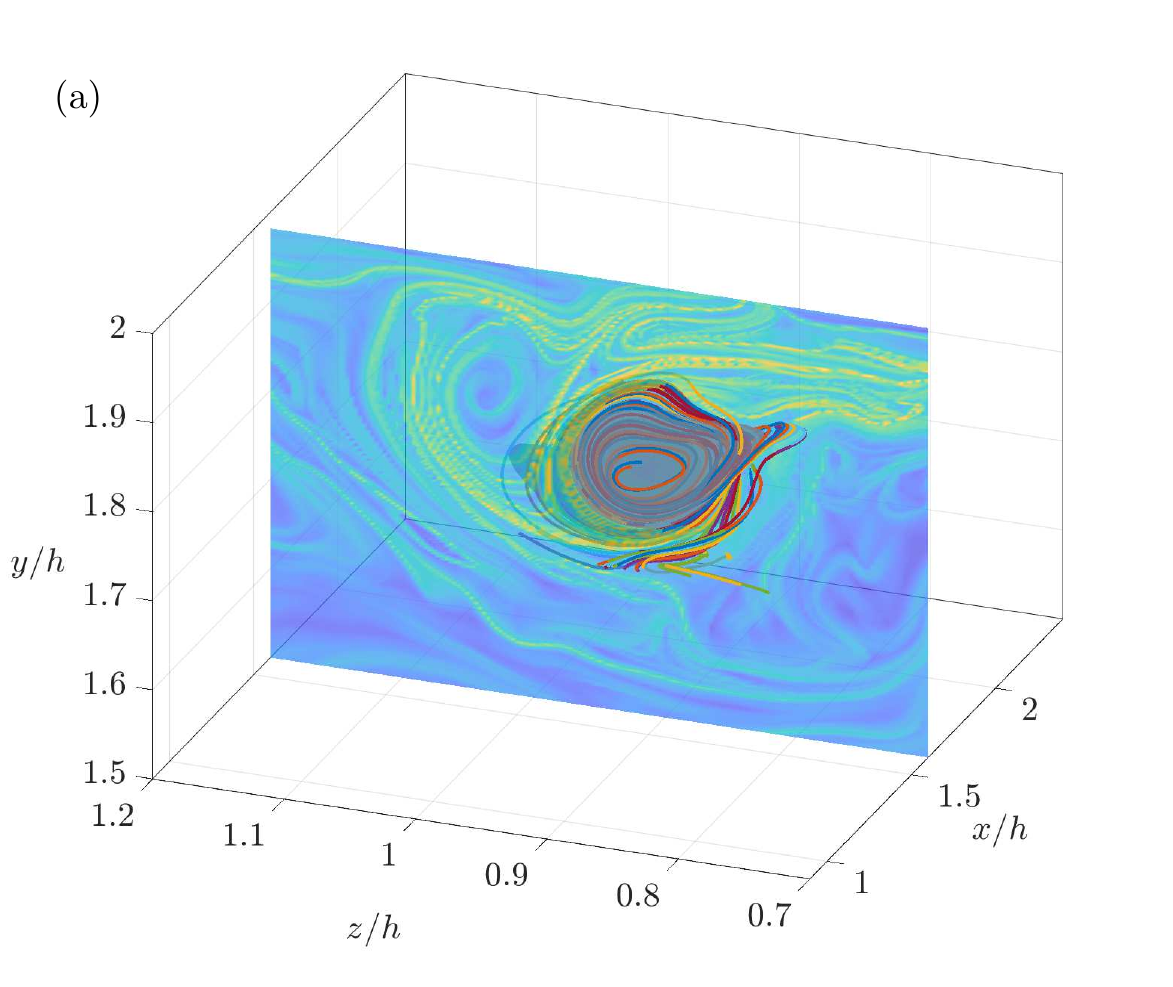}
	\includegraphics[width=0.45\linewidth]{./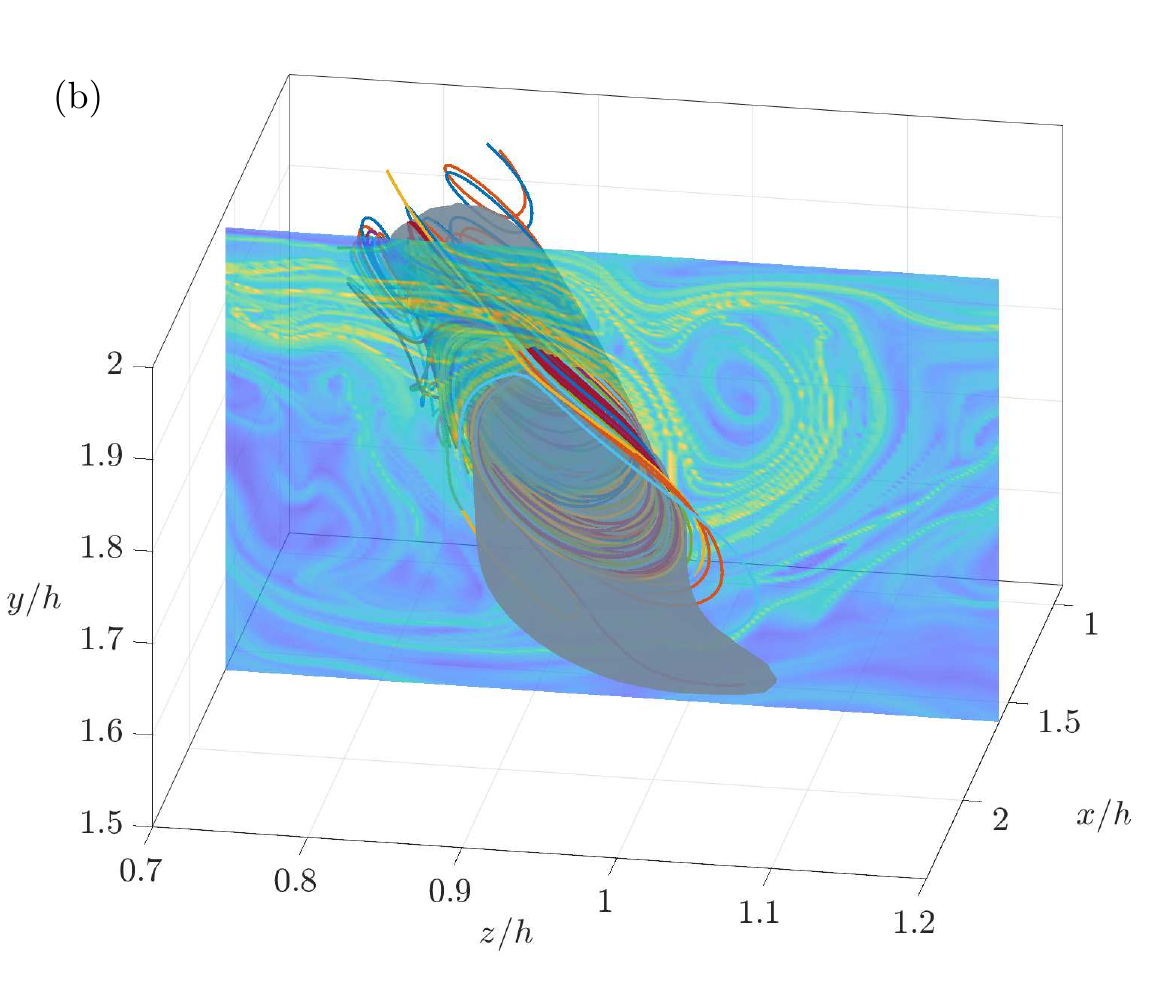}\\
	\includegraphics[width=0.45\linewidth]{./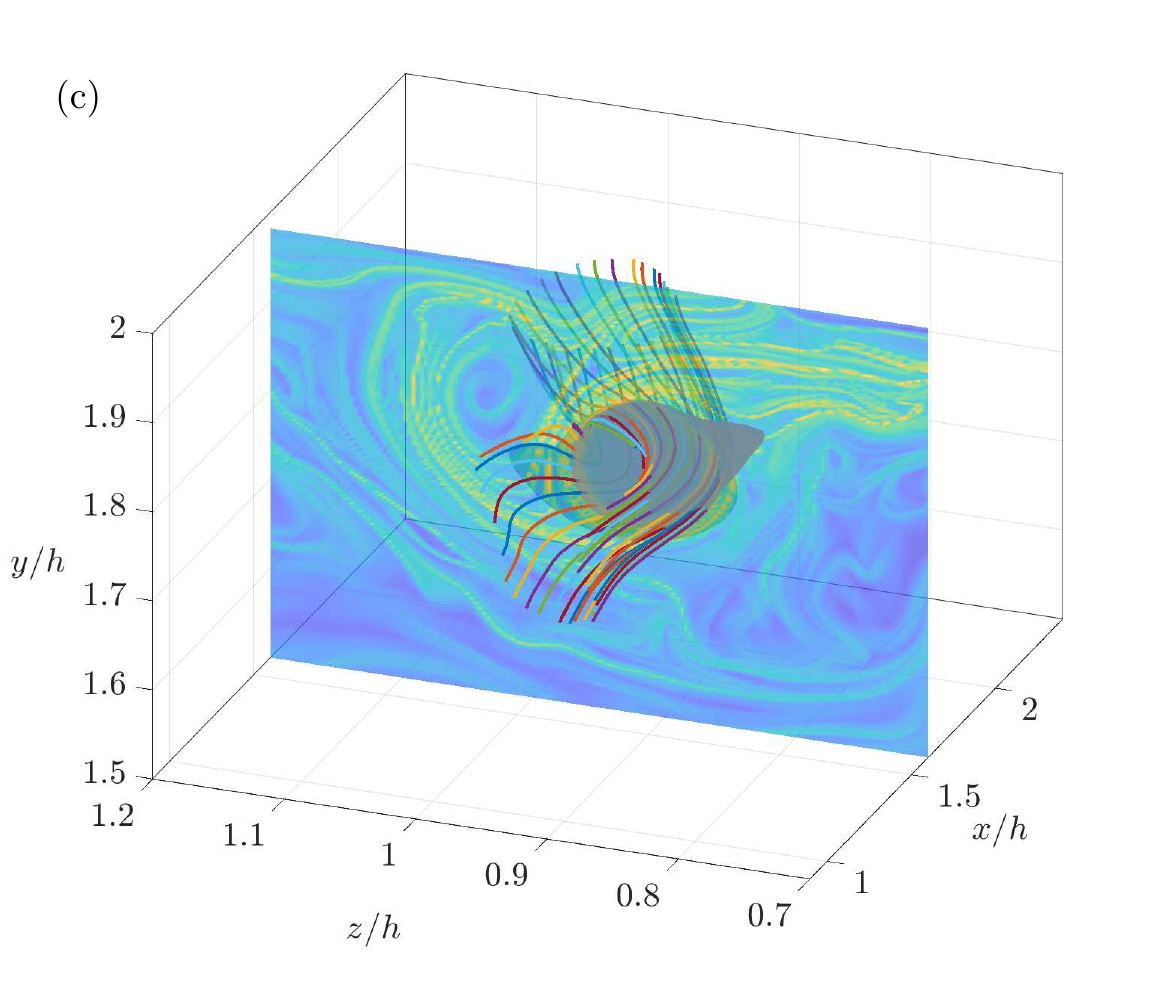}
	\includegraphics[width=0.45\linewidth]{./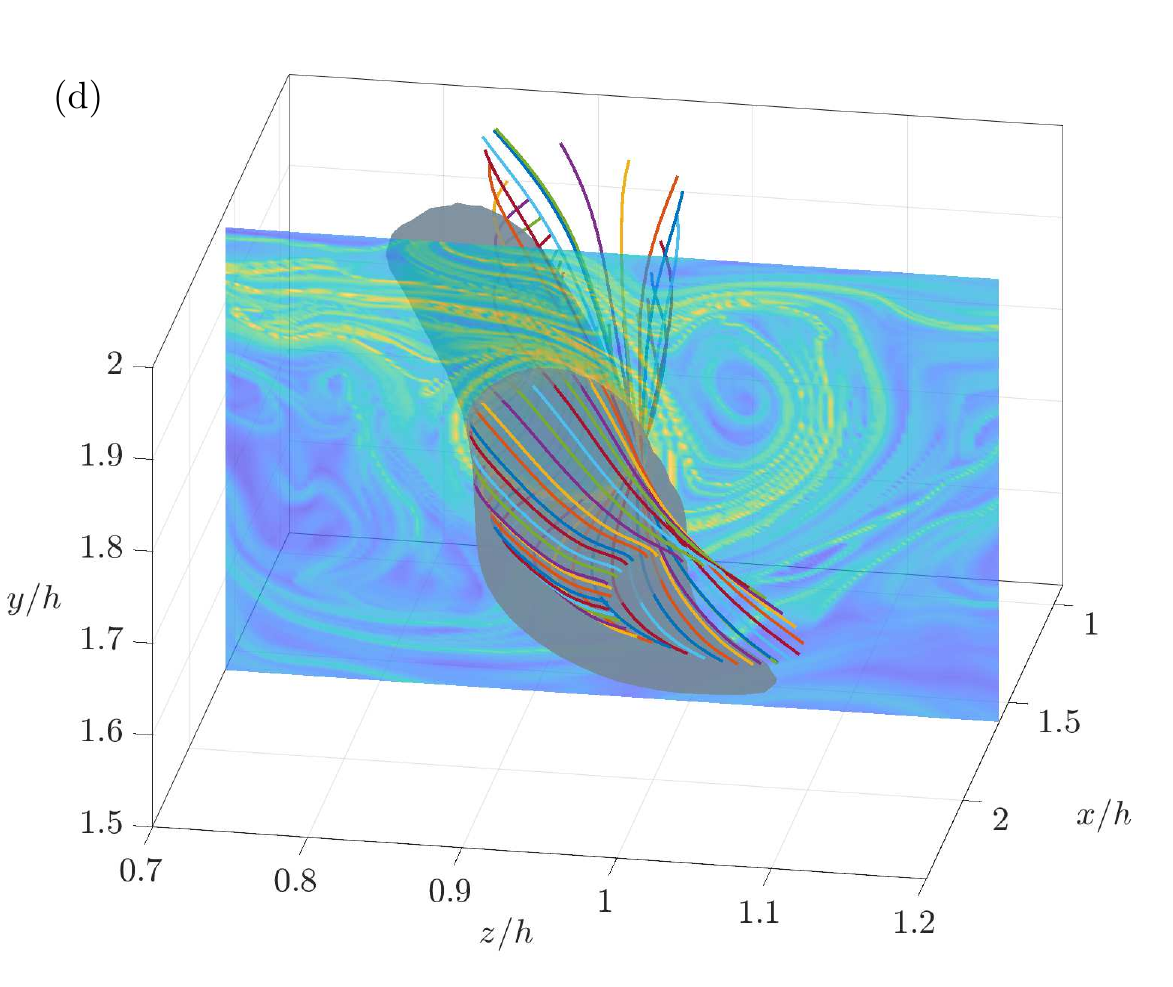}\\
	\includegraphics[width=0.45\linewidth]{./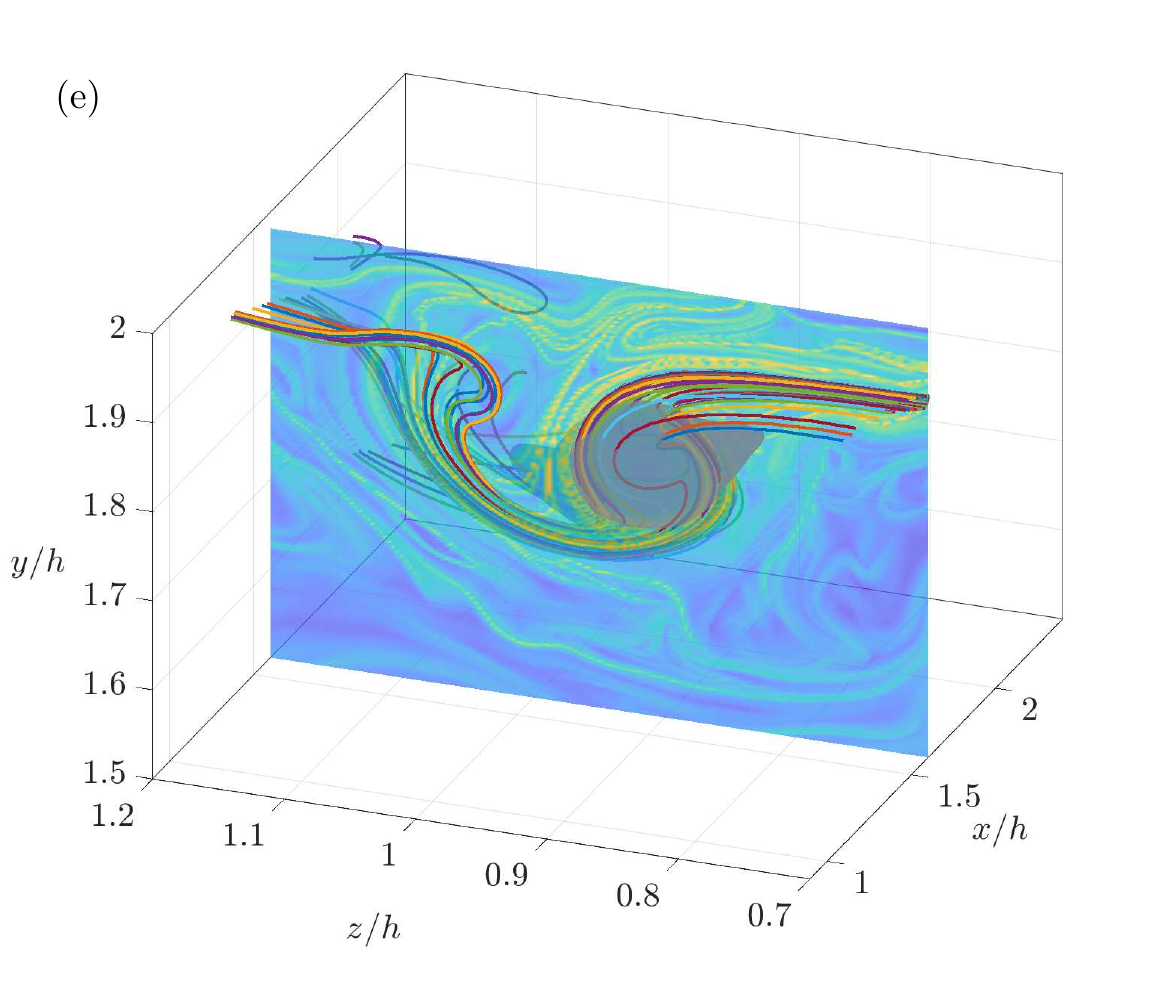}
	\includegraphics[width=0.45\linewidth]{./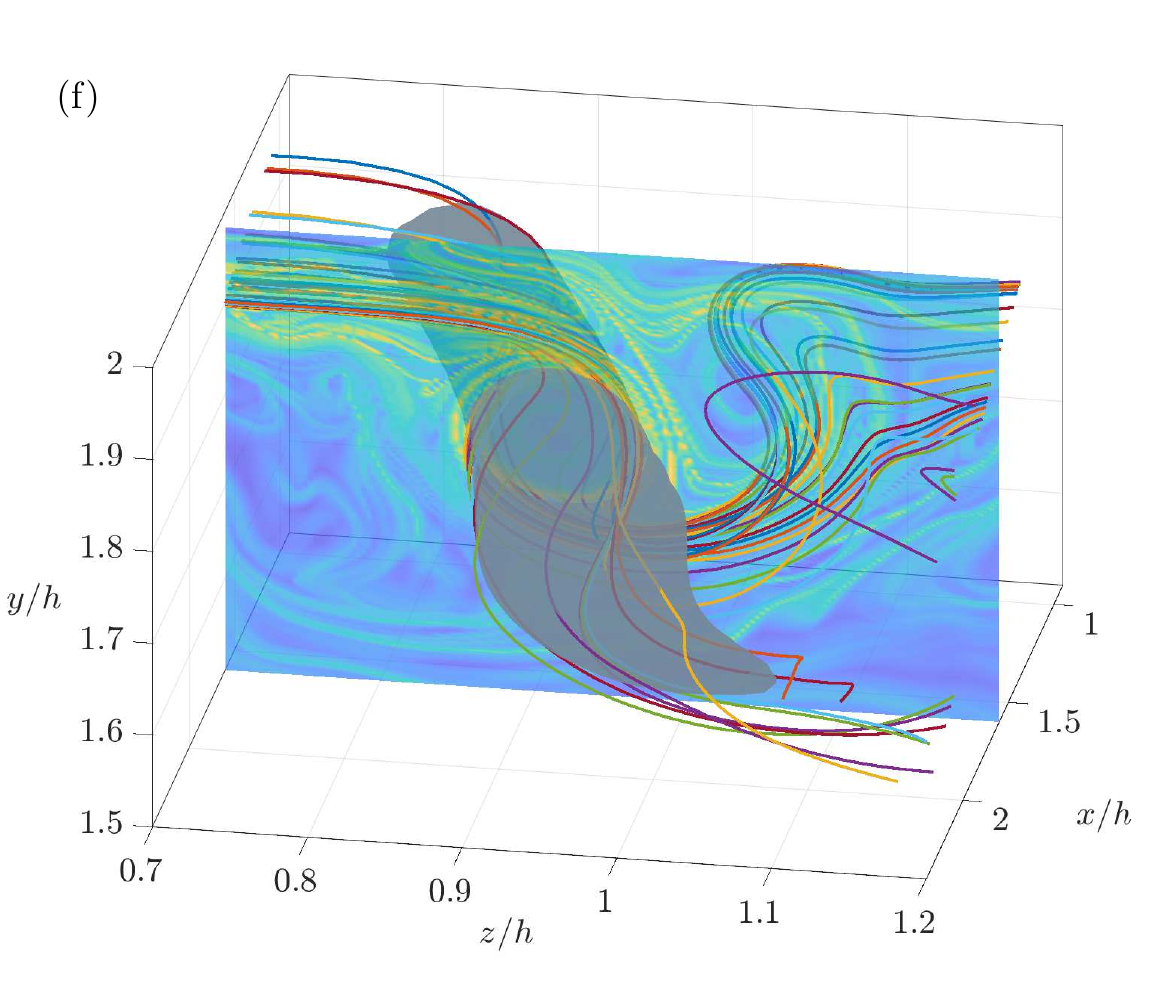}

	\caption{Streamlines of different vector fields emanating from points lying on the ridge of the aFTLE field at $x/h = 1.6$. Different views of streamlines of the momentum barrier field (a,b), velocity field (c,d) and vorticity field (e,f) at $t=0$.}
	\label{fig:d2vel_vel_omega}
\end{figure}

Finally, in Fig.~\ref{fig:d2vel_vel_omega}, we present representative streamlines for the turbulent channel flow of section \ref{turbulentChannel} based on the barrier field $\Delta \mathbf{v} (\mathbf{x})$ (a,b), the velocity field $\mathbf{v} (\mathbf{x})$ (c,d) and the vorticity field $\boldsymbol \omega (\mathbf{x})$ (e,f) at time $t=0$.
For all three fields we choose the same initial conditions, i.e., points that approximately lie on the ridge of the aFTLE field (computed based on the $\Delta \mathbf{v} (\mathbf{x})$ field) we identified in Fig.~\ref{fig:AFTLE_mush_cut}(c) at $x/h=1.6$.

Based on these calculations, we draw the following conclusions.
First, we observe the barrier streamlines being wrapped around the approximate-first-integral-based structure following even small protrusions on its surface like the one seen in Fig.~\ref{fig:d2vel_vel_omega}(a). This structure is, thus, a correct approximation to an invariant manifold of the barrier field in agreement with the best-fit streamsurface of Fig.~\ref{fig:streamsurfaces}(a,b) as well as the aFTLE landscape of Fig.~\ref{fig:AFTLE_mush_cut}. For longer integration dummy times (Fig.~\ref{fig:streamsurfaces}(c,d)), however, we observe that this delineation of the barrier surface using streamlines quickly comes to a halt. Indeed, the streamlines fail to capture a significant portion of the barrier surface as $x/h$ becomes larger, whereas for smaller $x/h$ they develop convoluted patterns that eventually fall apart resulting in elongated streamsurfaces that show no imprint on the aFTLE landscape. This constitutes the main reason why techniques aiming at the reconstruction of exact streamsurfaces through a better seed placement are not well-suited for such flows. 

Second, we note that the velocity streamlines form a misaligned (with respect to the extracted structure) tube that gives no indication of a vortical feature. Third, we remark that the vorticity streamlines correctly outline the outer shape of the mushroom-shaped structure imprinted on the aFTLE landscape without a detailed delineation, however, of the two branches that constitute it.

\begin{figure}
	\centering
	\includegraphics[width=0.45\linewidth]{./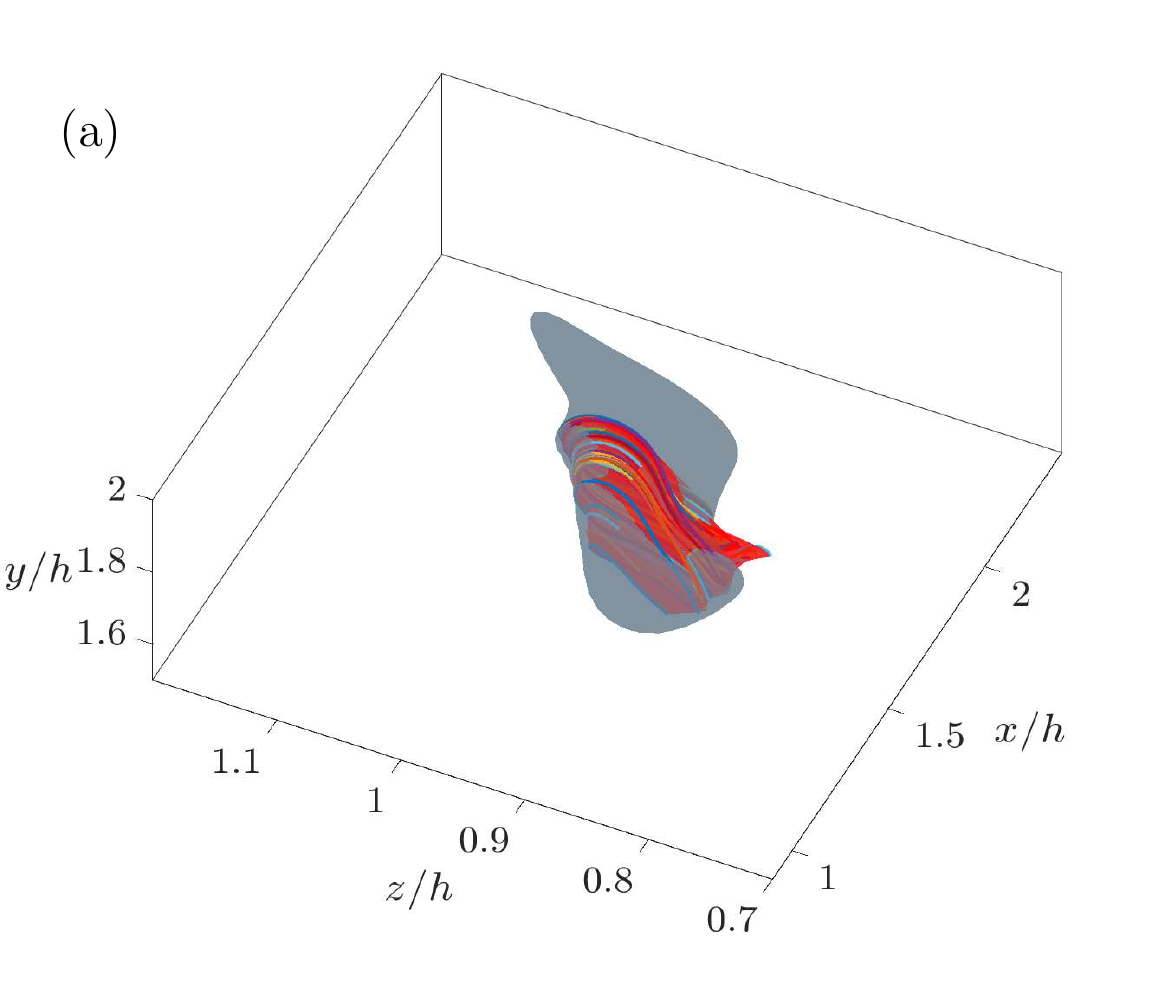}
	\includegraphics[width=0.45\linewidth]{./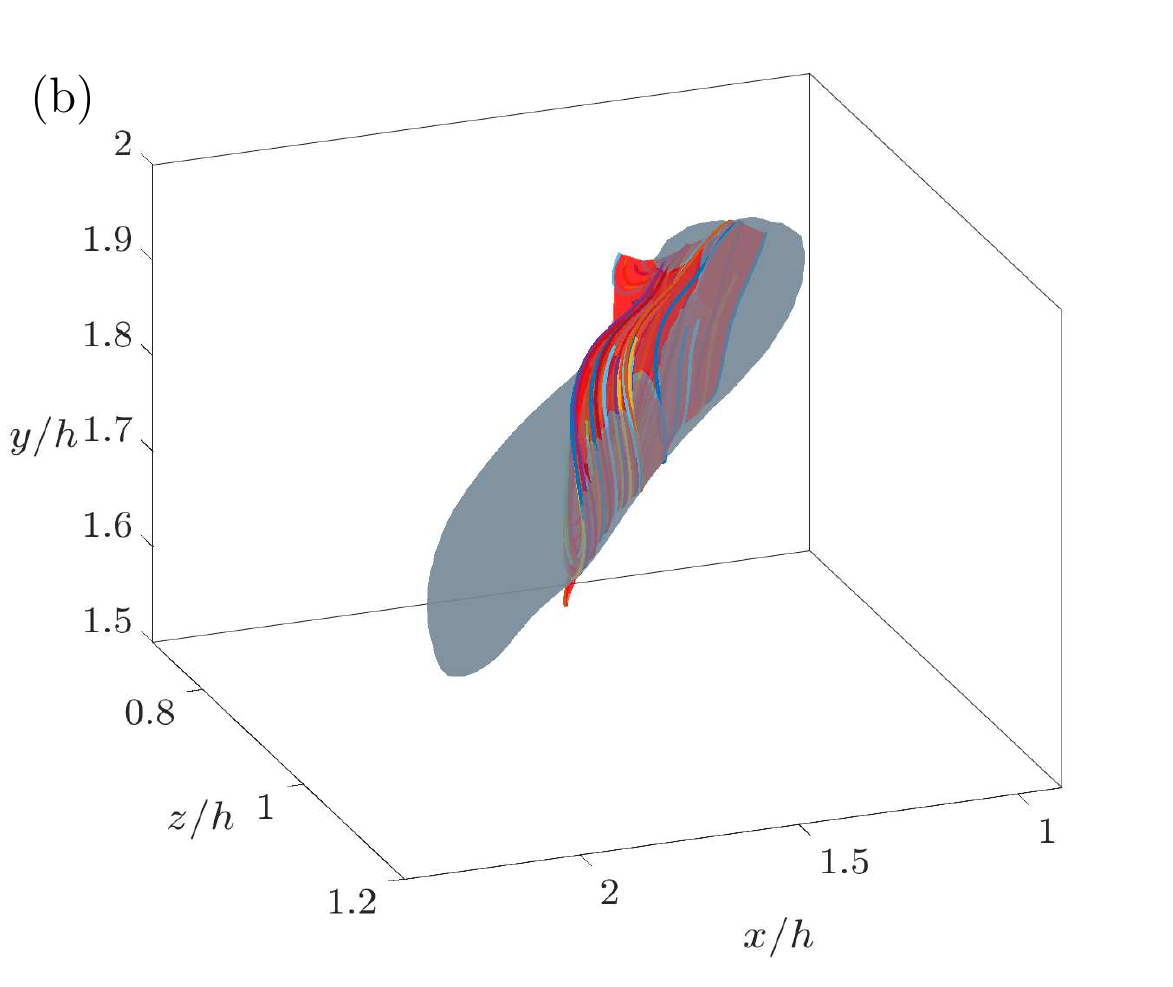}\\
	\includegraphics[width=0.45\linewidth]{./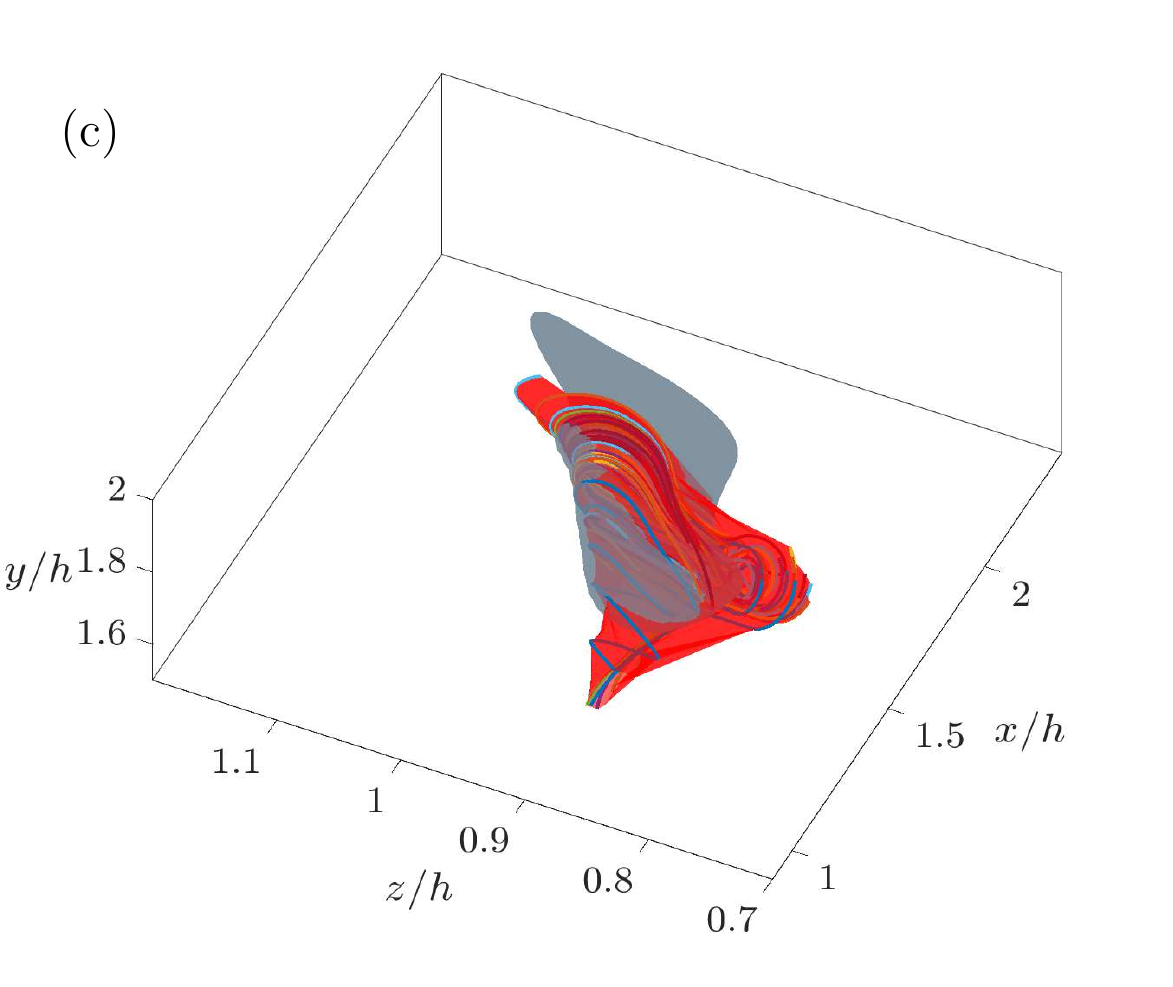}
	\includegraphics[width=0.45\linewidth]{./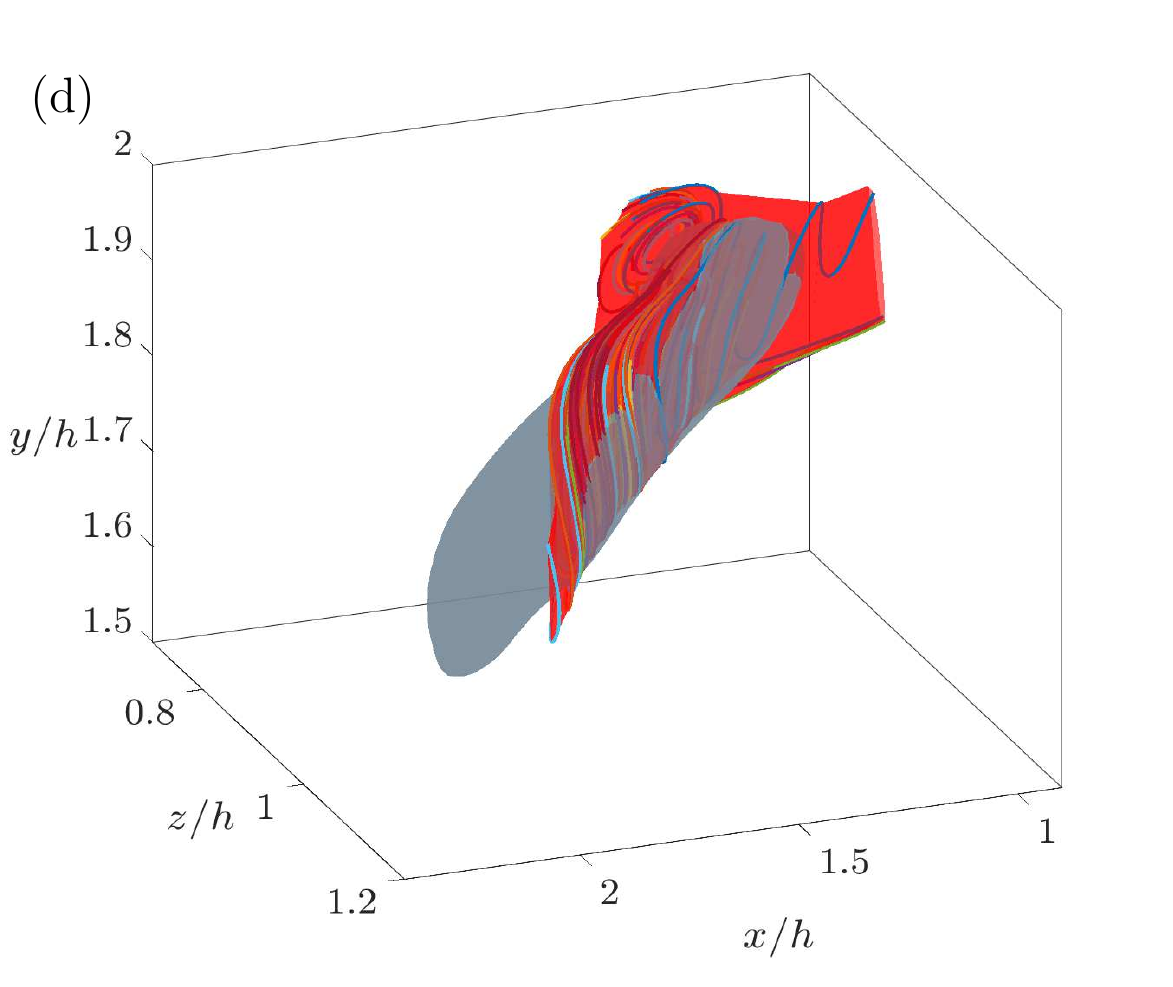}\\

\caption{Different views of streamlines of the momentum barrier field originating from points lying on the ridge of the aFTLE field presented in Fig.~\ref{fig:d2vel_vel_omega}. The integration in the subfigures (a,b) and (c,d) corresponds to different dummy final times $s_{1,(a,b)}$ and $s_{1,(c,d)}$ with $s_{1,(a,b)} < s_{1,(c,d)}$. The best-fit streamsurfaces are depicted in red and are compared against the approximate-first-integral-based structure in gray. }
	\label{fig:streamsurfaces}
\end{figure}
	\section{Least squares for homogeneous systems}
\label{app:least_squares_homo}

\begin{theorem}
Let $\mathbf{C}=\mathbf{U}\boldsymbol{\Sigma} \mathbf{V}^{*}$ be the singular value decomposition of a matrix $\mathbf{C}$. Let, also, $\mathbf{v}_{1}, \dotsc, \mathbf{v}_k$ be the last $k$ columns of $\mathbf{V}$ whose corresponding singular values are equal to the smallest singular value $\sigma_1$. Then, all the linear combinations of the form
\begin{equation}
\mathbf{x} = c_1 \mathbf{v}_{1} + \cdots + c_k \mathbf{v}_k
\end{equation}
with 
\begin{equation}
c_1^2 + \cdots + c_k^2 = 1
\end{equation}
are unit-norm least-squares solutions to the homogeneous system
\begin{equation}
\mathbf{C}\mathbf{x} = \mathbf{0}
\end{equation}
\end{theorem}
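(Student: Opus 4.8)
The plan is to work directly from the singular value decomposition $\mathbf{C} = \mathbf{U}\boldsymbol{\Sigma}\mathbf{V}^{*}$ and reduce the constrained minimization of $|\mathbf{C}\mathbf{x}|^2$ over the unit sphere to a diagonal problem. First I would recall that the least-squares solution to the homogeneous system $\mathbf{C}\mathbf{x} = \mathbf{0}$ with the normalization $|\mathbf{x}| = 1$ means precisely a minimizer of $|\mathbf{C}\mathbf{x}|^2$ subject to $|\mathbf{x}|^2 = 1$, since $\mathbf{C}\mathbf{x} = \mathbf{0}$ generally has no nonzero solution. The key algebraic fact is that, because $\mathbf{U}$ is unitary, $|\mathbf{C}\mathbf{x}|^2 = |\mathbf{U}\boldsymbol{\Sigma}\mathbf{V}^{*}\mathbf{x}|^2 = |\boldsymbol{\Sigma}\mathbf{V}^{*}\mathbf{x}|^2$, and writing $\mathbf{y} = \mathbf{V}^{*}\mathbf{x}$ (which also has $|\mathbf{y}| = 1$ since $\mathbf{V}$ is unitary) turns the objective into $\sum_j \sigma_j^2 |y_j|^2$, where the $\sigma_j$ are the singular values ordered so that $\sigma_1$ is the smallest.

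Next I would carry out the elementary optimization of $\sum_j \sigma_j^2 |y_j|^2$ over the constraint $\sum_j |y_j|^2 = 1$. This is a convex combination of the values $\sigma_j^2$, so its minimum over the unit sphere equals $\sigma_1^2$ (the smallest) and is attained exactly when $|y_j|^2 = 0$ for every $j$ whose singular value strictly exceeds $\sigma_1$; equivalently, $\mathbf{y}$ is supported on the coordinates corresponding to the $k$ smallest (hence equal) singular values. Translating back via $\mathbf{x} = \mathbf{V}\mathbf{y}$, this says precisely that $\mathbf{x}$ lies in the span of $\mathbf{v}_1, \dots, \mathbf{v}_k$, i.e. $\mathbf{x} = c_1\mathbf{v}_1 + \cdots + c_k\mathbf{v}_k$, and the unit-norm condition on $\mathbf{x}$ becomes $c_1^2 + \cdots + c_k^2 = 1$ because the $\mathbf{v}_j$ are orthonormal. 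Conversely, any such $\mathbf{x}$ achieves the objective value $\sigma_1^2$, so every one of them is indeed a unit-norm least-squares solution; this gives both directions of the claim.

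For completeness I would also note the connection to the eigenproblem for $\mathbf{A} = \mathbf{C}^{*}\mathbf{C}$ invoked in the main text: since $\mathbf{A} = \mathbf{V}\boldsymbol{\Sigma}^{*}\boldsymbol{\Sigma}\mathbf{V}^{*}$, the columns of $\mathbf{V}$ are orthonormal eigenvectors of $\mathbf{A}$ with eigenvalues $\sigma_j^2$, so the space of minimizers is the eigenspace of $\mathbf{A}$ for its smallest eigenvalue $\sigma_1^2$, and minimizing $|\mathbf{C}\mathbf{x}|^2 = \mathbf{x}^{*}\mathbf{A}\mathbf{x}$ on the unit sphere is the Rayleigh-quotient characterization of that eigenspace. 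I do not anticipate a genuine obstacle here — the argument is a standard SVD computation — but the one point that needs a little care is making the ``attained exactly when'' step rigorous: one must argue that if some $y_{j_0}$ with $\sigma_{j_0} > \sigma_1$ is nonzero, then strictly decreasing $|y_{j_0}|$ while increasing the weight on a smallest-singular-value coordinate strictly decreases the objective, so such a $\mathbf{y}$ cannot be a minimizer. That strict-inequality bookkeeping, together with keeping the distinction between "the last $k$ columns" of $\mathbf{V}$ and the ordering convention on $\boldsymbol{\Sigma}$ consistent, is the only place where one should be deliberate.
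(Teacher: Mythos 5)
Your proposal is correct and follows essentially the same route as the paper's own proof: pass to the SVD, use unitarity of $\mathbf{U}$ and $\mathbf{V}$ to reduce the problem to minimizing $\sum_j \sigma_j^2 |y_j|^2$ on the unit sphere, and observe that the minimum $\sigma_1^2$ is attained exactly on the coordinates belonging to the smallest singular value. Your added care about the ``attained exactly when'' direction and the remark connecting the minimizing subspace to the eigenspace of $\mathbf{A}=\mathbf{C}^{*}\mathbf{C}$ are welcome refinements but do not change the substance of the argument.
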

\begin{proof}
We want to find the solution $\mathbf{x}$ with $|\mathbf{x} | = 1$ that minimizes $| \mathbf{C} \mathbf{x} |$ or $| \mathbf{U} \boldsymbol{\Sigma} \mathbf{V}^{*} \mathbf{x}|$. Because $\mathbf{U}$ is unitary and, thus, acts as an isometry, this is equivalent to minimizing $| \boldsymbol{\Sigma} \mathbf{V}^{*} \mathbf{x}|$ or $|\boldsymbol{\Sigma} \mathbf{y}|$ with $\mathbf{y} = \mathbf{V}^{*} \mathbf{x}$. $\mathbf{V}$ is also unitary, so $|\mathbf{y} | = 1$ is equivalent to $|\mathbf{x} | = 1$. Therefore we want the unit-norm vector $\mathbf{y}$ that minimizes $|\boldsymbol{\Sigma} \mathbf{y}|$, i.e., the quantity
\begin{equation}
\sigma_1^2 y_1^2 + \cdots + \sigma_n^2 y_n^2.
\end{equation}
By the definition of SVD we have
\[
\left.\begin{array}{c}
\sigma_{1}\geq\sigma_{1}\Leftrightarrow\sigma_{1}^{2}\geq\sigma_{1}^{2}\\
\vdots\\
\sigma_{n}\geq\sigma_{1}\Leftrightarrow\sigma_{n}^{2}\geq\sigma_{1}^{2}
\end{array}\right\} \left.\Leftrightarrow\begin{array}{c}
\sigma_{1}^{2}y_{1}^{2}\geq\sigma_{1}^{2}y_{1}^{2}\\
\vdots\\
\sigma_{n}^{2}y_{n}^{2}\geq\sigma_{1}^{2}y_{n}^{2}
\end{array}\right\}
\]
and after summing up all the inequalities we obtain
\[
\sigma_{1}^{2}y_{1}^{2}+\sigma_{2}^{2}y_{2}^{2}+\ldots+\sigma_{n}^{2}y_{n}^{2}\geq\left(y_{1}^{2}+y_{2}^{2}+\ldots+y_{n}^{2}\right)\sigma_{1}^{2}=\sigma_{1}^{2},
\]
with the equality holding if
\begin{equation}
y_{n-k+1}=y_{n-k+2}=\ldots=y_{n}=0.
\label{eq:ls_homo_sol}
\end{equation}
We then have $\mathbf{x} = \mathbf{V} \mathbf{y} = y_1 \mathbf{v}_1 + \cdots + y_n \mathbf{v}_n$ and, thus, Eq. \ref{eq:ls_homo_sol} is equivalent to
\begin{equation}
\mathbf{x} = c_1 \mathbf{v}_{1} + \cdots + c_k \mathbf{v}_k
\end{equation}
with $c_1 = y_{1}, \dotsc, c_k=y_k$ and $c_1^2 + \cdots + c_k^2 = 1$.
\end{proof}
	\section{Solving for H after imposing the real constraint}
\label{app:realConstraint}
We define a complex vector $\mathbf{h} \in \mathbb{C}^n$, where $n$ represents the number of vectors $\mathbf{k} \in \mathbb{Z}^3$ such that $|\mathbf{k}|\leq N$ and we also represent the individual entries of $\mathbf{h}$ as $h_{\mathbf{k}}$. Further, in conjunction with Section \ref{algorithm_setup}, we define a $m\times n$ matrix, where m represents the total number of grid points. We are interested in minimizing $\mathbf{h}^{*} \mathbf{C}^{*} \mathbf{C} \mathbf{h}$ subject to the constraints $\mathbf{h}^{*} \mathbf{h} = 1$ and $h_{-\mathbf{k}}=h^*_{\mathbf{k}}$. The second constraint is not natural to solve using known optimization techniques, due to which we will modify the above problem rearranging $\mathbf{h}$ as a vector in  $\mathbb{R}^{2n}$.

Naively one can construct a column vector with first $n$ real entries of $\mathbf{h}$ followed by their corresponding imaginary entries
\begin{equation}
    \mathbf{v} = \begin{pmatrix} 
    \operatorname{Re}(\mathbf{h}) \\ \operatorname{Im}(\mathbf{h})
    \end{pmatrix}
\end{equation}
Similarly one can reorganize $\mathbf{A} = \mathbf{C}^{*} \mathbf{C}$ and obtain
\begin{equation}
    \mathbf{D} = \begin{bmatrix}
    \mathbf{A}_r & -\mathbf{A}_i \\
    \mathbf{A}_i & \mathbf{A}_r
    \end{bmatrix}
\end{equation}
where $\mathbf{A}_r =  \frac{\mathbf{A}+\mathbf{A}^T}{2}$ and $\mathbf{A}_i = \frac{\mathbf{A} - \mathbf{A}^T}{2i} $. Thus, we have $\mathbf{v}^T\mathbf{D}\mathbf{v}=\mathbf{h}^{*}\mathbf{A}\mathbf{h}$ subject to the constraints $\mathbf{v}^T\mathbf{v}=\mathbf{h}^{*} \mathbf{h} = 1$ and $\mathbf{U}\mathbf{v}=0$. In our case, $\mathbf{U}$ is a matrix of the form
\begin{align*} 
\mathbf{U} = \begin{bmatrix} 
\mathbf{K}_{(n\times n)} & \mathbf{J}_{(n\times n)}
\end{bmatrix}_{n \times 2n} \\
K_{ij} = \begin{cases} 1 & i=j,i<n/2,j<n/2 \\
-1 & i=n/2+j, i<n/2,j<n/2\\
0 
\end{cases} \bigg | & 
J_{ij} = \begin{cases} 1 & i=j,i<n/2,j<n/2 \\
1 & i=n/2+j, i<n/2,j<n/2\\
0 
\end{cases}
\end{align*}
The above optimization problem has a solution by \cite{golub1973some}. We state a form of the theorem that will be useful for us
\begin{theorem}
Minimizing $\mathbf{v}^T \mathbf{D}\mathbf{v}$ subject to $\mathbf{v}^T\mathbf{v}=1$,  $\mathbf{U}\mathbf{v}=0$, where $\mathbf{v}\in \mathbb{R}^{2n}$ is equivalent to minimizing $\mathbf{l}^T \mathbf{E} \mathbf{l}$ subject to $\mathbf{l}^T\mathbf{l}=1$, where $\mathbf{l}\in \mathbb{R}^n$ provided $\mathbf{U}$ is a $n\times 2n$ matrix.
\end{theorem}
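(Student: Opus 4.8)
The plan is to reduce the constrained problem on $\mathbb{R}^{2n}$ to a problem on the unit sphere of $\mathbb{R}^{n}$ by parametrizing the feasible set explicitly. First I would note that the linear constraint $\mathbf{U}\mathbf{v}=\mathbf{0}$ forces $\mathbf{v}$ into the null space $\mathcal{N}(\mathbf{U})\subseteq\mathbb{R}^{2n}$. Assuming $\mathbf{U}$ has full row rank $n$ (which should be verified, or taken as a hypothesis), $\dim\mathcal{N}(\mathbf{U})=2n-n=n$, so one can choose $\mathbf{N}\in\mathbb{R}^{2n\times n}$ whose columns form an orthonormal basis of $\mathcal{N}(\mathbf{U})$; concretely, $\mathbf{N}$ is obtained from a QR (or singular value) decomposition of $\mathbf{U}^{T}$, which is the route taken in \cite{golub1973some}. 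By construction $\mathbf{U}\mathbf{N}=\mathbf{0}$ and $\mathbf{N}^{T}\mathbf{N}=\mathbf{I}_{n}$.

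Next I would introduce the change of variables $\mathbf{v}=\mathbf{N}\mathbf{l}$ with $\mathbf{l}\in\mathbb{R}^{n}$. This map sends $\mathbb{R}^{n}$ bijectively onto $\mathcal{N}(\mathbf{U})$: it is injective because $\mathbf{N}$ has full column rank, and every $\mathbf{v}$ with $\mathbf{U}\mathbf{v}=\mathbf{0}$ lies in the column span of $\mathbf{N}$, with preimage $\mathbf{l}=\mathbf{N}^{T}\mathbf{v}$. Because the columns of $\mathbf{N}$ are orthonormal, the map is an isometry: $\mathbf{v}^{T}\mathbf{v}=\mathbf{l}^{T}\mathbf{N}^{T}\mathbf{N}\mathbf{l}=\mathbf{l}^{T}\mathbf{l}$, so the constraint $\mathbf{v}^{T}\mathbf{v}=1$ is precisely $\mathbf{l}^{T}\mathbf{l}=1$. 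Setting $\mathbf{E}:=\mathbf{N}^{T}\mathbf{D}\mathbf{N}$, which is symmetric since $\mathbf{D}$ is, the objective becomes $\mathbf{v}^{T}\mathbf{D}\mathbf{v}=\mathbf{l}^{T}\mathbf{E}\mathbf{l}$. Thus feasible points and objective values correspond one-to-one, which is the claimed equivalence; in particular a minimizer $\mathbf{l}_{\star}$ of the reduced problem --- an eigenvector of $\mathbf{E}$ associated with its smallest eigenvalue --- yields a minimizer $\mathbf{v}_{\star}=\mathbf{N}\mathbf{l}_{\star}$ of the original, and conversely $\mathbf{l}_{\star}=\mathbf{N}^{T}\mathbf{v}_{\star}$.

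The main obstacle I anticipate is the rank condition on $\mathbf{U}$: if $\mathbf{U}$ fails to have full row rank, then $\dim\mathcal{N}(\mathbf{U})>n$ and the reduced problem lives in $\mathbb{R}^{2n-\operatorname{rank}\mathbf{U}}$ rather than $\mathbb{R}^{n}$, so the statement should either carry $\operatorname{rank}\mathbf{U}=n$ as a standing assumption or be phrased with $2n-\operatorname{rank}\mathbf{U}$ in place of $n$. For the matrix $\mathbf{U}$ encoding the reality constraint $h_{-\mathbf{k}}=h_{\mathbf{k}}^{*}$, full row rank can be read off directly from its explicit block form, so this caveat does not affect the application. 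The remaining steps are routine linear algebra that I would not expand further.
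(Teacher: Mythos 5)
Your proof is correct and is essentially the same as the paper's: the paper also follows \cite{golub1973some}, writing $\mathbf{U}^{T}=\mathbf{Q}\mathbf{R}$ and restricting to vectors of the form $\mathbf{v}_{\mathbf{Q}}=\mathbf{Q}^{T}\mathbf{v}=\bigl[\mathbf{0};\mathbf{l}\bigr]$, which is precisely your parametrization $\mathbf{v}=\mathbf{N}\mathbf{l}$ with $\mathbf{N}$ the trailing $n$ columns of $\mathbf{Q}$, and the paper's $\mathbf{E}=(\mathbf{Q}^{T}\mathbf{D}\mathbf{Q})_{22}$ coincides with your $\mathbf{N}^{T}\mathbf{D}\mathbf{N}$. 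Your explicit remark that $\operatorname{rank}\mathbf{U}=n$ is needed for $\bigl[\mathbf{0};\mathbf{l}\bigr]$ to exhaust the feasible set (the paper only asserts it is \emph{a} solution of the constraint) is a worthwhile tightening, and as you note it holds for the particular $\mathbf{U}$ encoding the reality constraint.
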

\begin{proof}
Following \cite{golub1973some}, we can write 
\begin{equation}
    \mathbf{U}^T = \mathbf{QR} = \mathbf{Q}_{2n\times 2n}\begin{bmatrix} \mathbf{R}'_{n\times n} \\
    \mathbf{0}_{n\times n}
    \end{bmatrix}_{2n\times n},
\end{equation}
where $\mathbf{Q}$ is an orthogonal matrix and $\mathbf{R}'$ is an upper triangular matrix. The constraint then reads
\begin{align}
    \mathbf{R}^T\mathbf{Q}^T\mathbf{v}=0 \\
    \mathbf{R}^T\mathbf{v}_\mathbf{Q}=0.
\end{align}
The vector $\mathbf{v}_{\mathbf{Q}} = \begin{bmatrix} \mathbf{0}_{n\times 1} \\ \mathbf{l}_{n \times 1}
\end{bmatrix}$ is a solution to the above constraint. Expressing the objective function in terms of $\mathbf{v}_{\mathbf{Q}}$ we have
\begin{equation}
    \mathbf{v}^T\mathbf{D}\mathbf{v} = \mathbf{v}^T\mathbf{QQ}^{-1}\mathbf{D}(\mathbf{Q}^{-1})^T\mathbf{Q}^T\mathbf{v}=\mathbf{v}^T_{\mathbf{Q}}
    \begin{bmatrix}(\mathbf{Q}^T\mathbf{DQ})_{11} & (\mathbf{Q}^T\mathbf{DQ})_{12} \\
    (\mathbf{Q}^T\mathbf{DQ})_{21} & (\mathbf{Q}^T\mathbf{DQ})_{22}
    \end{bmatrix}\mathbf{v}_{\mathbf{Q}} = \mathbf{l}^T\mathbf{E}\mathbf{l} 
\end{equation}
subject to the constraint $\mathbf{l}^T\mathbf{l}=1$ with $\mathbf{E} = (\mathbf{Q}^T\mathbf{DQ})_{22}$. 
\end{proof}
Using this theorem we can see that the eigenvector of $\mathbf{E}$ corresponding to the minimum eigenvalue would be the solution (since $\mathbf{E}$ is a symmetric matrix). The vector of interest $\mathbf{h}$ can be reconstructed as follows
\begin{equation}
    \mathbf{h} = [\mathbf{Q}(\mathbf{v}_{\mathbf{Q}})_{min}]_{1} + i [\mathbf{Q}(\mathbf{v}_{\mathbf{Q}})_{min}]_2,\mathbf{Q}(\mathbf{v}_{\mathbf{Q}})_{min} = \begin{pmatrix}
    \{[\mathbf{Q}(\mathbf{v}_{\mathbf{Q}})_{min}]_1\}_{n\times 1} \\
    \{[\mathbf{Q}(\mathbf{v}_{\mathbf{Q}})_{min}]_2\}_{n\times 1} 
    \end{pmatrix}
\end{equation}
The results of this solution method for the non-integrable ABC flow on three different planes are depicted in Fig.~\ref{fig:closest_ABC_real}. Again, we note the very good agreement of the reconstructed level sets with both the Poincaré maps and the level sets of Fig.~\ref{fig:closest_ABC_eigen_vs_psvd}.
\begin{figure}
    \centering
    \includegraphics[]{./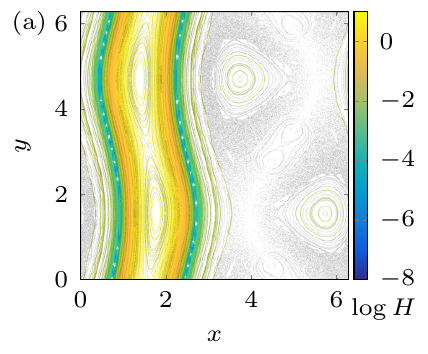}
    \includegraphics[]{./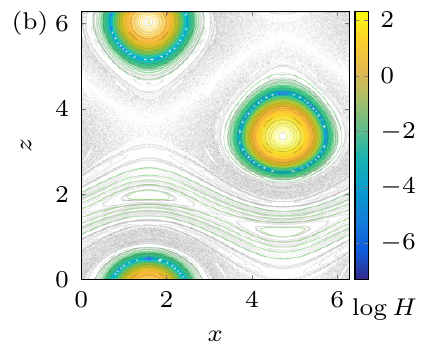}
    \includegraphics[]{./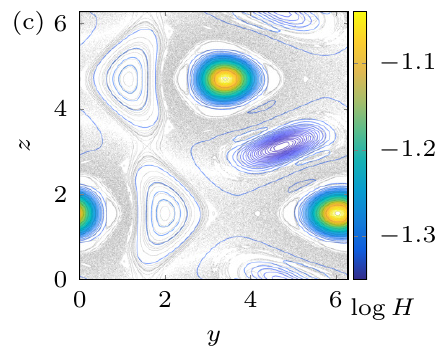}
    
    \vspace{-0.2cm}
    \caption{Analysis of the nonintegrable ABC flow using a computational grid of $100^3$ points and $9\,170$ Fourier modes after constraining $H$ to be a real scalar field. Level sets of the reconstructed first integral at $z=0$ (a), $y=0$ (b) and $x=0$ (c). The Poincaré map is overlaid on each section based on a uniform grid of $20 \times 20$ initial conditions.}
    \label{fig:closest_ABC_real}
\end{figure}

\section{Efficient computation of SVD for tall-skinny matrices}
\label{app:skinnySVD}
Let $\mathbf{C}$ be an $m \times n$ matrix with $m \gg n$. To make use of the tall-thin structure of this matrix, we follow an approach similar to the one described in \cite{schmidt2020survey} and factor
\begin{equation}
\mathbf{C}_{m\times n}=\mathbf{Q}_{m\times n} \mathbf{R}_{n\times n}
\end{equation}
using the so-called thin QR factorization of $\mathbf{C}$. The SVD of $\mathbf{R}$ gives
\begin{equation}
\mathbf{C}=\mathbf{Q} \mathbf{R}=\mathbf{Q}(\mathbf{U}_{R} \boldsymbol{\Sigma}_{R} \mathbf{V}_{R}^{T})=(\mathbf{Q} \mathbf{U}_{R}) \boldsymbol{\Sigma}_{R} \mathbf{V}_{R}^{T}
\end{equation}
This is in turn a singular value decomposition of $\mathbf{C}$ because
$\mathbf{Q} \mathbf{U}_{R}$ is unitary as the product of unitary matrices. To avail ourselves of this result, we can split $\mathbf{C} =\left[\begin{array}{c}
\mathbf{C}_{1}\\
\mathbf{C}_{2}
\end{array}\right]$, followed by $\mathbf{C}_{i}=\mathbf{Q}_{i}\mathbf{R}_{i} \mkern4mu (i=1,2)$ and, thus, we obtain
\begin{equation}
\mathbf{C}=\left[\begin{array}{cc}
\mathbf{Q}_{1} & 0\\
0 & \mathbf{Q}_{2}
\end{array}\right]\left[\begin{array}{c}
\mathbf{R}_{1}\\
\mathbf{R}_{2}
\end{array}\right]
\end{equation}
This is not yet a QR factorization of $\mathbf{C}$ as the right factor
is not upper triangular. We then perform another thin QR factorization
on $\left[\begin{array}{c}
\mathbf{R}_{1}\\
\mathbf{R}_{2}
\end{array}\right]$ and obtain
\vspace{-0.3cm}
\begin{equation}
\mathbf{C}=\left[\begin{array}{cc}
\mathbf{Q}_{1} & 0\\
0 & \mathbf{Q}_{2}
\end{array}\right] \mathbf{Q} \mathbf{R}
\label{eq:QR_final}
\end{equation}
which is a QR factorization of $\mathbf{C}$. Based on that, we can partition the original computation domain into smaller sub-domains, compute the $\mathbf{R}_i$ matrices for each of them, combine these to produce the final $\mathbf{R}$ matrix as in Eq. \ref{eq:QR_final} and, finally, run SVD on this $n \times n$ matrix. This procedure is readily parallelizable.

For the purposes of our computation described in the section \ref{nonintegrable_ABC}, we used $10$ cores on the Euler cluster of ETH Zurich and within each core we chose $5$ more partitions. This resulted in approximately the same memory footprint as for the solution of the eigenvalue problem. If the memory is not enough, the number of partitions can be increased with a subsequent increase in the computational time.
	\section{A different solution to the optimization problem}
\label{app:inhomoSol}
Let's assume that we start by fixing the Fourier coefficient
of particular modes. To avoid inducing unnecessary inhomogeneity to
our solution we are going to set $\hat{H}_{\mathbf{k}}=1$ for $\mathbf{k}=(1,0,0)$,
$\mathbf{k}=(0,1,0)$, $\mathbf{k}=(0,0,1)$. Similarly we impose
$\hat{H}_{\mathbf{-k}}=\hat{H}_{\mathbf{k}}^{*}=1$. We further denote
by
\begin{equation}
\begin{aligned}
\mathcal{K}= \{ 
\mathbf{k} |\mathbf{k}\in \mathbb{Z}^{3}\land |\mathbf{k}| \leq N \land & \mathbf{k}\neq(1,0,0)\land\mathbf{k}\neq(-1,0,0)\land\mathbf{k}\neq(0,1,0)\land \\         
& \mathbf{k}\neq(0,-1,0)\land\mathbf{k}\neq(0,0,1)\land\mathbf{k}\neq(0,0,-1) 
\}
\end{aligned}
\end{equation}
the set of the remaining Fourier modes. Then, the invariance
condition reads
\begin{equation}
\mathbf{C}\boldsymbol{\mathbf{h}}=\mathbf{b}
\end{equation}
where $C_{ij}=e^{i\mathbf{k}_{j}\cdot\mathbf{x}_{i}}\mathbf{k}_{j}\cdot\mathbf{v}_{i}$,
$\boldsymbol{\mathbf{h}}=\left\{ \hat{H}_{\mathbf{k}}\left|\mathbf{k}\in\right.\mathbb{Z}^{3}\land\mathbf{k}\in\mathcal{K}\right\} $
and $\mathbf{b}=\left[\begin{array}{cccccc}
b_{1} & b_{2} & \cdots & b_{l} & \cdots & b_{m}\end{array}\right]^{T}$ with
\begin{equation}
b_{l}=-2i\left(\sin x\mathbf{v}_{x}+\sin y\mathbf{v}_{y}+\sin z\mathbf{v}_{z}\right).
\end{equation}
Essentially, what we have attained here is to transform the homogeneous system of equations to an inhomogeneous one for which we can now use ordinary least squares or ridge regression. The ridge-regression solution is given as
\begin{equation}
\hat{\mathbf{h}} = \mathbf{V} \left( \boldsymbol{\Sigma}^{\top} \boldsymbol{\Sigma} + \lambda \mathbf{I}\right)^{-1} \boldsymbol{\Sigma}^{\top} \mathbf{U}^{*} \mathbf{b},
\end{equation}
whereas for $\lambda = 0$ we obtain the ordinary least-squares solution. This expression shows that the trick we employed in the homogeneous case with the QR decomposition cannot be applied here given that the solution depends on the explicit construction of $\mathbf{U}$. We are, therefore, limited with regard to the maximum size of the computational grid we can use before we run into memory issues.

Nonetheless, in Fig. \ref{fig:closest_ABC_regLS_vs_ridge} we provide the two solutions for the nonintegrable ABC flow using $60$ points per direction and the same number of Fourier modes as before. For the ridge regression we used cross-validation and kept the solution with the smallest least-squares error out of the solutions generated for $\lambda \in \{1,10^{-1},10^{-2},10^{-3},10^{-4}\}$. The results appear to be inferior to the homogeneous-system-based solutions suggesting that this approach is recommended only for numerical datasets defined over smaller grids.
\begin{figure}
	\centering
	\includegraphics[]{./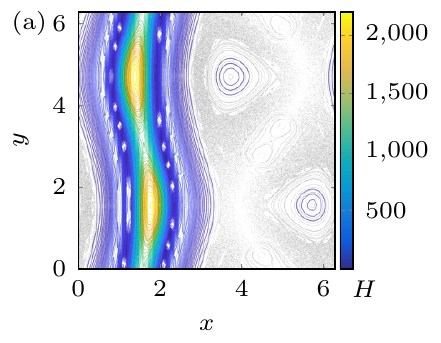}
	\includegraphics[]{./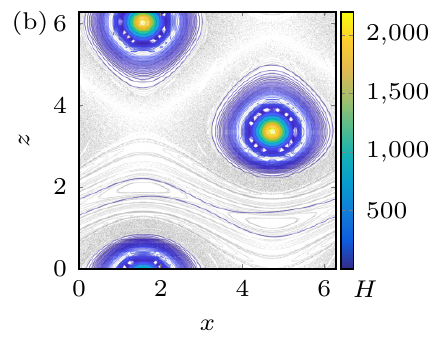}
	\includegraphics[]{./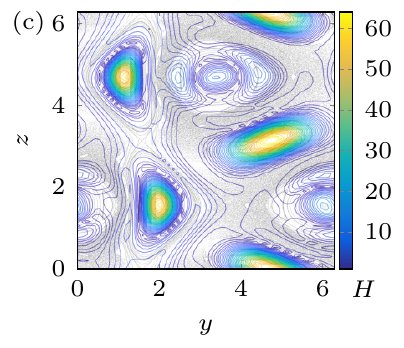}\\
	\includegraphics[]{./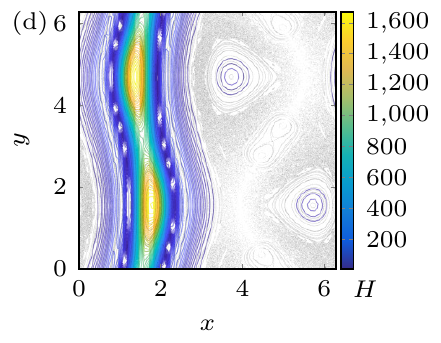}
	\includegraphics[]{./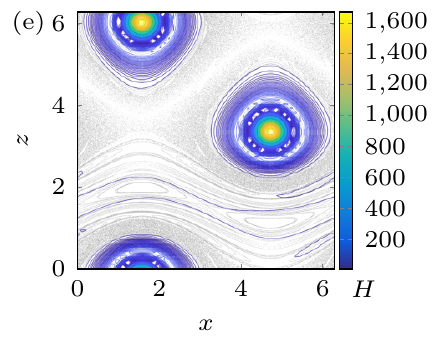}
	\includegraphics[]{./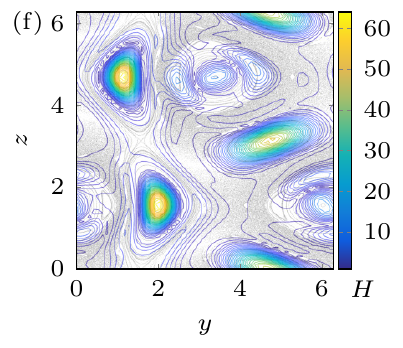}
	
	\vspace{-0.3cm}
	\caption{Analysis of the nonintegrable ABC flow using a computational grid of $60^3$ points and $9\,170$ Fourier modes. Level sets of the reconstructed first integral at $z=0$ (a and d), $y=0$ (b and e) and $x=0$ (c and f). The first row is constructed using ordinary least squares, whereas the second row is generated using ridge regression. The Poincaré map is overlaid on each section based on a uniform grid of $20 \times 20$ initial conditions.}
	\label{fig:closest_ABC_regLS_vs_ridge}
\end{figure}

\bibliographystyle{jfm}
\bibliography{biblio}

\end{document}